\theoremstyle{plain}
\newtheorem{obs}{Observation}[section]
\newtheorem{lem}{Lemma}[section]
\newtheorem{prop}{Proposition}[section]
\DeclareMathOperator{\si}{\sf sign}
\DeclareMathOperator{\red}{\sf reduction}
\DeclareMathOperator{\con}{\sf convex}
\DeclareMathOperator{\reflex}{\sf reflex}
\DeclareMathOperator{\first}{\sf first}
\DeclareMathOperator{\last}{\sf last}
\DeclareMathOperator{\nil}{\sf nil}
\DeclareMathOperator{\og}{\sf og}
\DeclareMathOperator{\ig}{\sf ig}
\DeclareMathOperator{\how}{\sf how}
\DeclareMathOperator{\bij}{\sf bij}
\newcommand{\OO}{{\mathcal O}}
\newcommand{\Yes}{{\sc Yes}}
\newcommand{\No}{{\sc No}}
\newcommand{\FPT}{\textrm{\textup{FPT}}\xspace}
\newcommand{\WO}{\textrm{\textup{W[1]}}}
\newcommand{\WOH}{\textrm{\textup{W[1]-hard}}\xspace}
\newcommand{\NP}{\textrm{\textup{NP}}\xspace}
\newcommand{\ETH}{\textrm{\textup{ETH}}\xspace}
\newcommand{\NPH}{\textrm{\textup{NP-hard}}\xspace}
\newcommand{\APXH}{\textrm{\textup{APX-hard}}\xspace}
\newcommand{\Essn}{{\sf Ess}}
\newcommand{\asg}{\alpha}
\newcommand{\SC}[1]{\mathscr{#1}} 
\newcommand{\C}[1]{\mathcal{#1}} 
\newcommand{\what}{\widehat} 
\newcommand{\rtimedcsp}{\C{O}((|X| + |C|) \cdot N)}
\newcommand{\rtimetsat}{\C{O}(n+m)}
\newcommand{\myparagraph}[1]{\smallskip\noindent{\textbf{\sffamily #1}}}
\title{The Parameterized Complexity of Guarding Almost Convex Polygons\footnote{A preliminary version of this article is accepted for publication at the 36th International Symposium on Computational Geometry (SoCG 2020).}}
\titlerunning{The Parameterized Complexity of Guarding Almost Convex Polygons}
\author{Akanksha Agrawal}{Ben-Gurion University, Beersheba, Israel}{agrawal@post.bgu.ac.il}{https://orcid.org/0000-0002-0656-7572}{the PBC Fellowship Program for Outstanding Post-Doctoral Researchers from China and India.}
\author{Kristine V.K. Knudsen}{University of Bergen, Bergen, Norway}{kristine.knudsen@ii.uib.no}{}{}
\author{Daniel Lokshtanov}{University of California, Santa Barbara, USA}{daniello@ucsb.edu}{}{European Research Council (ERC) under the European Union's Horizon 2020
research and innovation programme (no. 715744), and United States - Israel Binational
Science Foundation (no. 2018302).}
\author{Saket Saurabh}{The Institute of Mathematical Sciences, HBNI, Chennai, India}{saket@imsc.res.in}{}{European Research Council (ERC) under the European Union's Horizon 2020 research and innovation programme (no. 819416), and Swarnajayanti Fellowship (no. DST/SJF/MSA01/2017-18).}
\author{Meirav Zehavi}{Ben-Gurion University, Beersheba, Israel}{meiravze@bgu.ac.il}{https://orcid.org/0000-0002-3636-5322}{Israel Science Foundation grant no. 1176/18, and United States - Israel Binational
Science Foundation (no. 2018302).}
\authorrunning{A. Agrawal, K. Knudsen, D. Lokshtanov, S. Saurabh, M. Zehavi}
\keywords{Art Gallery, Reflex vertices, Monotone 2-CSP, Parameterized Complexity, Fixed Parameter Tractability}
\begin{document}

\maketitle
\begin{abstract}
The {\sc Art Gallery} problem is a fundamental visibility problem in Computational Geometry. The input consists of a simple polygon $P$, (possibly infinite) sets $G$ and $C$ of points within $P$, and an integer $k$; the task is to decide if at most $k$ guards can be placed on points in $G$ so that every point in $C$ is visible to at least one guard. In the classic formulation of {\sc Art Gallery}, $G$ and $C$ consist of all the points within $P$. Other well-known variants restrict $G$ and $C$ to consist either of all the points on the boundary of $P$ or of all the vertices of~$P$. Recently, three new important discoveries were made: the above mentioned variants of {\sc Art Gallery} are all \WOH\ with respect to $k$ [Bonnet and Miltzow, ESA'16], the classic variant has an $\OO(\log k)$-approximation algorithm [Bonnet and Miltzow, SoCG'17], and it may require irrational guards [Abrahamsen et al., SoCG'17]. Building upon the third result, the classic variant and the case where $G$ consists only of all the points on the boundary of $P$ were both shown to be $\exists\mathbb{R}$-complete~[Abrahamsen et al., STOC'18]. Even when both $G$ and $C$ consist only of all the points on the boundary of $P$, the problem is not known to be in \NP. 

Given the first discovery, the following question was posed by Giannopoulos [Lorentz Center Workshop, 2016]: Is {\sc Art Gallery} \FPT\ with respect to $r$, the number of reflex vertices? In light of the developments above, we focus on the variant where $G$ and $C$ consist of all the vertices of $P$, called {\sc Vertex-Vertex Art Gallery}. Apart from being a variant of {\sc Art Gallery}, this case can also be viewed as  the classic {\sc Dominating Set} problem in the visibility graph of a polygon. In this article, we show that the answer to the question by Giannopoulos is {\em positive}: {\sc Vertex-Vertex Art Gallery} is solvable in time $r^{\OO(r^2)}n^{\OO(1)}$. Furthermore,  our approach extends to assert that {\sc Vertex-Boundary Art Gallery} and {\sc Boundary-Vertex Art Gallery} are both \FPT\ as well. To this end, we utilize structural properties of ``almost convex polygons'' to present a two-stage reduction from {\sc Vertex-Vertex Art Gallery} to a new constraint satisfaction problem (whose solution is also provided in this paper)  where constraints have arity $2$ and involve monotone functions.
\end{abstract}

\section{Introduction}\label{sec:intro}

Given a {\em simple} polygon $P$ on $n$ vertices, two points $x$ and $y$ within $P$ are {\em visible} to each other if the line segment between $x$ and $y$ is contained in $P$. Accordingly, a set $S$ of points within $P$ is said to {\em guard} another set $Q$ of points within $P$ if, for every point  $q\in Q$, there is some point $s\in S$ such that $q$ and $s$ are visible to each other. The computational problem that arises from this notion is loosely termed the {\sc Art Gallery} problem.  In its general formulation, the input consists of a simple polygon $P$, possibly infinite sets $G$ and $C$ of points within $P$, and a non-negative integer $k$. The task is to decide whether at most $k$ guards can be placed on points in $G$ so that every point in $C$ is visible to at least one guard. The most well-known cases of {\sc Art Gallery} are identified as follows: the {\sc X-Y Art Gallery} problem is the {\sc Art Gallery} problem where $G$ is the set of all points within $P$ (if {\sc X}={\sc Point}), all boundary points of $P$ (if {\sc X}={\sc Boundary}), or all vertices of $P$ (if {\sc X}={\sc Vertex}), and $C$ is defined analogously with respect to {\sc Y}. The classic variant of {\sc Art Gallery} is the {\sc Point-Point Art Gallery} problem. Nevertheless, all variants where {\sc X}={\sc Vertex} or {\sc Y}={\sc Point} received attention in the literature.\footnote{The {\sc X-Y Art Gallery} problem, for any {\sc X,Y} $\in\{${\sc Point, Boundary, Vertex}$\}$, is often loosely termed the {\sc Art Gallery} problem. For example, in the survey of open problems by Ghosh and Goswami~\cite{DBLP:journals/csur/GhoshG13}, the term {\sc Art Gallery} problem refers to the {\sc Vertex-Vertex Art Gallery} problem.} In particular, {\sc Vertex-Vertex Art Gallery} is equivalent to the classic {\sc Dominating Set} problem in the visibility graph~of~a~polygon.

The {\sc Art Gallery} problem is a fundamental visibility problem in Discrete and Computational Geometry, which was extensively studied from both combinatorial and algorithmic viewpoints. The problem was first proposed by Victor Klee in 1973, which prompted a flurry of results~\cite[page 1]{ArtGalBook1}. The main combinatorial question posed by Klee was {\em how many guards are sufficient to see every point of the interior of an $n$-vertex simple polygon?} Chv\'{a}tal~\cite{Chvatal75} showed in 1975 that  $\lfloor \frac{n}{3}\rfloor$ guards are always sufficient and sometimes necessary for any $n$-vertex simple polygon (see~\cite{Fisk78a} for a simpler proof by Fisk). After this, many variants of the {\sc Art Gallery} problem, based on different definitions of visibility, restricted classes of polygons, different shapes of guards, and mobility of guards, have been defined and analyzed.
A book~\cite{ArtGalBook1} and several extensive surveys and book chapters were dedicated to {\sc Art Gallery} and its variants (see, e.g.,~\cite{RezendeSFHKT16,shermer1992recent,urrutia2000art}). In this article, our main proof states that the {\sc Vertex-Vertex Art Gallery} problem is {\em fixed-parameter tractable (\FPT)} parameterized by $r$, the number of reflex vertices of $P$.  Additionally, we show that both {\sc Vertex-Boundary Art Gallery} and {\sc Boundary-Vertex Art Gallery}~are~\FPT\ with respect to the number of reflex vertices~as~well.

\subparagraph{1.1. Background: Related Algorithmic Works.}
In what follows, we focus only on algorithmic works on {\sc X-Y Art Gallery} for {\sc X,Y}$\in\{${\sc Point,Boundary,Vertex}$\}$. 

\medskip
\noindent{\bf Hardness.} In 1983, O'Rourke and Supowit~\cite{DBLP:journals/tit/ORourkeS83} proved that {\sc Point-Point Art Gallery} is \NPH\ if the polygon can contain holes. The requirement to allow holes was lifted shortly afterwards~\cite{phdthesisAgrawal}. 
In 1986, Lee and Lin~\cite{DBLP:journals/tit/LeeL86} showed that {\sc Vertex-Point Art Gallery} is \NPH. This result extends to  {\sc Vertex-Vertex Art Gallery} and {\sc Vertex-Boundary Art Gallery}. Later, numerous other restricted cases were shown to be \NPH\ as well. For example, \NP-hardness was established for orthogonal polygons by Katz and Roisman \cite{TerrainApproxOrtho} and Schuchardt and Hecker \cite{OrthoPolygonNPhard}. We remark that the reductions that show that {\sc X-Y Art Gallery} (for {\sc X,Y} $\in\{${\sc Point, Boundary, Vertex}$\}$) is \NPH\ also imply that these cases cannot be solved in time $2^{o(n)}$ under the Exponential-Time Hypothesis (\ETH).

While it has long been known that even very restricted cases of {\sc Art Gallery} are \NPH, the inclusion of {\sc X-Y Art Gallery}, for {\sc X,Y} $\in\{${\sc Point, Boundary}$\}$, in \NP\ remained open. (When {\sc X}={\sc Vertex}, the problem is clearly in \NP.) In 2017, Abrahamsen et al.~\cite{DBLP:conf/compgeom/AbrahamsenAM17} began to reveal the reasons behind this discrepancy for the {\sc Point-Point Art Gallery} problem: they showed that {\em exact} solutions to this problem sometimes require placement of guards on points with {\em irrational} coordinates. Shortly afterwards, they extended this discovery to prove that {\sc Point-Point Art Gallery} and {\sc Boundary-Point Art Gallery} are $\exists\mathbb{R}$-complete~\cite{DBLP:conf/stoc/AbrahamsenAM18}. Roughly speaking, this result means that
{\em (i)} any system of polynomial equations over the real numbers can be encoded as an instance of {\sc Point/Boundary-Point Art Gallery}, and {\em (ii)} these problems are not in the complexity class \NP\ unless \NP\ = $\exists\mathbb{R}$.

\medskip
\noindent{\bf Approximation Algorithms.} The {\sc Art Gallery} problem has been extensively studied from the viewpoint of approximation algorithms~\cite{EfratH06,DeshpandeKDS07,Ghosh10,KING2013219,DBLP:journals/dcg/KingK11,KrohnN13,Kirkpatrick15,BonnetM17,Bhattacharya2017ApproximabilityOG,DBLP:journals/corr/abs-1712-05492,DBLP:journals/corr/abs-1803-02160} (this list is not comprehensive). Most of these approximation algorithms are based on the fact that the range space defined by the visibility regions has bounded VC-dimension for simple polygons~\cite{GilbersK14,kalai1997guarding,valtr1998guarding}, which facilitates the usage of the algorithmic ideas of Clarkson~\cite{BronnimannG95,Clarkson93}. The current state-of-the-art is as follows. For the {\sc Boundary-Point Art Gallery} problem, King and Kirkpatrick~\cite{DBLP:journals/dcg/KingK11} gave a factor $\OO(\log\log {\sf OPT})$ approximation algorithm. For the {\sc Point-Point Art Gallery} problem, Bonnet and Miltzow~\cite{BonnetM17} gave a factor $\OO(\log {\sf OPT})$ approximation algorithm. 
Very recently, in a yet unpublished work, Bhattacharya et al.~\cite{DBLP:journals/corr/abs-1712-05492} reported a breakthrough: they designed an 18-approximation algorithm for {\sc Vertex-Vertex Art Gallery}, a (slightly slower) 18-approximation algorithm for {\sc Vertex-Boundary Art Gallery}, and a 27-approximation algorithm for {\sc Vertex-Point Art Gallery}. For all of these three variants, the existence of a constant-factor approximation algorithm has been a longstanding open problem, conjectured to be true already in 1987 by Ghosh~\cite{gosh87,Ghosh10,DBLP:journals/csur/GhoshG13}.
The existence of a constant-factor approximation algorithm for {\sc Point-Point Art Gallery} (or even {\sc Boundary-Boundary Art Gallery} or {\sc Boundary-Point Art Gallery}) remains a major open problem. On the negative side, all of these variants are known to be \APXH~\cite{DBLP:conf/cccg/EidenbenzSW98,DBLP:journals/algorithmica/EidenbenzSW01}. However,  restricted classes of polygons, such as weakly-visible polygons~\cite{DBLP:journals/corr/abs-1803-02160}, give rise to a PTAS.

\medskip
\noindent{\bf Exact Algorithms.} For an $n$-vertex polygon $P$, one can efficiently find a set of $\lfloor \frac{n}{3}\rfloor$ vertices that guard all points within $P$, matching  Chv\'{a}tal's upper bound~\cite{Chvatal75}. Specifically, Avis and Toussaint~\cite{DBLP:journals/pr/AvisT81} presented an $\OO(n \log n)$-time divide-and-conquer algorithm for this task. Later, Kooshesh and Moret~\cite{DBLP:journals/pr/KoosheshM92} gave a linear-time algorithm based on Fisk's short proof~\cite{Fisk78a}.
However, when we seek an optimal solution, the situation is much more complicated. The first exact algorithm for {\sc Point-Point Art Gallery} was published in 2002 in the conference version of a paper by Efrat and Har-Peled~\cite{EfratH06}. They attribute the result to Micha Sharir. Before that time, the problem was not even known to be decidable. The algorithm computes a formula in the first order theory of the reals corresponding to the art gallery instance (with both existential and universal quantifiers), and employs algebraic methods such as the techniques provided by Basu et al.~\cite{DBLP:journals/jacm/BasuPR96}, to decide if the formula is true. Given that {\sc Point-Point Art Gallery} is $\exists\mathbb{R}$-complete~\cite{DBLP:conf/stoc/AbrahamsenAM18}, it might not be possible to avoid the use of this powerful machinery. However, even for the cases where {\sc X}={\sc Vertex}, the situation is quite grim; we are not aware of {\em exact} algorithms that achieve substantially better time complexity bounds than brute-force.
Nevertheless, over the years, exact algorithms that perform well in practice 
were developed. For example, see \cite{DBLP:conf/compgeom/BorrmannRSFFKNST13,RezendeSFHKT16,DBLP:journals/itor/CoutoRS11}.

\medskip
\noindent{\bf Parameterized Complexity.} 
Two years ago, Bonnet and Miltzow~\cite{ArtGalW1Hard} showed that {\sc Vertex-Point Art Gallery} and {\sc Point-Point Art Gallery} are \WOH\ with respect to the {\em solution size}, $k$. 
 With straightforward adaptations, their results extend to most of the known variants of the problem, including {\sc Vertex-Vertex Art Gallery}. Thus, {\em the classic parameterization by solution size leads to a dead-end}. However, this does not rule out the existence of \FPT\ algorithms for non-trivial structural parametrizations. We refer to the nice surveys by  Niedermeier on the art of parameterizations~\cite{Niedermeier04,Niedermeier10}.

\subparagraph{1.2. Giannopoulos's Parameterization and Our Contribution.}\label{sec:ourContribution}
In light of the \WO-hardness result by Bonnet and Miltzow~\cite{ArtGalW1Hard}, Giannopoulos~\cite{TerrainQ} proposed to parameterize the {\sc Art Gallery} problem by the number $r$ of reflex vertices of the input polygon $P$. Specifically, Giannopoulos~\cite{TerrainQ} posed the following open problem:
 {\em ``Guarding simple polygons has been recently shown to be \WO-hard w.r.t.~the number~of~(vertex or edge) guards. Is the problem \FPT\ w.r.t.~the number of reflex vertices of the polygon?''}
The motivation behind this proposal is encapsulated by the following well-known proposition, see~\cite[Sections 2.5-2.6]{ArtGalBook1}.

\begin{prop}[Folklore]\label{prop:reflex}
For any polygon $P$, the set of reflex vertices of $P$ guards the set of all points within $P$.
\end{prop}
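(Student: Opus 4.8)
The plan is to establish the following local statement, from which the proposition is immediate: if $P$ has at least one reflex vertex, then every point in the interior of $P$ is visible from some reflex vertex of $P$. (The case $r=0$, i.e. $P$ convex, is trivial and we set it aside, so we assume $r\ge 1$ throughout.) Granting the local statement, a boundary point $q$ of $P$ is handled by a limiting argument: either $q$ is itself a reflex vertex, and then sees itself, or $q=\lim_i p_i$ for some interior points $p_i$; each $p_i$ sees a reflex vertex $r_i$, and since there are only finitely many reflex vertices some fixed reflex vertex $r$ equals $r_i$ for infinitely many $i$, along which subsequence $\overline{r\,p_i}\subseteq P$, so the limiting segment $\overline{rq}$ lies in the closed set $P$; hence $r$ sees $q$.

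To prove the local statement, fix a point $p$ in the interior of $P$ and consider its visibility polygon $V_p$ through the ``rotating ray'' description: for a direction $\vec d\in S^1$, let $b(\vec d)$ be the first point of $\partial P$ met by the ray from $p$ in direction $\vec d$, so that $V_p$ is the union of the segments $\overline{p\,b(\vec d)}$. If the map $\vec d\mapsto b(\vec d)$ has a discontinuity, then, by the standard structure of visibility polygons, the ray in that direction passes through a reflex vertex of $P$ that is visible from $p$, and we are done. Otherwise $b$ is continuous on $S^1$, and then (again by that structure) $V_p=P$, so $P$ is star-shaped with $p$ in its kernel; consequently $p$ sees every point of $P$, in particular some reflex vertex, since $r\ge 1$. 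In either case the local statement follows.

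The main obstacle is the structural input invoked above, namely that from an interior point the only obstructions to visibility are reflex vertices visible from that point: equivalently, each ``window'' edge of $V_p$ has, as its endpoint nearer to $p$, a reflex vertex of $P$, and $V_p\subsetneq P$ exactly when a window exists. This is a well-known property of visibility polygons of a point in a simple polygon (see \cite[Sections~2.5--2.6]{ArtGalBook1}); the cleanest verification is a local case analysis at a vertex $v$ through which the rotating ray sweeps — past a convex vertex the hit point $b(\vec d)$ passes continuously from one edge incident to $v$ to the other, whereas past a reflex vertex visible from $p$ whose incident edges recede behind it, $b(\vec d)$ jumps from $v$ to a strictly farther point of $\partial P$ on the same ray, which is the window. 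A triangulation-based alternative is possible — $p$ lies in some triangle of a triangulation of $P$ all of whose vertices it sees, and if none of these is reflex one crosses the triangulation diagonal bounding that triangle on the side away from $p$ and repeats — but there the obstacle is proving that the walk terminates at a reflex vertex rather than cycling, which is essentially the same fact in a different guise. Everything else (the reduction to $r\ge 1$ and the limiting argument on $\partial P$) is routine.
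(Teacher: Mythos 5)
The paper never actually proves Proposition~\ref{prop:reflex}: it is stated as folklore and delegated to \cite[Sections 2.5--2.6]{ArtGalBook1}, so there is no in-paper argument to compare yours against; you should judge your write-up as a standalone proof. On that footing, your rotating-ray argument is sound. The first-hit map $\vec d\mapsto b(\vec d)$ is injective (distinct rays from an interior point meet only at $p$), its jumps can only be outward (the first-hit distance is lower semicontinuous), and if $b$ is continuous then its image, being a continuous injective image of $S^1$ inside the Jordan curve $\partial P$, is all of $\partial P$; this rules out any ray leaving and re-entering $P$, so $p$ sees every point of $P$ and in particular a reflex vertex. The limiting argument for boundary points is also fine, since $P$ is closed and there are finitely many reflex vertices.

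Two caveats. First, your identification of $V_p$ with the union of the segments $\overline{p\,b(\vec d)}$ is not literally the visibility polygon---a ray that grazes a reflex vertex sees past its first boundary hit---but nothing in your argument depends on this, because in exactly that situation you are already done. Second, and more substantively, the local case analysis you sketch for ``a discontinuity of $b$ yields a reflex vertex visible from $p$'' only treats jumps occurring as the ray sweeps past a vertex with angle strictly less or strictly greater than $180^\circ$; in this paper's model vertices with interior angle exactly $180^\circ$ are allowed, and the ray may run collinearly along one or more boundary edges, so a jump can also occur with $b(\vec d_0)$ in the relative interior of an edge. The repair is routine: take a one-sided limit point $z^+$ of $b$ at the jump direction, note that $\overline{p\,z^+}\subseteq P$ contains a boundary point in its relative interior, and check locally that any such point is either a reflex vertex (done) or has the boundary collinear with the ray there, in which case the near endpoint of the maximal collinear boundary chain is forced to be reflex. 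Finally, for $r=0$ the statement is vacuous as literally phrased; the paper compensates by invoking Observation~\ref{obs:convexPolygon} alongside Proposition~\ref{prop:reflex} wherever needed, so setting that case aside is consistent with how the proposition is used.
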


\begin{figure}[t]
\centering
\fbox{\includegraphics[scale=0.8]{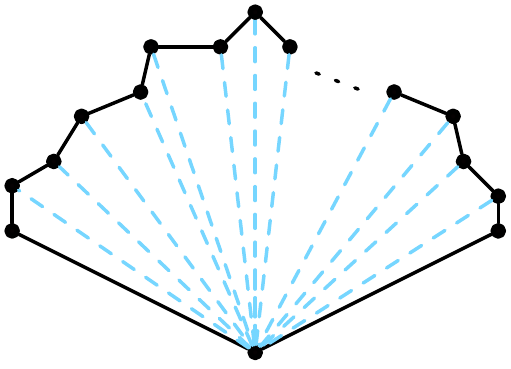}}
\caption{The solution size $k=1$, yet the number of reflex vertices $r$ is arbitrarily large.}\label{fig:largeR}
\end{figure}

That is, the minimum number $k$ of guards needed (for any of the cases of {\sc Art Gallery}) is upper bounded by the number of reflex vertices $r$.  Clearly, $k$ can be arbitrarily smaller than $r$ (see Fig.~\ref{fig:largeR}). 
Our main result is that the  {\sc Vertex-Vertex Art Gallery} problem is \FPT  parameterized by $r$. This implies that guarding the vertex set of ``almost convex polygons'' is easy. In particular, whenever $r^2\log r=\OO(\log n)$, the problem is solvable~in~polynomial~time.

\begin{theorem}\label{mainthm:agpreflex}
{\sc Vertex-Vertex Art Gallery} is \FPT\ parameterized by $r$, the number of reflex vertices. In particular, it admits an algorithm with running time $r^{\OO(r^2)}n^{\OO(1)}$.
\end{theorem}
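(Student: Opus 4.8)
The plan is to exploit the fact that in an ``almost convex'' polygon the reflex vertices split the boundary into only a few convex chains, and to reduce the covering problem -- after a bounded amount of guessing -- to a monotone binary CSP that can then be solved in polynomial time. Concretely, reading $\partial P$ cyclically, the $r$ reflex vertices partition it into at most $r$ maximal \emph{convex chains} $A_1,\dots,A_t$ with $t\le r$, each chain being a maximal run of consecutive convex vertices and consecutive chains being separated by a reflex vertex. By Proposition~\ref{prop:reflex} the set of reflex vertices already guards all of $P$, so any inclusion-minimal solution uses at most $r$ guards; I would assume this from now on.

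\emph{Guessing a combinatorial skeleton.} First I would record the $2r$ ``window chords'' obtained by extending, at each reflex vertex, its two incident edges into $P$ until they hit $\partial P$; their arrangement subdivides $P$ into $\OO(r^2)$ cells, and two vertices lying in the same cell are indistinguishable with respect to visibility. Then I would enumerate all choices of: (i) the subset $R'$ of reflex vertices used as guards; (ii) the number $m_i\ge 0$ of guards placed on each chain $A_i$, subject to $|R'|+\sum_i m_i\le r$; and (iii) for each of the $\OO(r^2)$ cells, which of the at most $r$ convex-chain guard ``slots'' $(i,\ell)$ is responsible for covering the still-uncovered vertices in that cell, and through which visibility window it must see them. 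There are $r^{\OO(r^2)}$ such skeletons; for each, a polynomial-time subroutine is run, and the instance is a \Yes-instance iff some skeleton succeeds.

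\emph{Reduction to a monotone $2$-CSP.} Fix a skeleton. Since placing the guards of $R'$ is concrete, compute the set $U$ of vertices not yet visible. For each convex-chain slot $(i,\ell)$ introduce a variable $x_{i,\ell}$ encoding the position of that guard along $A_i$, ranging over a totally ordered domain of size at most $n$. The geometric core is a monotonicity lemma to be proved from the window-chord structure: for a fixed vertex $v$ and a fixed window, the set of positions on $A_i$ from which $v$ is visible through that window is an \emph{interval} of $A_i$, and its two endpoints are \emph{monotone} functions of the position of $v$ along the chain containing $v$. Granting this, the requirement ``slot $(i,\ell)$ sees every $U$-vertex of every cell assigned to it'' becomes a conjunction of constraints $\alpha\le x_{i,\ell}\le\beta$ with $\alpha,\beta$ coming from such endpoint functions; adding the ordering constraints $x_{i,1}\le\cdots\le x_{i,m_i}$ (forcing the slots of each chain to occur in the guessed order) and the cross-chain consistency constraints linking slots whose windows interact, every constraint has arity $2$ and involves monotone functions. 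Running the monotone $2$-CSP algorithm of this paper -- with $|X|=\OO(r)$, domain size $N\le n$, and $|C|=n^{\OO(1)}$ constraints, hence time $\rtimedcsp$ -- decides the skeleton in polynomial time, for a total running time of $r^{\OO(r^2)}n^{\OO(1)}$.

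\emph{Expected main obstacle, and extensions.} The hard part is the second reduction: showing that $\OO(r^2\log r)$ bits of guessing genuinely suffice to eliminate every disjunction of the form ``$v$ is seen by \emph{some} guard'', so that the surviving coverage conditions are all binary and monotone. This is precisely where the interval/monotonicity structure of point-to-convex-chain visibility must be combined with the $\OO(r^2)$ bound on the number of visibility cells, and where the ``two-stage'' nature of the reduction seems unavoidable. Once Theorem~\ref{mainthm:agpreflex} is in place, {\sc Vertex-Boundary Art Gallery} and {\sc Boundary-Vertex Art Gallery} should follow with the same machinery: for {\sc Vertex-Boundary}, a convex boundary chain is fully seen from a point iff $\OO(r)$ designated breakpoints on it are, so it suffices to add those breakpoints to $U$; for {\sc Boundary-Vertex}, allowing guards anywhere on a chain merely enlarges each variable's domain while preserving the monotonicity of the endpoint functions.
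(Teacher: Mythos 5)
Your high-level strategy (convex chains between reflex vertices, $r^{\OO(r^2)}$ guesses, then a monotone 2-CSP solved in polynomial time) matches the paper, but the step you build the guessing on is false, and it hides exactly the part of the argument that the paper has to work hardest for. The claim that the arrangement of the $2r$ window chords yields $\OO(r^2)$ cells in which ``two vertices lying in the same cell are indistinguishable with respect to visibility'' does not hold: the critical lines across which visibility from a guard $g$ changes are lines through $g$ and a reflex vertex, and these move with $g$; they are not among the $2r$ fixed chord extensions. (Already with one reflex vertex, the shadow line through the guard and that vertex sweeps over a whole chain as the guard moves, separating vertices that lie in the same chord-arrangement cell.) Consequently your skeleton step (iii) is not complete: in a genuine solution, the portion of a chain seen by one guard and the portion seen by the next guard meet at a breakpoint that depends on the guards' exact (unguessed) positions and can fall strictly inside one of your cells, so no skeleton that assigns each cell to a single responsible slot succeeds. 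Symptomatically, if the cells and the set $U$ really were fixed by the skeleton, your coverage conditions would collapse to unary constraints $\alpha\le x_{i,\ell}\le\beta$ with constant $\alpha,\beta$, and no binary CSP would be needed at all --- a sign that the disjunction ``$v$ is seen by some guard'' has not actually been eliminated.

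The paper's resolution, which is what is missing from your proposal (and which you yourself flag as the ``expected main obstacle''), is to guess much less: per convex region only the ordered sequence of responsible guard slots (the functions $\how_C$ in Section~\ref{sec:structured}), leaving the breakpoints of the coverage partition unguessed. The coverage requirement is then encoded by \emph{binary} constraints between consecutive responsible guards --- Condition~\ref{condition:mid}, saying the last vertex seen by the $t$-th guard reaches at least one less than the first vertex seen by the $(t+1)$-th and is strictly below the last vertex seen by the $(t+1)$-th --- and the whole of Section~\ref{sec:reduction} is devoted to showing these can be written as $[x\,\si\,f(x')]$ with $f$ monotone, using the no-gap property (Lemma~\ref{lem:seeAllBetFin}) and the monotonicity of $\first$/$\last$ along chains (Lemma~\ref{lem:mainStructF}), with a delicate eight-case analysis depending on the (unknown in advance) orientation of each view. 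Your ``monotonicity lemma'' is essentially the right structural ingredient, but without replacing the cell-based guessing by this sequence-of-guards guessing and the accompanying binary monotone encoding, the proof does not go through; also note the slots on a chain must be ordered strictly (the paper uses $f(q)=q+1$), not merely $x_{i,1}\le\cdots\le x_{i,m_i}$.
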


A few remarks are in place. First, our result extends (with straightforward adaptation) to the most general discrete annotated case of {\sc Art Gallery} where $G$ and $C$ are each a subset of the vertex set of the polygon, which can include points where the interior angle is of 180 degrees. Consequently, a simple discretization procedure shows that {\sc Vertex-Boundary Art Gallery} and {\sc Boundary-Vertex Art Gallery} are both \FPT\ parameterized by $r$ as well. However, we do not know how to handle {\sc Vertex-Point Art Gallery} and {\sc Point-Vertex Art Gallery}; determining whether these variants are \FPT\ with respect to $r$ remains open.
 Second, for variants where  both {\sc X} $\neq$ {\sc Vertex} and {\sc Y} $\neq$ {\sc Vertex}, the design of {\em exact} algorithms poses extremely difficult challenges. As discussed earlier, these cases are not even known to be in \NP; in particular, {\sc Point-Point Art gallery} is $\exists\mathbb{R}$-hard~\cite{DBLP:conf/stoc/AbrahamsenAM18}. Moreover, there is only one known exact algorithm that resolves these cases and it employs extremely powerful machinery (as a black box), not known to be avoidable. Third, note that our result is among very few {\em positive} results that concern {\em optimal} solutions to (any case of) {\sc Art Gallery}.

Along the way to establish our main result, we prove that a constraint satisfaction problem called {\sc Monotone 2-CSP} is solvable in polynomial time. This result might be of independent interest. Informally, in {\sc Monotone 2-CSP}, we are given $k$ variables and $m$ constraints. Each constraint is of the form $[x \si f(x')]$ where $x$ and $x'$ are variables, $\si\in\{\leq,\geq\}$, and $f$ is a {\em monotone} function. The objective is to assign an integer from $\{0,1,\ldots,N\}$ to each variable so that all of the constraints will be satisfied. For this problem, we develop a surprisingly simple algorithm based on a reduction to {\sc 2-CNF-SAT}.

\begin{theorem}\label{thm:csp}
{\sc Monotone 2-CSP} is solvable in polynomial time.
\end{theorem}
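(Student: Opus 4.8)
The plan is to reduce {\sc Monotone 2-CSP} to {\sc 2-CNF-SAT}, which is solvable in linear time (Aspvall--Plass--Tarjan), so it suffices to describe a polynomial-time many-one reduction in which every produced clause has at most two literals.

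First, I would introduce \emph{threshold variables}: for each CSP variable $x$ and each $t\in\{1,\dots,N\}$, a Boolean variable $b_{x,t}$ whose intended meaning is ``$x\ge t$'', together with the \emph{monotonicity clauses} $\neg b_{x,t}\lor b_{x,t-1}$ for all $t\in\{2,\dots,N\}$. In any assignment satisfying these clauses, the set of thresholds set to true for a given $x$ forms a downward-closed initial segment, so defining $\alpha(x):=\max\{t: b_{x,t}\text{ true}\}$ (and $\alpha(x):=0$ if no $b_{x,t}$ is true) gives a well-defined value in $\{0,\dots,N\}$ with $\alpha(x)\ge t\iff b_{x,t}$ true; conversely, any value assignment induces such a Boolean assignment. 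It is convenient to treat $b_{x,0}$ as the constant $\top$ and $b_{x,N+1}$ as the constant $\bot$, simplifying or deleting clauses accordingly.

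Next I would encode the constraints using the elementary identities $y\le z\iff(\forall t\in\{1,\dots,N\})\,(y\ge t\Rightarrow z\ge t)$ and $y\ge z\iff(\forall t\in\{1,\dots,N\})\,(z\ge t\Rightarrow y\ge t)$, valid for integers $y,z\in\{0,\dots,N\}$. For a constraint $[x\le f(x')]$ this requires, for each $t$, the implication ``$x\ge t$'' $\Rightarrow$ ``$f(x')\ge t$''; the crucial point is that monotonicity of $f$ lets ``$f(x')\ge t$'' be expressed as a \emph{single literal} over a threshold variable of $x'$. If $f$ is non-decreasing then $f(x')\ge t\iff x'\ge\ell_t$, where $\ell_t:=\min\{v: f(v)\ge t\}$ (set $\ell_t:=N+1$ if no such $v$ exists), which yields the binary clause $\neg b_{x,t}\lor b_{x',\ell_t}$; if $f$ is non-increasing then $f(x')\ge t\iff x'\le u_t\iff\neg b_{x',u_t+1}$, where $u_t:=\max\{v: f(v)\ge t\}$ (set $u_t:=-1$ if none), which yields $\neg b_{x,t}\lor\neg b_{x',u_t+1}$. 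Constraints $[x\ge f(x')]$ are handled symmetrically, swapping the roles of the two sides. Every clause produced is binary (and collapses to a unit clause or disappears at the boundary thresholds $0$ and $N+1$), and all the threshold values $\ell_t,u_t$ are precomputable in time polynomial in $N$ from the description of $f$.

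The resulting formula has $O(kN)$ Boolean variables and $O((k+m)N)$ clauses. Since in any reasonable encoding each monotone function is presented as a table of its $N+1$ values, the input has size $\Omega(N)$, so this formula is polynomial in the input; the construction together with the linear-time {\sc 2-CNF-SAT} algorithm then runs in polynomial time, and a CSP solution is read off from a satisfying assignment via the formula for $\alpha$. I expect no serious obstacle: the only point requiring genuine care is the case analysis showing that ``$f(x')\ge t$'' is always a single threshold literal of $x'$ --- this is exactly where monotonicity is used, and it is what keeps all clauses binary --- while the arithmetic translations $y\le z\iff(\forall t)(y\ge t\Rightarrow z\ge t)$ and the bookkeeping of the boundary thresholds are routine.
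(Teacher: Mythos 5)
Your proposal is correct and is essentially the paper's own proof: both reduce {\sc Monotone 2-CSP} to {\sc 2-CNF-SAT} via threshold Boolean variables meaning ``$x\geq t$'', add the order/validity clauses, and use monotonicity of $f$ to express each constraint as a family of binary implications between threshold literals. The only (equivalent) difference is bookkeeping: the paper indexes its clauses by the values $d$ of the argument variable, producing clauses such as $(x_j[d]\to x_i[f(d)])$, whereas you index by thresholds $t$ on the constrained side and precompute the level sets $\ell_t,u_t$ (a generalized inverse of $f$); both yield the same polynomial-size 2-CNF instance and the same four-way case analysis over $\si$ and the direction of monotonicity.
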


Essentially, the main technical component of our work is an exponential-time reduction that creates an exponential (in $r$) number of instances of {\sc Monotone 2-CSP} so that the original instance is a \Yes -instance if and only if at least one of the instances of  {\sc Monotone 2-CSP} is a \Yes -instance.  Our reduction is done in two stages due to its structural complexity. In the first stage of the reduction, we aim to make ``guesses'' that determine the relations between the ``elements'' of the problem (that are the ``critical'' visibility relations in our case) and thereby elucidate and further binarize them (which, in our case, is required to impose order on guards). This part requires exponential time (given that there are exponentially many guesses) and captures the ``NP-hardness'' of the problem. Then, the second stage of the reduction is to translate each guess into an instance of {\sc Monotone 2-CSP}. This part, while requiring polynomial time, relies on a highly non-trivial problem-specific insight---specifically, here we need to assert that the relations considered earlier can be encoded by constraints that are not only binary, but that the functions they involve are {\em monotone}. We strongly believe that our approach can be proven fruitful to resolve the parameterized complexity of other problems of discrete~geometric~flavour.

\subparagraph{1.3 Our Methods.}
The proof of Theorem \ref{mainthm:agpreflex} consists of four components (see Fig.~\ref{fig:methods}). The first component (in Section \ref{sec:structureAG}) establishes several structural claims regarding monotone properties of visibility in polygons.

\begin{figure}[t]
\centering
\fbox{\includegraphics[scale=0.75]{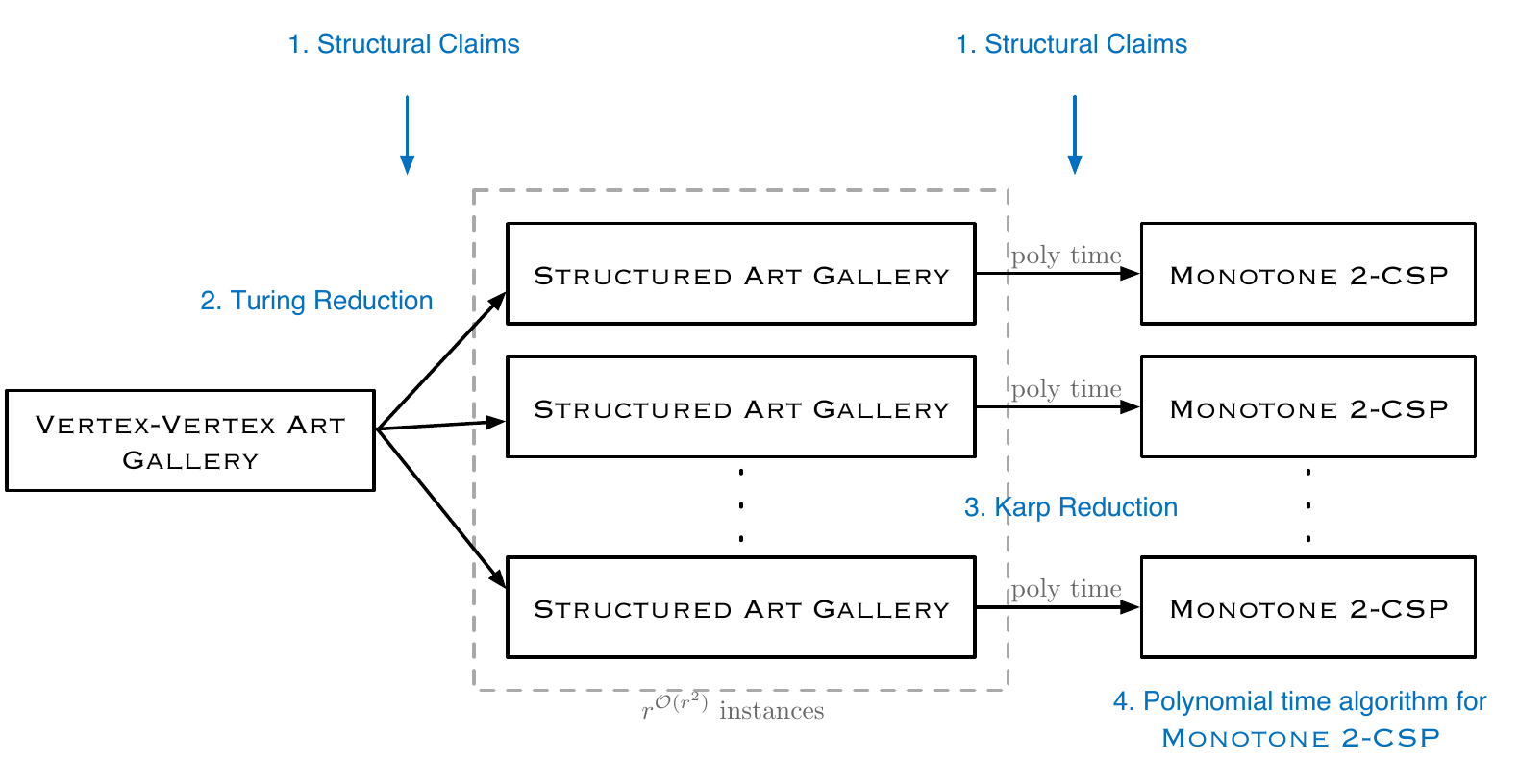}}
\caption{The four components of our proof.}\label{fig:methods}
\end{figure}

%
Informally, we order the vertices of the polygon according to their appearance on the boundary, and consider each sequence between two reflex vertices to be a ``convex region''. Then, we argue that for every pair of convex regions, as we ``move along'' one of them, the (index of the) first vertex in the other region that we see either never becomes smaller or never becomes larger. Symmetrically, this claim also holds for the last visible vertices that we encounter. In addition, we argue that if a vertex sees some two vertices in a convex region, then it also sees all vertices in between these two vertices.

Our second component (in Section \ref{sec:structured}) is a Turing reduction to an intermediate problem that we term {\sc Structured Art Gallery}. Roughly speaking, in this problem, each convex region ``announces'' how many guards it will contain, and how many guards are necessary to see it completely. In addition, it announces that a prefix of the sequence that forms this region will be guarded by, say, ``the $i^{th}$ guard to be placed on region $C$'', then the following subsequence will be guarded by, say, ``the $j^{th}$ guard to be placed on region $C'$'', and so on, until it announces how a suffix of it is to be guarded. We stress that the identity of what is ``the $i^{th}$ guard to be placed on region $C$'', or what is ``the $j^{th}$ guard to be placed on region $C'$'', are of course not known, and should be discovered. Moreover, even the division into subsequences is not known. In {\sc Structured Art Gallery}, we only focus on solutions that are of the above form. We utilize our second component not only to impose these additional conditions, but also to begin the transition from the usage of visibility-based conditions to function-based constraints. Specifically, functions called $\first$ and $\last$ will encode, for any vertex $v$ and convex region $C$, the first and last vertices in $C$ visible to $v$. To argue that such simple functions encode all necessary information concerning visibility, we make use of the structural claims~established~earlier. 

Our third component (in Section \ref{sec:reduction})
 is a Karp reduction from {\sc Structured Art Gallery} to the constraint satisfaction problem, {\sc Monotone 2-CSP}, discussed in Section \ref{sec:ourContribution}. This is the part of the proof that most critically relies on all of the structural claims established earlier. Here, we need to translate the constraints imposed by {\sc Structured Art Gallery} into constraints that comply with the very restricted form of an instance of {\sc Monotone 2-CSP}, namely, being monotone and involving only two variables. We remark that if one removes the requirement of monotonicity, or allows each constraint to consist of more variables, then the problem can be easily shown to encode {\sc Clique} and hence become \WO-hard (see Section \ref{sec:reduction}). The translation entails a non-trivial analysis to ensure that all functions are indeed monotone. Specifically, each convex region requires its own set of tailored functions to enforce some relationships between the (unknown) guards it announced to contain and the (unknown) subsequences that these guards are supposed to see.
In a sense, our first three components extract the algebraic essence of the {\sc Vertex-Vertex Art Gallery} problem: by identifying monotone properties and making guesses to ensure binary dependencies between solution elements, the problem is encoded by a~restricted~constraint~satisfaction~problem. 

Lastly, our fourth component is a relatively simple polynomial-time algorithm for {\sc Monotone 2-CSP} (see Theorem \ref{thm:csp}), given in Appendix \ref{sec:algoCSP}, based on a reduction to {\sc 2-CNF-SAT}. Essentially, the crux is {\em not} to encode every pair of a variable of {\sc Monotone 2-CSP} and a potential value for it as a variable of {\sc 2-CNF-SAT} that signifies equality, because then, although the functions become easily encodable in the language of {\sc 2-CNF-SAT}, it is unclear how to ensure that each variable of {\sc Monotone 2-CSP} will be in exactly one pair that corresponds to a variable assigned true when satisfying the 2-CNF-SAT formula. Indeed, the naive approach seems futile, because it does not exploit the monotonicity of the input functions. Instead, for each pair of a variable of {\sc Monotone 2-CSP} and a potential value for it with the exception of $0$, we introduce a variable of {\sc 2-CNF-SAT} signifying that the variable is assigned {\em at least} the value in the pair. The assignment of value $0$ is implicitly encoded by the negation of pairs with the value $1$. Then, we can ensure that each variable is assigned exactly one value (when translating a truth assignment for the {\sc 2-CNF-SAT} instance we created back into an assignment for the {\sc Monotone 2-CSP} input instance), and by relying on the monotonicty of the input functions, we are able to encode them correctly in the language of {\sc 2-CNF-SAT}.

For notational clarity, we describe our proof for  {\sc Vertex-Vertex Art Gallery}. However, all arguments extend in a straightforward manner to solve the annotated generalization of {\sc Vertex-Vertex Art Gallery} where $G$ and $C$ are each a subset of the vertex set of the polygon. Then, simple discretization procedures yield the positive resolution of the parameterized complexity also of {\sc Vertex-Boundary Art Gallery} and {\sc Boundary-Vertex Art Gallery} (see Section~\ref{sec:discretization}).

\section{Preliminaries}\label{sec:prelims}

We use standard  terminology from the book of Diestel~\cite{DiestelBook}. 
With the exception of the Introduction, the abbreviation {\sc Art Gallery} refers to {\sc Vertex-Vertex Art Gallery}.

\medskip
\noindent{\bf Polygons.} A {\em simple polygon} $P$ is a flat shape consisting of $n$ straight, {\em non-intersecting} line segments that are joined pair-wise to form a closed path. The line segments that make up a polygon, called {\em edges}, meet only at their endpoints, called {\em vertices}. Any polygon can be modeled by a graph $P=(V,E)$ with $V=\{1,2,\ldots,n\}$ and $E=\{\{i,i+1\}\}: i\in \{1,\ldots,n-1\}\}\cup\{\{n,1\}\}$ where every vertex $i\in V$ is associated with a point $(x_i,y_i)$ on the Euclidean plane. 
A simple polygon $P$ encloses a region, called its {\em interior}, that has a measurable area. We consider the boundary of $P$ as part of its interior. A vertex $i\in V$ is a {\em reflex} (resp.~{\em convex}) vertex if the interior angle of $P$ at $i$ is larger (resp.~smaller) than 180 degrees. If $i\in V$ is not a reflex vertex, then either $i$ is a convex vertex or the interior angle of $P$ at $i$ is exactly 180 degrees. We slightly abuse notation and refer to all non-reflex vertices as convex vertices. We denote the set of reflex vertices of $P$ by $\reflex(P)$, and the set of convex vertices of $P$ by $\con(P)$.
A {\em convex polygon} $P$ is a simple polygon such that for every two points $p$ and $q$ on the boundary (or interior) of $P$, no point of the line segment $\overline{pq}$ is strictly outside $P$. In a convex polygon, all interior angles are less than or equal to 180 degrees, while in a strictly convex polygon all interior angles are less than 180 degrees. Given a non-convex polygon $P=(V,E)$, we suppose w.l.o.g.~that $1\in V$ is a reflex~vertex.

\medskip
\noindent{\bf Visibility.} Let $P=(V,E)$ be a simple polygon. We say that a point $p$ {\em sees} (or is {\em visible} to) a point $q$ if every point of the  line segment $\overline{pq}$ belongs to the interior (including the boundary) of $P$. More generally, a set of points $S$ {\em sees} a set of points $Q$ if every point in $Q$ is seen by at least one point in $S$. Note that if a point $p$ sees a point $q$, then the point $q$ sees the point $p$ as well. Moreover, a vertex $v\in V$ necessarily sees itself and its two neighbors in $P$.
%
The definition of a convex polygon asserts that the following observation holds.

\begin{obs}\label{obs:convexPolygon}
Any point within a convex polygon $P$ sees all points within~$P$. 
\end{obs}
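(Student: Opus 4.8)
\textbf{Proof plan for Observation~\ref{obs:convexPolygon}.}

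The plan is to derive the statement directly from the definition of a convex polygon given above, treating the ``interior (including the boundary)'' as the relevant region and recalling that visibility of $p$ from $q$ means the segment $\overline{pq}$ lies entirely in this region. First I would fix an arbitrary convex polygon $P$ and arbitrary points $p, q$ within $P$ (i.e.\ in the interior or on the boundary). I want to show that $p$ sees $q$, that is, every point of $\overline{pq}$ belongs to the interior of $P$ (including its boundary). The key step is the contrapositive reformulation of convexity: by definition, $P$ is convex iff for every two boundary-or-interior points, no point of the connecting segment lies strictly outside $P$. So it suffices to observe that ``not strictly outside $P$'' is exactly ``inside or on the boundary of $P$,'' which is precisely our convention for ``within $P$.''

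The one genuine subtlety — and the only place the argument needs a sentence of care rather than a pure invocation of the definition — is that the definition of convexity quantifies over boundary points of $P$, whereas Observation~\ref{obs:convexPolygon} also allows $p$ or $q$ to be interior points. So the plan is to note that the definition is stated for boundary points but immediately extends to interior points: for a convex region, any interior point can be written as a limit of (or handled identically to) boundary points, and the defining property ``no point of $\overline{pq}$ is strictly outside $P$'' holds verbatim for interior endpoints as well — indeed the excerpt's own phrasing ``for every two points $p$ and $q$ on the boundary (or interior) of $P$'' already records this, so I would just cite that parenthetical. With that, for any $p, q$ within $P$ every point of $\overline{pq}$ is not strictly outside $P$, hence lies in the interior of $P$ (boundary included), which is exactly the definition of $p$ seeing $q$.

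I do not anticipate a real obstacle here: the statement is essentially a restatement of the definition of convexity in the language of visibility, and the whole proof is one or two lines. If anything, the only thing to be slightly careful about is matching up three pieces of terminology introduced just above — ``within $P$'' (interior including boundary), ``strictly outside $P$,'' and ``sees'' (segment contained in the interior including boundary) — and making sure the reader sees that ``not strictly outside'' and ``within'' denote the same point set, so that the defining inclusion for convex polygons is literally the visibility condition. Having done that, the observation follows, and it can be used freely in the sequel (e.g.\ to reason about convex regions between consecutive reflex vertices).
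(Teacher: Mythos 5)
Your proposal is correct and matches the paper, which gives no separate argument for this observation and simply notes that it is asserted by the definition of a convex polygon (whose phrasing ``on the boundary (or interior)'' already covers interior endpoints, as you observe). Unpacking that ``not strictly outside $P$'' coincides with ``within $P$'' is exactly the intended one-line justification.
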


\medskip
\noindent{\bf Parameterized Complexity.} Every instance of a parameterized problem is accompanied by a parameter $k$. A parameterized problem $\Pi$ is {\em fixed-parameter tractable (\FPT)} if there is an algorithm that, given an instance $(I, k)$ of $\Pi$, solves it in time $f(k)|I|^{\OO(1)}$ where $f$ is some computable function of $k$. Under reasonable complexity-theoretic assumptions, there are parameterized problems (such as  \WOH\ problems) that are not \FPT. For more information, we refer the reader to monographs such as \cite{CyganFKLMPPS15,DowneyF13}.

\section{Algorithm for Art Gallery}\label{sec:art}
In this section, we prove that {\sc Art Gallery} is \FPT\ with respect to $r$, the number of reflex vertices, by developing an algorithm with running time $2^{\OO(r^2\log r)}n^{\OO(1)}$.
We first present structural claims that exhibit the monotone way in which vertices in a so called ``convex region'' see vertices in another such region (Section \ref{sec:structureAG}). Then, we present a Turing reduction from {\sc Art Gallery} to a  problem called {\sc Structured Art Gallery} (Section \ref{sec:structured}).
 Next, based on the claims in Section \ref{sec:structureAG}, we present our main reduction, which translates {\sc Structured Art Gallery} to {\sc Monotone 2-CSP} (Section \ref{sec:reduction}). By developing an algorithm for {\sc Monotone 2-CSP} (Appendix \ref{sec:algoCSP}), we conclude the proof.

\subsection{Simple Structural Claims}\label{sec:structureAG}

\begin{figure}[t]
\centering
\fbox{\includegraphics[scale=0.8]{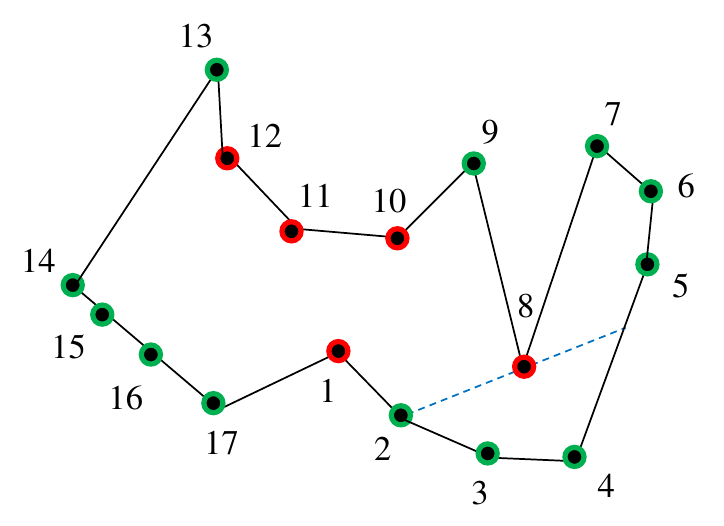}}
\caption{A simple polygon with three maximal convex regions: $[2,7]$, $[9]$ and $[13,17]$. Although $2,5\in[2,7]$ belong to the same convex region, they do not see each other.}\label{fig:convexRegions}
\end{figure}

We begin our analysis with the definition of a subsequence of vertices termed a convex region, illustrated in Fig.~\ref{fig:convexRegions}. 
Below, $j+1$ for $j=n$ refers to $1$. Because we assumed that vertex $1$ of any non-convex polygon is a reflex vertex, any convex region $[i,j]$ satisfies $i\neq 1$.

\begin{definition}\label{def:maximal-convex-region-vvag}
Let $P=(V,E)$ be a simple polygon. A non-empty set of vertices $[i,j]=\{i,i+1,\ldots,j\}$ is a {\em convex region} of $P$ if all the vertices in $[i,j]$ are convex. In addition, if $i-1\geq 1$ and $j+1$ are reflex vertices, 
then $[i,j]$ is a {\em maximal} convex region.
\end{definition}

In what follows, we would like to argue that for every two (not necessarily distinct) convex regions, one convex region sees the other in a manner that is ``monotone'' for each ``orientation'' in which we traverse these regions. To formalize this, we make use of the following notation, illustrated in Fig.~\ref{fig:view}. For a polygon $P=(V,E)$, a convex region $[i,j]$ of $P$ and a vertex $v\in V$, denote the smallest and largest vertices in $[i,j]$ that are seen by $v$ by $\first(v,[i,j])$ and $\last(v,[i,j])$, respectively. If $v$ sees no vertex in $[i,j]$, define $\first(v,[i,j])=\last(v,[i,j])=\nil$. 
Accordingly, we define two types of monotone views. First, we address the orientation corresponding to $\first$ (see Fig.~\ref{fig:view}). Roughly speaking, we say that the way a convex region $[i,j]$ views a  convex region $[i',j']$ is, say, non-decreasing with respect to $\first$, if when we traverse $[i,j]$ from $i$ to $j$ and consider the first vertices in $[i',j']$ that vertices in $[i,j]$ see, then the sequence of these first vertices (viewed as integers) is a monotonically non-decreasing sequence once we omit all occurrences of $\nil$ from it.\footnote{A non-decreasing function (or sequence) is one that {\em never} decreases but can sometimes {\em not} increase.} We further demand that, between two equal vertices in this sequence, no $\nil$ occurs. Formally,

\begin{figure}[t]
\centering
\fbox{\includegraphics[scale=0.8]{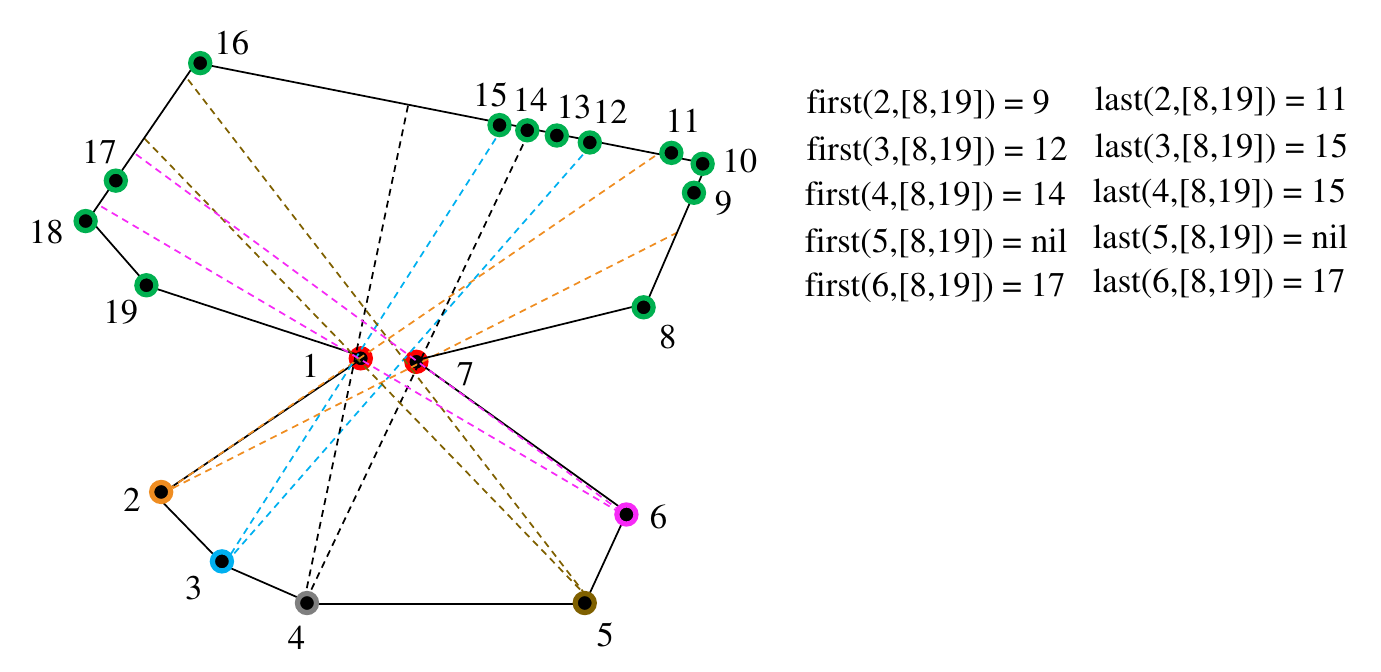}}
\caption{The way $[2,6]$ views $[8,19]$ is non-decreasing with respect to both $\first$ and $\last$.}\label{fig:view}
\end{figure}

\begin{definition}\label{def:first}
Let $P=(V,E)$ be a simple polygon. We say that the way a convex region $[i,j]$ of $P$ views a (not necessarily distinct) convex region $[i',j']$ of $P$ is {\em non-decreasing} (resp. {\em non-increasing}) {\em with respect to $\first$} if for all $t,\widehat{t}\in \{i,i+1,\ldots,j\}$ such that $t\leq \widehat{t}$, $\first(t,[i',j'])\neq\nil$ and $\first(\widehat{t},[i',j'])\neq\nil$, we have that
\begin{itemize}
\item $\first(t,[i',j'])\leq \first(\widehat{t},[i',j'])$ (resp.~$\first(t,[i',j'])\geq \first(\widehat{t},[i',j'])$), and
\item if $\first(t,[i',j'])=\first(\widehat{t},[i',j'])$, then for all $p\in\{t,\ldots,\widehat{t}\}$, $\first(p,[i',j'])=\first(t,[i',j'])$.\footnote{We remark that this condition cannot be replaced by ``for all $p\in\{t,\ldots,\widehat{t}\}$, $\first(p,[i',j'])\neq\nil$''. For example, in Fig.~\ref{fig:view}, neither $\first(4,[8,19])$ nor $\first(6,[8,19])$ is $\nil$, but $\first(5,[8,19])=\nil$.}
\end{itemize}
\end{definition}

Symmetrically, we address the orientation corresponding to the notation $\last$.

\begin{definition}\label{def:last}
Let $P=(V,E)$ be a simple polygon. We say that the way a convex region $[i,j]$ of $P$ views a (not necessarily distinct) convex region $[i',j']$ of $P$ is {\em non-decreasing} (resp. {\em non-increasing}) {\em with respect to $\last$} if for all $t,\widehat{t}\in \{i,i+1,\ldots,j\}$ such that $t\leq \widehat{t}$, $\last(t,[i',j'])\neq\nil$ and $\last(\widehat{t},[i',j'])\neq\nil$, we have that
\begin{itemize}
\item $\last(t,[i',j'])\leq \last(\widehat{t},[i',j'])$ (resp.~$\last(t,[i',j'])\geq \last(\widehat{t},[i',j'])$), and
\item if $\last(t,[i',j'])=\last(\widehat{t},[i',j'])$, then for all $p\in\{t,\ldots,\widehat{t}\}$, $\last(p,[i',j'])=\last(t,[i',j'])$.
\end{itemize}
\end{definition}

The main purpose of this section is to prove the following two lemmas. We believe that some arguments required to establish their proofs might be folklore. For the sake of completeness and self-containment, we present the full details. The first lemma asserts that the subsequence seen by a vertex within a convex region does not contain~``gaps''.

\begin{lem}\label{lem:seeAllBetFin}
Let $P=(V,E)$ be a simple polygon, $v\in V$, and $[i,j]$ be a convex region of $P$. Then, $v$ sees every vertex $t\in [i,j]$ such that $\first(v,[i,j])\leq t \leq \last(v,[i,j])$.\footnote{If $v$ does not see any vertex in $[i,j]$, the claim holds vacuously.}
\end{lem}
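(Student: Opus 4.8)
\textbf{Proof plan for Lemma~\ref{lem:seeAllBetFin}.}
The plan is to fix $v\in V$ and a convex region $[i,j]$, assume $v$ sees at least one vertex in $[i,j]$ (otherwise the statement is vacuous), and set $a=\first(v,[i,j])$ and $b=\last(v,[i,j])$. If $a=b$ there is nothing to prove, so assume $a<b$. Consider the triangle $T=\triangle v\,a\,b$ formed by the point associated with $v$ and the points associated with $a$ and $b$. Since $v$ sees $a$ and $v$ sees $b$, the two segments $\overline{va}$ and $\overline{vb}$ lie inside $P$. I would first argue that the whole triangle $T$ lies inside $P$: the third side $\overline{ab}$ is a chord connecting two vertices of the convex region $[i,j]$, and because all vertices $i,i+1,\dots,j$ are convex, the polygonal chain $a,a+1,\dots,b$ bulges outward relative to $\overline{ab}$ (the sub-polygon bounded by $\overline{ab}$ and this chain is convex), so $\overline{ab}$ itself is contained in $P$; combined with $\overline{va},\overline{vb}\subseteq P$ and the fact that $T$ is bounded by these three segments, a standard argument (e.g.\ via the Jordan curve theorem, or by observing that the boundary of $P$ cannot intersect the interior of $T$ without crossing one of the three sides) shows $T\subseteq P$.

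Once $T\subseteq P$, take any vertex $t\in[i,j]$ with $a\le t\le b$. I would show the point $p_t$ associated with $t$ lies in $T$, and then that the segment $\overline{v\,p_t}$ lies in $T\subseteq P$, which gives that $v$ sees $t$. For the first part: $p_t$ lies on the convex chain $a,a+1,\dots,b$; since this chain together with $\overline{ab}$ bounds a convex region that is ``sandwiched'' between $\overline{ab}$ and the chain, and since the chain is on the far side of $\overline{ab}$ from $v$ is \emph{not} automatically true, so instead I would reason as follows. The segment $\overline{ab}\subseteq P$, and $a,b$ are consecutive-in-visibility extremes for $v$; the sub-polygon $P'$ cut off by the chord $\overline{ab}$ on the side \emph{not} containing $v$ is exactly the convex polygon on vertices $a,\dots,b$, hence $p_t\in P'$. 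The key point is that $T$ and $P'$ together with the region $v$ sees must fit consistently: since $\overline{va}$ and $\overline{vb}$ are inside $P$ and do not cross the chain $a{+}1,\dots,b{-}1$ (if $\overline{va}$ crossed it, $v$ would see some vertex $<a$ in $[i,j]$ or the chain would block, contradicting $a=\first$; more carefully, $\overline{va}$ leaves $a$ along the boundary), the triangle $T$ contains $P'$, so $p_t\in T$. Finally, $\overline{v\,p_t}$ is a segment between $v$ and a point of the triangle $T$ with apex $v$, hence lies in $T\subseteq P$.

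The main obstacle I anticipate is making the containment arguments fully rigorous, in particular (i) proving $T\subseteq P$ cleanly — one must rule out the possibility that the boundary of $P$ dips into the interior of $T$ from some far-away part of the polygon, which is where simplicity of $P$ and the Jordan curve theorem are essential — and (ii) correctly locating $p_t$ relative to $T$, i.e.\ showing the convex ``ear'' $P'$ beyond the chord $\overline{ab}$ is swallowed by $T$. A convenient way to handle (ii) is to observe that $\overline{v p_t}$ must cross the chord $\overline{ab}$ at some point $q$ (since $p_t$ and $v$ are on opposite sides of the line through $a,b$, using that all of $a{+}1,\dots,b{-}1$ are convex and lie strictly on the far side, together with $a<t<b$), so $\overline{v p_t}=\overline{vq}\cup\overline{q p_t}$; the first piece lies in $T$ because $q\in\overline{ab}\subseteq T$ and $v$ is the apex, and the second piece lies in the convex polygon $P'\subseteq P$ because both $q$ and $p_t$ belong to it. This splits the proof into two convex-set membership checks, each routine, and isolates the only genuinely topological step, namely $\overline{ab}\subseteq P$ and the transversality claim that $\overline{vp_t}$ meets $\overline{ab}$.
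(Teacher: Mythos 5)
Your plan hinges on three containment claims: $\overline{ab}\subseteq P$, $T=\triangle vab\subseteq P$, and $P'\subseteq P$ (the ``ear'' bounded by $\overline{ab}$ and the chain). The justification you give for the first one --- the sub-polygon bounded by $\overline{ab}$ and the chain is convex, hence $\overline{ab}\subseteq P$ --- is a non sequitur: convexity of that cap as an abstract polygon says nothing about its containment in $P$, because other parts of $\partial P$ may intrude across the chord. This is exactly the phenomenon the paper points out in Figure~\ref{fig:convexRegions}, where vertices $2$ and $5$ of the same convex region do not see each other, and it is why Lemma~\ref{lem:polyInConvex} takes ``$\ell$ and $h$ see each other'' as a hypothesis rather than deriving it. Worse, the claims are genuinely false even under the hypotheses of Lemma~\ref{lem:seeAllBetFin}: let $P$ be the arrowhead quadrilateral with vertices $v=(0,1)$, $a=(-1,0)$, $m=(0,3)$, $b=(1,0)$ (in cyclic order $v,a,m,b$), so that $v$ is reflex and $\{a,m,b\}$ is a convex region. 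Then $v$ sees $a$, $m$ and $b$, so $a=\first(v,\cdot)$ and $b=\last(v,\cdot)$, yet $\overline{ab}$ is not contained in $P$ (the point $(0,0)$ lies outside), the interior of $T=\triangle vab$ lies entirely outside $P$, and $P'$ (here the big triangle $amb$) is not contained in $P$ either. So both halves of your decomposition $\overline{vq}\subseteq T\subseteq P$ and $\overline{q\,p_t}\subseteq P'\subseteq P$ collapse, and your transversality step (``$\overline{vp_t}$ meets $\overline{ab}$'') is also moot. You also never treat the case $v\in[i,j]$, where $v$ lies on the chain itself and the crossing argument has no meaning.

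The structural reason your route fails is that the only curves you are entitled to know lie in $P$ are the two visibility segments $\overline{va},\overline{vb}$ and the boundary chain from $a$ to $b$ --- not the chord. The paper's proof uses exactly and only these: it forms the polygon $Q$ with boundary $\overline{va}\cup(a,a{+}1,\ldots,b)\cup\overline{bv}$, argues $Q\subseteq P$ and that every vertex of $Q$ other than possibly $v$ is convex in $Q$, and then invokes Proposition~\ref{prop:reflex} (a polygon is guarded by its reflex vertices) or Observation~\ref{obs:convexPolygon} to conclude that $v$ sees every chain vertex inside $Q$, handling precisely the reflex-at-$v$ configuration that defeats the triangle-plus-ear picture; the case $v\in[i,j]$ is handled separately by two convex sub-polygons. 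To repair your argument you would have to replace the chord-based regions by this $Q$, at which point you have essentially reproduced the paper's proof.
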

\begin{proof}
Suppose that $v$ sees some vertex in $[i,j]$, else the proof is trivial. Denote $\ell=\first(v,[i,j])$ and $h=\last(v,[i,j])$. We consider two cases. First, suppose that $v\notin [i,j]$. Define a polygon $Q=(V_Q,E_Q)$ by $V_Q=\{\ell,{\ell+1},\ldots,h\}\cup\{v\}$ and $E_Q=\{\{t,{t+1}\}: t\in\{\ell,\ldots,h-1\}\}\cup\{\{\ell,v\},\{h,v\}\}$. Clearly, $Q$ is simple. Since $[i,j]$ is a convex region of $P$, we have that $Q$ is a simple polygon such that the interior angle at $t$ in $Q$, for any $t\in\{\ell,\ell+1,\ldots,h\}$, is at most 180 degrees. Thus, the only vertex in $Q$ that can be a reflex vertex is $v$. Moreover, since $P$ contains both $\overline{\ell v}$ and $\overline{h v}$, we have that $Q$ is contained in $P$. By Observation \ref{obs:convexPolygon} and Proposition \ref{prop:reflex}, this means that for all $t\in\{\ell,\ldots,h\}$, $v$ sees $t$.

Second, suppose that $v\in[i,j]$. Define a polygon $Q=(V_Q,E_Q)$ by $V_Q=\{\ell,{\ell+1},\ldots,v\}$ and $E_Q=\{\{t,{t+1}\}: t\in\{\ell,\ldots,v-1\}\}\cup\{\{\ell,v\}\}$ and a polygon $Q'=(V_Q',E_Q')$ by $V_Q'=\{v,v+1,\ldots,h\}$ and $E_Q'=\{\{t,{t+1}\}: t\in\{v,\ldots,h-1\}\}\cup\{\{h,v\}\}$. Clearly, both polygons are simple and convex. Moreover, since $P$ contains both $\overline{\ell v}$ and $\overline{h v}$, we have that both $Q$ and $Q'$ are contained in $P$. By Observation \ref{obs:convexPolygon}, this means that for all $t\in\{\ell,\ldots,h\}$, $v$ sees $t$.
\end{proof}

The second lemma asserts that views are monotone. Intuitively, whenever we move along a convex region $[i,j]$ while viewing a convex region $[i',j']$ as described earlier, the first vertices (and last vertices) seen form a non-increasing or non-decreasing sequence.\footnote{We remark that we do not know whether it is possible that the first vertices would form a non-increasing (or non-decreasing) sequence and the last vertices would not. Our weaker claim suffices for our purposes.}

\begin{lem}\label{lem:mainStructF}
Let $P=(V,E)$ be a simple polygon, and let $[i,j]$ and $[i',j']$ be two (not necessarily distinct) maximal convex regions of $P$. Then, {\em (i)} the way in which $[i,j]$ views $[i',j']$ with respect to $\first$ is either non-decreasing or non-increasing, and {\em (ii)} the way in which $[i,j]$ views $[i',j']$ with respect to $\last$ is either non-decreasing or non-increasing.
\end{lem}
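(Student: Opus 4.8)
My plan is to prove the two statements in Lemma~\ref{lem:mainStructF} symmetrically, focusing on~(i) (with respect to $\first$); (ii) follows by the mirror argument, traversing $[i,j]$ from $j$ to $i$ instead, or by swapping the roles of ``smallest'' and ``largest''. So fix the two maximal convex regions $[i,j]$ and $[i',j']$. For any $t \in \{i,\ldots,j\}$ with $\first(t,[i',j'])\neq\nil$, write $F(t)=\first(t,[i',j'])$. I want to show that the partial function $F$, restricted to its domain of definition $D=\{t : F(t)\neq\nil\}\subseteq\{i,\ldots,j\}$, is either non-decreasing or non-increasing in the sense of Definition~\ref{def:first} (i.e.\ monotone in integer value, \emph{and} with no $\nil$-gap between two equal values). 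The core idea is a \emph{non-crossing} argument for line segments inside a simple polygon: the segments $\overline{t\,F(t)}$ all lie in $P$, and they are ``extremal'' among the visibility segments from $t$ to $[i',j']$ (they hit the earliest vertex), which should force that two such segments, for $t<\widehat t$, cannot cross in a way that would make $F(\widehat t)$ strictly smaller than $F(t)$ after one crossing and strictly larger after another. This is what pins down global monotonicity rather than merely local behaviour.

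**Key steps in order.**

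First I would establish the \emph{no-gap between equal values} condition. Suppose $t<p<\widehat t$ with $F(t)=F(\widehat t)=v^\*$; I must show $F(p)\neq\nil$ and in fact $F(p)=v^\*$. Here the right tool is Lemma~\ref{lem:seeAllBetFin}: consider the triangle-like region bounded by $\overline{t\,v^\*}$, $\overline{\widehat t\,v^\*}$ and the convex arc of $[i,j]$ from $t$ to $\widehat t$; since $[i,j]$ is a convex region, $p$ lies inside (or on the boundary of) this region, which is contained in $P$, so $p$ sees $v^\*$. Hence $F(p)\le v^\*$; and if $F(p)<v^\*$ I would derive a contradiction by showing $t$ (or $\widehat t$) would then also see a vertex of $[i',j']$ smaller than $v^\*$ — again via a convexity/containment argument on the region cut out by the segments. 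Second, for the main monotonicity claim, I would take $t<\widehat t$ both in $D$ with $F(t)\neq F(\widehat t)$ and argue that the ``direction'' (whether $F(\widehat t)>F(t)$ or $F(\widehat t)<F(t)$) is the \emph{same} for every such pair. The cleanest route is a \emph{three-point / crossing} argument: assume for contradiction there exist $t_1<t_2$ with $F(t_1)<F(t_2)$ and $t_3<t_4$ with $F(t_3)>F(t_4)$; by interleaving these indices and using the no-gap condition plus Lemma~\ref{lem:seeAllBetFin}, reduce to three indices $a<b<c$ in $D$ with $F(b)$ strictly larger (or strictly smaller) than both $F(a)$ and $F(c)$ — a local ``peak'' (or ``valley''). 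Then I would derive a contradiction from the peak: the segments $\overline{a\,F(a)}$ and $\overline{c\,F(c)}$ together with the convex arcs bound a sub-polygon of $P$ containing $b$; since $F(b)>F(a)$ and $F(b)>F(c)$, the vertex $F(b)$ lies ``beyond'' both of these segments on the $[i',j']$ side, which I would show forces $\overline{b\,F(b)}$ to exit $P$ — contradicting visibility. Making ``beyond'' and ``exit $P$'' precise is where the convexity of $[i',j']$ (so that its vertices appear in a convex arc, ordered by index) and the simplicity of $P$ both get used. Finally I would assemble: no-gap + single-direction = exactly one of the two cases of Definition~\ref{def:first}, and then invoke the symmetric argument for $\last$.

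**The main obstacle.**

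The hard part will be the crossing/peak contradiction — specifically, turning the informal picture ``a peak in $F$ forces the middle visibility segment to leave the polygon'' into a rigorous argument. The difficulty is that $P$ is an arbitrary simple polygon: away from the two convex regions it can wind around wildly, and segments $\overline{a\,F(a)}$, $\overline{b\,F(b)}$, $\overline{c\,F(c)}$ need not pairwise cross in any obvious pattern; I cannot just appeal to planarity of three chords in a convex polygon. I expect the right way to tame this is to build an auxiliary polygon (as in the proof of Lemma~\ref{lem:seeAllBetFin}) from the two convex arcs plus the extremal segments, show it is contained in $P$ and has at most a controlled number of reflex vertices, and then run a purely combinatorial-geometric argument inside \emph{that} polygon, where the convex-region structure is visible. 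A secondary subtlety is ruling out degenerate configurations (collinear vertices, segments passing exactly through a vertex), which the paper's convention of treating $180^\circ$ vertices as convex is presumably designed to handle and which I would address with careful tie-breaking in the definitions of $\first$/$\last$.
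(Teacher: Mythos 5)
Your overall skeleton matches the paper's: reduce the lemma to (a) no $\nil$-gap between equal values of $\first$, handled via Lemma~\ref{lem:seeAllBetFin}, and (b) a ``sandwich'' property ruling out a strict peak or valley among three indices $a<b<c$. Part (a) of your plan is fine. The genuine gap is in (b), which you yourself flag as the main obstacle: the geometric picture you propose for the contradiction is wrong in exactly the regime that matters. You want the segments $\overline{a\,F(a)}$ and $\overline{c\,F(c)}$, together with the two convex arcs, to bound a simple sub-polygon of $P$ containing $b$, with $F(b)$ lying ``beyond'' it, so that $\overline{b\,F(b)}$ must exit $P$. But when $F(a)\neq F(c)$ these two segments \emph{always intersect} (this is the paper's Lemma~\ref{lem:seeIntersect}: if they were disjoint, the two arcs plus the two segments would bound a convex polygon contained in $P$, so $a$ would see $F(c)$ and $c$ would see $F(a)$, contradicting that these are \emph{first}-seen vertices). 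Hence the closed curve you describe is self-intersecting and bounds no such sub-polygon, and there is no reason for $\overline{b\,F(b)}$ to leave $P$ --- indeed it cannot, since $b$ sees $F(b)$ by definition; the contradiction has to be located elsewhere.

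What is actually needed, and what the paper supplies, is: (i) the crossing property above, applied not only to the pair $(a,c)$ but also to the pairs $(a,b)$ and $(b,c)$, so that $\overline{b\,F(b)}$ must meet both extremal segments (this uses the firstness of $F(b)$, which your sketch never exploits); (ii) a decomposition at the crossing point $p$ into two polygons, $Q$ on the $[i,j]$ side and $Q'$ on the $[i',j']$ side, together with the observation that both are \emph{convex} --- convexity at $p$ is where maximality and distinctness of the two convex regions enter (Lemma~\ref{lem:seeUniqueIntersect}); then, since $b\in Q$ and $\overline{b\,F(b)}$ crosses both segments, $F(b)$ must land on the $[i',j']$-arc of $Q'$, i.e.\ between $F(a)$ and $F(c)$; (iii) a separate degenerate case in which the two extremal segments overlap in a sub-segment rather than a point (Lemma~\ref{lem:seeSeveralIntersect}); and (iv) a separate and simpler argument for the case $[i,j]=[i',j']$, which the lemma explicitly allows and your crossing picture does not cover (the paper uses Lemmas~\ref{lem:polyInConvex} and~\ref{lem:seeSame}, exploiting that every vertex sees itself). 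Your auxiliary-polygon hedge in the last paragraph points in the right general direction, but without the crossing lemma and the convex split at the intersection point the peak/valley contradiction does not go through, so as written the proposal does not constitute a proof.
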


We only prove the first statement in Lemma \ref{lem:mainStructF}. (The proof of the second statement is symmetric.) To this end, we first analyze how a convex region sees itself, and afterwards we analyze how one convex region sees a different convex region. Having completed this analysis, we present the proof of the lemma.

\medskip
\noindent{\bf Interaction within the same region.} First, we analyze how a convex region sees itself.

\begin{lem}\label{lem:polyInConvex}
Let $P=(V,E)$ be a simple polygon. Let $[i,j]$ a convex region of $P$. Let $\ell,h\in [i,j]$ be two vertices that see each other, where $\ell\leq h$. For all $x,y\in\{\ell,\ell+1,\ldots,h\}$, $x\leq y$, the vertices $x$ and $y$ see each other.
\end{lem}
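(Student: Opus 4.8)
The statement to prove is Lemma~\ref{lem:polyInConvex}: if $[i,j]$ is a convex region, $\ell,h\in[i,j]$ with $\ell\le h$ see each other, then every pair $x,y$ with $\ell\le x\le y\le h$ sees each other. The natural approach is to isolate the subpolygon ``cut off'' by the chord $\overline{\ell h}$ together with the boundary path $\ell,\ell+1,\dots,h$, show it is a convex polygon, and then invoke Observation~\ref{obs:convexPolygon}. Concretely, first I would define $Q=(V_Q,E_Q)$ with $V_Q=\{\ell,\ell+1,\dots,h\}$ and $E_Q=\{\{t,t+1\}:t\in\{\ell,\dots,h-1\}\}\cup\{\{\ell,h\}\}$, exactly as in the proof of Lemma~\ref{lem:seeAllBetFin}. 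Since all vertices of $[i,j]$ are convex vertices of $P$, the interior angle of $Q$ at every vertex $t\in\{\ell,\dots,h\}$ other than possibly $\ell$ and $h$ is at most the corresponding interior angle in $P$, hence at most $180^\circ$. Because $\ell$ and $h$ see each other in $P$, the chord $\overline{\ell h}$ lies in $P$, so $Q$ is a simple polygon contained in $P$; moreover the interior angles of $Q$ at $\ell$ and at $h$ are also at most $180^\circ$ (they are sub-angles of the convex interior angles of $P$ at $\ell$ and $h$, cut by the chord). Therefore $Q$ is a convex polygon.

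Once $Q$ is known to be a convex polygon contained in $P$, the conclusion is immediate: by Observation~\ref{obs:convexPolygon}, any two points of $Q$ see each other \emph{within $Q$}, and since $Q\subseteq P$, the segment $\overline{xy}$ witnessing visibility in $Q$ also lies in $P$; thus $x$ and $y$ see each other in $P$ for all $x,y\in V_Q=\{\ell,\dots,h\}$, which is precisely the claim.

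\textbf{Main obstacle.} The delicate point is the claim that $Q$ is genuinely a \emph{convex} polygon, i.e.\ that \emph{all} its interior angles — including those at the two ``new'' corners $\ell$ and $h$ where the chord meets the boundary path — are at most $180^\circ$. For the intermediate vertices $t\notin\{\ell,h\}$ this is clear because their interior angle in $Q$ is bounded by their (convex) interior angle in $P$, and for the reflex-free hypothesis one uses that $[i,j]\subseteq\con(P)$. At $\ell$ and $h$ one must argue that inserting the chord $\overline{\ell h}$ into $P$ and discarding the rest only shrinks these angles; this is where one needs that the chord lies inside $P$ and that the path $\ell,\ell+1,\dots,h$ together with the chord bounds a region on the ``correct'' side. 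A clean way to handle this is to observe that $Q$ is a simple polygon all of whose vertices except possibly $\ell$ and $h$ are non-reflex, so $Q$ has at most two reflex vertices; but a simple polygon with exactly one reflex vertex is impossible (the exterior angles must sum to $360^\circ$, and a single reflex vertex contributes a negative exterior angle that cannot be compensated), and with two reflex vertices that are adjacent along the chord edge $\{\ell,h\}$ one derives a contradiction with simplicity as well — hence $Q$ has zero reflex vertices and is convex. Alternatively, and more in the spirit of the companion Lemma~\ref{lem:seeAllBetFin}, one can split at any interior vertex if needed, but here a single polygon $Q$ suffices since both $\ell$ and $h$ lie on the chord. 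I would present the angle-counting argument carefully, as that is the only non-routine step; the rest is a direct application of Observation~\ref{obs:convexPolygon}.
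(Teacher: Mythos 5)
Your construction is exactly the paper's: form $Q$ with vertex set $\{\ell,\ldots,h\}$, edges along the boundary path plus the chord $\{\ell,h\}$, argue $Q$ is a convex polygon contained in $P$, and conclude via Observation~\ref{obs:convexPolygon}. The paper's proof is in fact terser than yours: it simply asserts that since $[i,j]$ is a convex region and $\overline{\ell h}\subseteq P$, the polygon $Q$ is convex and contained in $P$. Your first-paragraph justification (interior angles of $Q$ at the intermediate vertices are bounded by the corresponding convex angles of $P$; the angles of $Q$ at $\ell$ and $h$ are sub-angles of the convex interior angles of $P$ at $\ell$ and $h$, because the chord lies inside $P$ and $Q\subseteq P$) is the right way to make that assertion precise, and it is all that is needed.

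The problem is the argument you say you would actually present carefully. The claim that ``a simple polygon with exactly one reflex vertex is impossible'' is false: a dart (arrowhead) quadrilateral, e.g.\ with vertices $(-2,0),(0,1),(2,0),(0,3)$ in this cyclic order, is simple and has exactly one reflex vertex, at $(0,1)$. The exterior angles of the three convex vertices can each be close to $180^\circ$, so they easily compensate a negative exterior angle while summing to $360^\circ$; the parity/counting obstruction you invoke does not exist. Likewise, two reflex vertices joined by an edge do not by themselves contradict simplicity, so that branch of the argument is also unsupported. Hence the ``angle-counting'' route cannot establish convexity of $Q$, and if you lean on it the proof collapses precisely at the one step you identified as non-routine. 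Drop it and instead spell out the sub-angle argument you already sketched: since $\overline{\ell h}\subseteq P$ and $P$ is simply connected, the closed curve bounding $Q$ encloses a region contained in $P$; therefore near $\ell$ the interior of $Q$ lies inside the interior sector of $P$ at $\ell$, so the angle of $Q$ at $\ell$ (between the edge to $\ell+1$ and the chord) is at most the interior angle of $P$ at $\ell$, which is at most $180^\circ$ because $\ell$ lies in the convex region $[i,j]$; symmetrically at $h$. With that, $Q$ is convex and Observation~\ref{obs:convexPolygon} finishes the proof, exactly as in the paper.
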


\begin{proof}
Define the polygon $Q=(V_Q,E_Q)$ by $V_Q=\{\ell,{\ell+1},\ldots,h\}$ and $E_Q=\{\{t,{t+1}\}: t\in\{\ell,\ldots,h-1\}\}\cup\{\{\ell,h\}\}$. Since $[i,j]$ is a convex region of $P$ and the line segment $\overline{\ell h}$ is contained in $P$, we have that $Q$ is a convex polygon that is contained in $P$. By Observation \ref{obs:convexPolygon}, this means that any two vertices of $Q$ see each other.
\end{proof}

We utilize Lemma \ref{lem:polyInConvex} in order to prove the following result.

\begin{lem}\label{lem:seeSame}
Let $P=(V,E)$ be a simple polygon. Let $[i,j]$ be a convex region of $P$. Let ${\ell}$ and $h$ be two vertices in $[i,j]$ such that $\ell\leq h$, $x=\first(\ell,[i,j])\neq\nil$, and $y=\first(h,[i,j])\neq\nil$. Then, for all $t\in\{\ell,\ell+1,\ldots,h\}$, $\min\{x,y\}\leq\first(t,[i,j])\leq \max\{x,y\}$.
\end{lem}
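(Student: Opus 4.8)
\textbf{Proof plan for Lemma \ref{lem:seeSame}.}

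The plan is to reduce the statement to a purely one-dimensional monotonicity fact by exploiting Lemma \ref{lem:polyInConvex}, i.e.\ the fact that on a convex region visibility between two vertices propagates to every vertex in between. First I would fix $t\in\{\ell,\ell+1,\ldots,h\}$ and set $z=\first(t,[i,j])$, which is well defined since $t$ sees at least itself in $[i,j]$ (so $z\neq\nil$ always). The goal is then $\min\{x,y\}\le z\le\max\{x,y\}$. I would split into the two cases $x\le y$ and $x\ge y$; by a left-right symmetry of the construction it suffices to treat, say, $x\le y$, so the target becomes $x\le z\le y$.

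For the lower bound $x\le z$: I want to show $\ell$ sees the vertex $z$, because then $\first(\ell,[i,j])\le z$, i.e.\ $x\le z$. Here I distinguish whether $z\le t$ or $z\ge t$ (one of these holds since $z=\first(t,[i,j])\le t$ is impossible only if... actually $z\le t$ may fail; $z$ is the smallest vertex $t$ sees, and $t$ sees itself, so $z\le t$ always). Thus $z\le t$. Now $\ell$ and $t$: if $\ell$ and $t$ see each other, Lemma \ref{lem:polyInConvex} applied to the pair $\ell,t$ (and the fact that $z\in\{\ell,\dots\}$? — careful, we need $\ell\le z$) — here I'd note $z=\first(t,[i,j])\le x=\first(\ell,[i,j])$ would contradict... hmm. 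Let me instead argue directly: since $x=\first(\ell,[i,j])$, $\ell$ sees $x$; and $t$ sees $z$. If $z<x$, then I want a contradiction. Consider the smaller of $\{\ell,t\}$ and chase visibility along the convex chain using Lemma \ref{lem:polyInConvex}: the cleanest route is to show $t$ sees every vertex between $z$ and $\last(t,[i,j])$ (Lemma \ref{lem:seeAllBetFin}), and similarly for $\ell$, and then use that two ``visibility intervals'' on a convex region that both start below their containing vertex must be nested in the right way. So the key sub-claim I would isolate and prove is: \emph{if $\ell\le t$ are vertices of a convex region and $\first(\ell,[i,j])=x$, $\first(t,[i,j])=z$, then $x\le z$ is impossible to violate}, proved by assuming $z<x$, noting $t$ sees $z$ and $z<x\le \ell\le t$, so the vertices $z,\ell,t$ lie in order; by Lemma \ref{lem:polyInConvex} (with the pair $z,t$, which see each other) $z$ and $\ell$ see each other; but then $\ell$ sees $z<x$, contradicting $x=\first(\ell,[i,j])$. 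This gives $x\le z$. The upper bound $z\le y$ is the symmetric argument: assume $z>y$; then $h$ sees $y<z\le t\le h$, so $y,t,h$ are in order, wait — I need $z\le t\le h$ and $y<z$, giving $y<t$; by Lemma \ref{lem:polyInConvex} applied to $t,h$ — but do $t$ and $h$ see each other? Not necessarily. So instead: $h$ sees $y$, and $t\le h$ with $y\le t$ (since $y<z\le t$); if $y\le t\le h$ then applying Lemma \ref{lem:polyInConvex} to the pair $y,h$ shows $t$ and $h$... no, it shows any two vertices between $y$ and $h$ see each other, in particular $y$ and $t$ see each other, so $t$ sees $y<z$, contradicting $z=\first(t,[i,j])$. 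Hence $z\le y$.

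I expect the main obstacle to be the bookkeeping of which pairs of vertices are known to see each other before invoking Lemma \ref{lem:polyInConvex} — the lemma needs an \emph{established} mutual-visibility pair $\ell,h$ to conclude visibility of everything in between, so each application must start from a vertex that is the image of $\first$ (hence visible to its source) and the correct ordering $\ell\le z$ or $y\le t$ must be verified from the definitions. Once those orderings are pinned down (using that $\first(v,\cdot)\le v$ whenever $v$ itself lies in the region, and handling the easy edge cases $x=y$ or $t\in\{\ell,h\}$ separately), each bound follows from a single application of Lemma \ref{lem:polyInConvex} together with the definition of $\first$ as a minimum, and the symmetric case $x\ge y$ is identical with the roles of $\ell$ and $h$ swapped.
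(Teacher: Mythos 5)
Your proposal is correct and is essentially the paper's own proof: both bounds are obtained by contradiction from Lemma \ref{lem:polyInConvex}, applied to the mutually visible pairs $(\first(t,[i,j]),t)$ and $(\first(h,[i,j]),h)$ respectively, together with the observation that $\first(v,[i,j])\le v$ for a vertex $v$ of the region. Your initial case split on $x\le y$ and the appeal to ``left-right symmetry'' are unnecessary (and the symmetry is not literal, since reversing orientation exchanges $\first$ and $\last$): the two arguments you end up giving establish $x\le\first(t,[i,j])$ and $\first(t,[i,j])\le y$ without ever using $x\le y$, which already yields the stated sandwich, exactly as in the paper.
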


\begin{proof}
Suppose that $\ell<h-1$, else the proof is complete.
Let $t\in\{\ell+1,\ldots,h-1\}$. Suppose, by way of contradiction, that either $\first(t,[i,j])<\min\{x,y\}$ or $\max\{x,y\}<\first(t,[i,j])$. First, assume that $\first(t,[i,j])<\min\{x,y\}$. Because every vertex sees itself, we have that $\min\{x,y\}\leq\ell$. Thus, $\first(t,[i,j])<\ell<t$. By Lemma \ref{lem:polyInConvex}, this implies that $\ell$ sees $\first(t,[i,j])$. However, this is contradiction because $x=\first(\ell,[i,j])$ while $\first(t,[i,j])<x$.
Second, assume that $\max\{x,y\}<\first(t,[i,j])$. Since every vertex sees itself, we have that $\first(t,[i,j])\leq t$, and hence $\max\{x,y\}<t$. In particular, $y<t<h$. By Lemma \ref{lem:polyInConvex}, this implies that $t$ sees $y$. However, this is contradiction because $y<\first(t,[i,j])$.
\end{proof}

\begin{figure}[t]
\centering
\fbox{\includegraphics[scale=0.8]{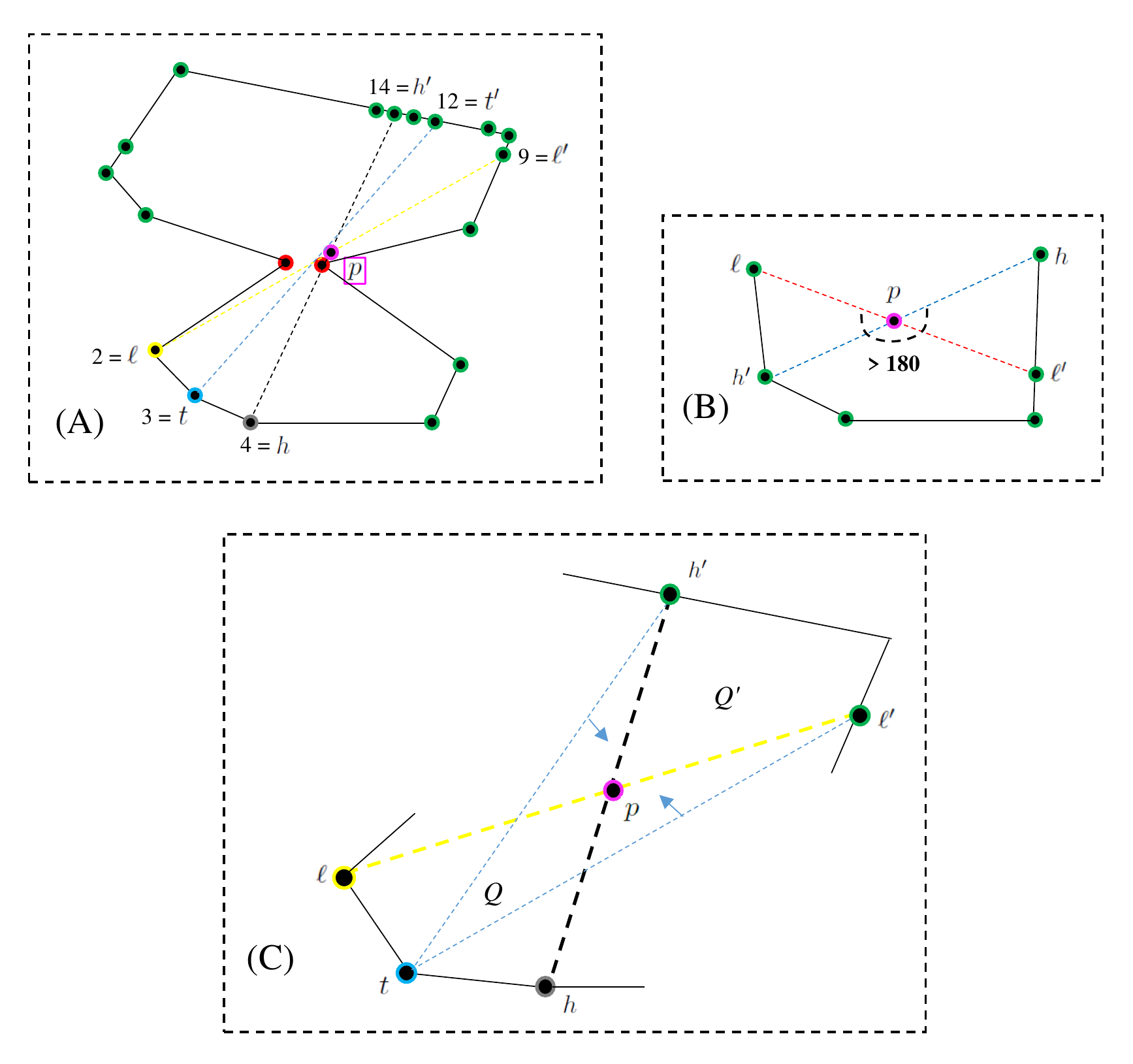}}
\caption{(A) The vertices $\ell,\ell',h,h',t,t'$ and $p$ in the proof of Lemma \ref{lem:seeSeveralIntersect}. The polygon is the same as the one in Fig.~\ref{fig:view}. (B) A contradiction in the proof of Lemma~\ref{lem:seeSeveralIntersect}: the vertices $\ell'$ and $h'$ belong to the same convex region as the vertices $\ell$ and $h$. (C) The line segment $\overline{tt'}$ must intersect both $\overline{\ell\ell'}$ and $\overline{hh'}$.}\label{fig:uniqueIntersection}
\end{figure}

\medskip
\noindent{\bf Interaction between two distinct regions.} Second, we analyze how one convex region sees a different convex region. For this purpose, we first argue that certain line segments intersect. Then, we consider the case where they intersect in a single point, and the case where they intersect in more than a single point.

\begin{lem}\label{lem:seeIntersect}
Let $P=(V,E)$ be a simple polygon. Let $[i,j]$ and $[i',j']$ be distinct maximal convex regions of $P$. Let ${\ell}$ and $h$ be vertices in $[i,j]$ such that $\ell\leq h$, $\ell'=\first(\ell,[i',j'])\neq\nil$ and ${h'}=\first(h,[i',j'])\neq\nil$. Then, the line segments $\overline{\ell {\ell'}}$ and~$\overline{h {h'}}$~intersect.
\end{lem}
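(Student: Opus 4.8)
\textbf{Proof plan for Lemma~\ref{lem:seeIntersect}.}

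The plan is to argue by contradiction: assume the segments $\overline{\ell\ell'}$ and $\overline{hh'}$ do not intersect, and derive a violation of the defining properties of $\first$, of the maximality of the convex regions $[i,j]$ and $[i',j']$, or of simplicity of $P$. First I would record the boundary structure we are working with. Going around $\partial P$ in the chosen cyclic orientation, the four vertices $\ell,h,h',\ell'$ appear in some cyclic order; since $[i,j]$ and $[i',j']$ are distinct maximal convex regions, the arc of $\partial P$ from $\ell$ to $h$ staying inside $[i,j]$ does not meet $[i',j']$, and symmetrically for the arc from $\ell'$ to $h'$ inside $[i',j']$. Hence the only consistent cyclic orders of these four points along $\partial P$ are $(\ell,h,h',\ell')$ or $(\ell,h,\ell',h')$ (depending on which ``side'' $[i',j']$ lies). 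Because $\overline{\ell\ell'}$ and $\overline{hh'}$ are chords of $P$ whose endpoints interleave along $\partial P$ in the pattern $\ell,h,\dots,h',\ell'$ — that is, $\ell$ and $\ell'$ separate $h$ from $h'$ along the boundary cycle — the Jordan curve theorem forces the two chords to cross. So the crux is simply to establish the interleaving.

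To get the interleaving I would reason about the region-arcs. Consider the closed curve $\gamma$ formed by the chord $\overline{\ell h}$ together with the boundary arc $A$ from $\ell$ to $h$ that lies inside $[i,j]$. Since $[i,j]$ is a convex region, this arc together with the chord bounds a convex sub-polygon $Q\subseteq P$ (as in Lemma~\ref{lem:polyInConvex}); in particular $\overline{\ell h}\subseteq P$. Now $\ell'=\first(\ell,[i',j'])$ is visible from $\ell$ and $h'=\first(h,[i',j'])$ is visible from $h$, and neither $\ell'$ nor $h'$ lies on the arc $A$ (they are in the disjoint region $[i',j']$). If $\overline{\ell\ell'}$ and $\overline{hh'}$ were disjoint, then starting from the convex quadrilateral-like picture, $\ell'$ and $h'$ would have to lie on the \emph{same} side of the line through $\ell$ and $h$ relative to $Q$, and then — walking along $\partial P$ from $h$ past $h'$ to $\ell'$ to $\ell$ — the segment $\overline{\ell\ell'}$ would ``jump over'' $h'$. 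Concretely: the segment $\overline{\ell\ell'}\subseteq P$, together with the boundary arc from $\ell'$ back to $\ell$ avoiding $[i,j]$, bounds a region $R\subseteq P$; if $\overline{\ell\ell'}$ misses $\overline{hh'}$ then $h'$ lies inside $R$ while $h$ lies outside $R$, so the chord $\overline{hh'}\subseteq P$ must cross $\partial R$, and it cannot cross $\partial P$, hence it crosses $\overline{\ell\ell'}$ — contradiction. This is essentially the standard ``two visibility chords from a convex chain to another convex chain, in nested order, must cross'' argument, and I would formalize it via one application of the Jordan curve theorem to $\partial R$.

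I expect the main obstacle to be bookkeeping the cyclic order of $\ell,h,\ell',h'$ on $\partial P$ cleanly — in particular handling the degenerate possibilities $\ell=h$, $\ell'=h'$, or $\ell,h$ adjacent, where some of the ``sub-polygons'' degenerate to segments — and making precise the claim that the boundary arc from $\ell'$ to $\ell$ avoiding $[i,j]$ ``closes up'' with $\overline{\ell\ell'}$ into a Jordan curve whose interior is contained in $P$. The role of \emph{maximality} of the two regions is exactly to guarantee that $[i,j]$ and $[i',j']$ are boundary-disjoint (so the two region-arcs we splice in are genuinely disjoint), which pins down the interleaving pattern; without it the four points could be intermixed along $\partial P$ and the conclusion could fail. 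Once the interleaving is in hand, the crossing of the two chords is immediate from the Jordan curve theorem, since both chords lie inside the simple polygon $P$ and therefore cannot exit through $\partial P$.
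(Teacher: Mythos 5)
There is a genuine gap at the heart of your plan: the interleaving claim is false in one of the two cyclic orders that you yourself list. Since $[i,j]$ and $[i',j']$ are disjoint arcs of the boundary, the four points appear around $\partial P$ either as $(\ell,h,\ell',h')$ or as $(\ell,h,h',\ell')$. In the first order the pairs $\{\ell,\ell'\}$ and $\{h,h'\}$ do interleave and the Jordan-curve argument works. But in the second order (which occurs exactly when $h'<\ell'$) the pairs are \emph{nested}: both $h$ and $h'$ lie on the same arc determined by $\ell$ and $\ell'$, so $\ell$ and $\ell'$ do not separate $h$ from $h'$, and topology does not force the two chords to meet. Correspondingly, your auxiliary region argument also breaks there: $h'$ then lies on the boundary arc through $h$ (the one bounding the complement of $R$), not inside $R$, so $\overline{hh'}$ can avoid $\overline{\ell\ell'}$ entirely. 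In that nested configuration the lemma cannot be proved from visibility plus topology alone: for \emph{arbitrary} vertices $\ell''$ seen by $\ell$ and $h''$ seen by $h$ with $h''<\ell''$, the two segments can certainly be disjoint. What saves the statement is precisely the minimality built into $\first$ --- $\ell'$ is the \emph{smallest} vertex of $[i',j']$ seen by $\ell$, and $h'$ the smallest seen by $h$ --- a fact your proposal never invokes.

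For comparison, the paper's proof assumes the segments are disjoint and builds the quadrilateral-like polygon bounded by the chain $\ell,\ldots,h$ in $[i,j]$, the chain $\min(\ell',h'),\ldots,\max(\ell',h')$ in $[i',j']$, and the two segments $\overline{\ell\ell'}$, $\overline{hh'}$; disjointness gives simplicity, convexity of the two regions gives all interior angles at most $180$ degrees, so this polygon is convex and contained in $P$. By Observation~\ref{obs:convexPolygon}, $\ell$ then sees $h'$ and $h$ sees $\ell'$, and whichever of $\ell',h'$ is smaller contradicts $\ell'=\first(\ell,[i',j'])$ or $h'=\first(h,[i',j'])$. Your interleaved case is subsumed by this; to repair your proof you would need a second, non-topological case for the nested order that uses this minimality, at which point you have essentially reproduced the paper's argument.
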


\begin{proof}
Suppose, by way of contradiction, that $\overline{\ell {\ell'}}$ and $\overline{h {h'}}$ do not intersect. Then, $\ell\neq h$ and $\ell'\neq h'$. 
Define a polygon $Q=(V_Q,E_Q)$ by $V_Q=\{\ell,{\ell+1},\ldots,h\}\cup\{\min({\ell'},h'),\ldots,\max(\ell',{h'})\}$ and $E_Q=\{\{t,{t+1}\}: t\in\{\ell,\ldots,h-1\}\}\cup\{\{{t'},{t'+1}\}: t'\in\{\min(\ell',h'),\ldots,\max(\ell',h')-1\}\}\cup\{\{\ell,\ell'\},\{h,h'\}\}\}$. For any vertex $v\in V_Q\setminus\{\ell,h,{\ell'},{h'}\}$, the interior angle at $v$ is the same in $Q$ and $P$. Moreover, for each any $v\in \{\ell,h,{\ell'},{h'}\}$, because $\overline{\ell {\ell'}}$ and $\overline{h {h'}}$ are contained in $P$, the interior angle at $v$ in $Q$ is at most the interior angle at $v$ in $P$. Thus, since $[i,j]$ and $[i',j']$ are convex region of $P$, we have that any interior angle of $Q$ is at most 180 degrees. Moreover, because the line segments $\overline{\ell {\ell'}}$ and $\overline{h {h'}}$ do not intersect, we have that $Q$ is simple. Thus, $Q$ is a convex polygon contained in $P$.
By Observation \ref{obs:convexPolygon}, $h$ sees ${\ell'}$ in $Q$, and $\ell$ sees ${h'}$ in $Q$. In turn, this implies that $h$ sees ${\ell'}$ in $P$, and $\ell$ sees ${h'}$ in $P$. 
If $\ell'<h'$, then ${\ell'}<{h'}=\first(h,[i',j'])$, which is a contradiction. Hence, $\ell'>h'$. However, then ${h'}<{\ell'}=\first(\ell,[i',j'])$, which is a contradiction.
\end{proof}

Now, we analyze the case where the intersection consists of a single point.

\begin{lem}\label{lem:seeUniqueIntersect}
Let $P=(V,E)$ be a simple polygon. Let $[i,j]$ and $[i',j']$ be two distinct maximal convex regions of $P$. Let ${\ell}$ and $h$ be two vertices in $[i,j]$ such that $\ell\leq h$, ${\ell'}=\first(\ell,[i',j'])\neq\nil$, ${h'}=\first(h,[i',j'])\neq\nil$ and the line segments $\overline{\ell {\ell'}}$ and $\overline{h {h'}}$ intersect at a single point. Then, for all $t\in\{\ell,\ell+1,\ldots,h\}$, either $\first(t,[i',j'])=\nil$ or $\min\{{\ell'},{h'}\}\leq\first(t,[i',j'])\leq \max\{{\ell'},{h'}\}$.
\end{lem}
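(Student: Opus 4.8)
The plan is to argue by contradiction, fixing some $t\in\{\ell+1,\dots,h-1\}$ with $t'=\first(t,[i',j'])\neq\nil$ and either $t'<\min\{\ell',h'\}$ or $t'>\max\{\ell',h'\}$, and then derive a contradiction from the fact that $\overline{\ell\ell'}$ and $\overline{hh'}$ meet in exactly one point, call it $p$. The first step is to record the basic geometric picture: since $t$ lies ``between'' $\ell$ and $h$ on the convex region $[i,j]$ and $t'$ lies outside the sub-range $[\min\{\ell',h'\},\max\{\ell',h'\}]$ of the convex region $[i',j']$, the segment $\overline{tt'}$ is forced to cross both $\overline{\ell\ell'}$ and $\overline{hh'}$. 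This is the key claim of the proof and I expect it to be the main obstacle: it requires a careful Jordan-curve-type argument. One way to see it is to look at the simple closed curve formed by walking along the boundary of $P$ from $\ell$ to $h$ (through $t$), then along $\overline{h h'}$, then along the boundary from $h'$ back to $\ell'$ (through $t'$, using the ordering assumption on $t'$ relative to $\ell',h'$ to know $t'$ is on this arc), then along $\overline{\ell'\ell}$; the segment $\overline{tt'}$ starts on one boundary arc and ends on the other, hence must cross the two ``chords'' $\overline{hh'}$ and $\overline{\ell\ell'}$ that close up the region. (All of these segments lie in $P$ because $\ell,h$ see $\ell',h'$ respectively, so $\overline{tt'}$ staying inside is consistent with $t$ seeing $t'$.)

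Next I would use the single-intersection-point hypothesis. Since $\overline{\ell\ell'}$ and $\overline{hh'}$ meet only at the single point $p$, and $\overline{tt'}$ crosses both of them, $\overline{tt'}$ must pass through $p$ (a line segment meeting each of two segments that themselves meet only at $p$, and separating... — more carefully: the two crossing points of $\overline{tt'}$ with $\overline{\ell\ell'}$ and with $\overline{hh'}$ are both points common to the set $\overline{\ell\ell'}\cup\overline{hh'}$ lying on the segment $\overline{tt'}$; if these two crossing points were distinct, the open sub-segment of $\overline{tt'}$ between them would have to exit and re-enter the region bounded by the chords, contradicting convexity of the quadrilateral-like region $Q$ built from $\ell,h,h',\ell'$ as in Lemma~\ref{lem:seeIntersect}; hence the two crossing points coincide and equal $p$). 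Thus $\overline{tt'}$ passes through $p$, which means $t$ sees $p$ and $p$ sees $t'$ along a straight line, i.e. $t$, $p$, $t'$ are colinear with $p$ between them.

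Finally I would extract the contradiction. Since $p\in\overline{\ell\ell'}$ and $p\in\overline{hh'}$, the point $p$ is visible from $\ell$, from $h$, and (as just shown) from $t$, and likewise $p$ sees $\ell'$, $h'$, $t'$. Now consider the triangle-like region with apex $p$ spanned towards the convex region $[i',j']$: the segments $\overline{p\ell'}$, $\overline{ph'}$, $\overline{pt'}$ all lie in $P$, and because $[i',j']$ is convex, the portion of $[i',j']$ between the extreme visible vertices is itself seen from $p$ (this is where I invoke Lemma~\ref{lem:seeAllBetFin}, or rather its local version via Observation~\ref{obs:convexPolygon} applied to a small convex polygon with apex $p$). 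But $t$ lies on the segment $\overline{pt'}$ extended back from $p$, and $\ell$ lies on $\overline{p\ell'}$ extended, $h$ on $\overline{ph'}$ extended; colinearity through $p$ then means that whichever of $\ell$ or $h$ lies on the same side as the extremal choice of $\{\ell',h'\}$ actually sees a vertex of $[i',j']$ strictly outside the interval it is supposed to witness as $\first$. Concretely: if $t'<\min\{\ell',h'\}$, then following the line through $t$ and $p$ a little past $p$ reaches $t'$, but following instead towards $\ell'$ (or $h'$, whichever is $\min$) and using convexity of $[i',j']$, the corresponding vertex $\ell$ (or $h$) sees $t'$ as well — contradicting $\ell'=\first(\ell,[i',j'])>t'$ (resp. $h'=\first(h,[i',j'])>t'$). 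The case $t'>\max\{\ell',h'\}$ is symmetric, giving that $t$ sees some vertex smaller than $t'$ in $[i',j']$, contradicting $t'=\first(t,[i',j'])$. This closes the proof, and as noted the genuinely delicate part is the Jordan-curve argument forcing $\overline{tt'}$ to cross both chords and hence to pass through $p$.
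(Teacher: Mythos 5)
Your plan has a genuine gap at its central step: the claim that $\overline{tt'}$ must pass through the intersection point $p$. The justification you give --- that if the two crossing points were distinct, the piece of $\overline{tt'}$ between them would have to exit and re-enter ``the region bounded by the chords, contradicting convexity of the quadrilateral-like region built from $\ell,h,h',\ell'$ as in Lemma~\ref{lem:seeIntersect}'' --- does not apply here: that quadrilateral is simple and convex only in the case where $\overline{\ell\ell'}$ and $\overline{hh'}$ do \emph{not} intersect, which is the opposite of the present hypothesis; when the chords cross at $p$, the closed curve through $\ell,h,h',\ell'$ is a bow-tie, not a convex region. Moreover, the claim itself is false as plain plane geometry: a segment can meet two segments whose only common point is $p$ by crossing one of them on one side of $p$ and the other on the other side (with the chords on the coordinate axes and $p$ the origin, the segment from $(1,0.1)$ to $(-0.1,-1)$ meets the positive $x$-axis and the negative $y$-axis but misses the origin). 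So ``$\overline{tt'}$ passes through $p$'' would have to be extracted from the specific configuration, and no such argument is given; the final step, that $\ell$ or $h$ then sees $t'$, rests on this and on an unspecified convex polygon with apex $p$, and is likewise not established --- visibility from $p$ does not transfer to visibility from $\ell$ or $h$ without further work.

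Separately, the crossing claim you flag as the main obstacle needs no Jordan-curve argument: it is an immediate double application of Lemma~\ref{lem:seeIntersect}, once to the pair $\ell\le t$ and once to $t\le h$ (and it holds without your contradiction hypothesis). Your proposed closed curve is in any case not simple: the two chords cross each other at $p$, and the boundary arc from $h'$ to $\ell'$ that contains $t'$ (which lies outside the interval between $\ell'$ and $h'$) wraps around the polygon and re-traverses $\ell,t,h$. The paper's proof avoids all of this: it forms the two wedge polygons $Q$ (apex $p$, base the chain from $\ell$ to $h$) and $Q'$ (apex $p$, base the chain between $\ell'$ and $h'$), shows both are convex and contained in $P$ (the interior angle at $p$ in $Q$ cannot exceed $180$ degrees, since otherwise $\ell'$ and $h'$ would lie in $[i,j]$, contradicting that the two maximal convex regions are distinct), obtains the two crossings from Lemma~\ref{lem:seeIntersect}, and then concludes from the convexity of $Q$ and $Q'$ that a segment from $t\in Q$ meeting both chords forces $t'$ onto the part of the boundary of $Q'$ that coincides with $[i',j']$, i.e.\ $\min\{\ell',h'\}\le t'\le\max\{\ell',h'\}$.
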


\begin{proof}
Suppose that $\ell<h-1$, else the proof is complete.
Let $p$ denote the unique point where $\overline{\ell {\ell'}}$ and $\overline{h {h'}}$ intersect.
Define two polygons as follows (see Fig.~\ref{fig:uniqueIntersection}(A)).
\begin{itemize}
\item The first polygon $Q=(V_Q,E_Q)$ is given by $V_Q=\{\ell,\ldots,h\}\cup\{p\}$ and $E_Q=\{\{t,{t+1}\}: t\in\{\ell,\ldots,h-1\}\}\cup\{\{h,p\},\{p,\ell\}\}$.
\item The second polygon $Q'=(V_{Q'},E_{Q'})$ is given by $V_{Q'}=\{\min({\ell'},h'),\ldots,\max(\ell',{h'})\}\cup\{p\}$ and $E_{Q'}=\{\{{t'},{t'+1}\}: t'\in\{\min({\ell'},h'),\ldots,\max(\ell',{h'})-1\}\}\cup\{\{{h'},p\},\{p,{\ell'}\}\}$.
\end{itemize}
We claim that $Q$ and $Q'$ are convex polygons contained in $P$.
We prove this claim only for $Q$ since the proof for $Q'$ is symmetric. First, since $[i,j]$ is a convex region of $P$, and the line segments $\overline{\ell p}$ and $\overline{h p}$ intersect only at $p$ and are contained in $P$, we have that $Q$ is a simple polygon that is contained in $P$. Moreover, every interior angle at $t$ in $Q$, for all $t\in\{\ell,\ell+1,\ldots,h\}$, is at most the interior angle at $t$ in $P$, and hence it is at most 180 degrees. Now, consider the interior angle at $p$ in $Q$. If this angle were larger than 180 degrees, then ${\ell'}$ and ${h'}$ would have belonged to $[i,j]$ (see Fig.~\ref{fig:uniqueIntersection}(B)), which yields a contradiction since $[i,j]$ and $[i',j']$ are distinct maximal convex regions of $P$. Thus, $Q$ is convex.

Towards the proof that for all $t\in\{\ell+1,\ldots,h-1\}$, either $\first(t,[i',j'])=\nil$ or $\min\{{\ell'},{h'}\}\leq\first(t,[i',j'])\leq \max\{{\ell'},{h'}\}$, choose some $t\in\{\ell+1,\ldots,h-1\}$, and denote $t'=\first(t,[i',j'])$. If $t'=\nil$, then we are done. Thus, suppose that ${t'}\neq\nil$. By Lemma \ref{lem:seeIntersect}, the line segment $\overline{tt'}$ intersects both $\overline{\ell\ell'}$ and $\overline{hh'}$. Since the polygons $Q$ and $Q'$ are convex, and because $t$ belongs to $Q$, this is only possible if $t'$ belongs to the boundary of $Q'$ that coincides with the convex region $[i',j']$ of $P$ (see Fig.~\ref{fig:uniqueIntersection}(C)). From this, we conclude that $\min\{{\ell'},{h'}\}\leq t'\leq \max\{{\ell'},{h'}\}$.
\end{proof}

Secondly, we analyze the case where the intersection consists of more than a single point.

\begin{lem}\label{lem:seeSeveralIntersect}
Let $P=(V,E)$ be a simple polygon. Let $[i,j]$ and $[i',j']$ be two distinct maximal convex regions of $P$. Let ${\ell}$ and $h$ be two vertices in $[i,j]$ such that $\ell\leq h$, ${\ell'}=\first(\ell,[i',j'])\neq\nil$, ${h'}=\first(h,[i',j'])\neq\nil$ and the line segments $\overline{\ell {\ell'}}$ and $\overline{h {h'}}$ intersect at more than one point. Then, for all $t\in\{\ell,\ell+1,\ldots,h\}$, $\min\{{\ell'},{h'}\}=\first(t,[i',j'])=\max\{{\ell'},{h'}\}$.
\end{lem}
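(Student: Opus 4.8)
The hypothesis is that $\overline{\ell\ell'}$ and $\overline{h h'}$ overlap in a segment rather than meeting transversally, which forces the four points $\ell,\ell',h,h'$ to be collinear, lying on one common line $L$. The plan is to first record this collinearity and then extract as much as possible from it. Since $\ell,h\in[i,j]$ with $\ell\le h$ and $\ell',h'\in[i',j']$, and all four lie on $L$, the sub-boundaries of $P$ corresponding to the convex regions between $\ell$ and $h$, and between $\min\{\ell',h'\}$ and $\max\{\ell',h'\}$, are convex chains whose endpoints lie on $L$; since $L$ is a line and the interior of $P$ lies (locally) on one side of each such chain, convexity of the regions forces every vertex of $[i,j]$ strictly between $\ell$ and $h$, and every vertex of $[i',j']$ strictly between $\min\{\ell',h'\}$ and $\max\{\ell',h'\}$, to also lie on $L$ (a convex chain with both endpoints on a supporting line degenerates to that line). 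In particular the relevant portions of both convex regions are straight segments contained in $L\cap P$.

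Next I would argue that, because these two straight portions both lie on the single line $L$ and are pieces of the boundary of the \emph{simple} polygon $P$, they must in fact be the \emph{same} segment of $L$ (two distinct maximal edges of $\partial P$ cannot be collinear and overlapping, and if they were disjoint collinear pieces, the boundary would have to exit and re-enter $L$, contradicting that everything between $\ell$ and $h$, resp.\ between $\ell',h'$, stays on $L$). Chasing the vertex indices, this identification shows that $\ell,h$ and $\ell',h'$ are really describing the same two boundary points, hence $\ell'=\min\{\ell',h'\}$ equals one endpoint and $h'=\max\{\ell',h'\}$ equals the other, and moreover $\ell'=h'$: indeed since $[i,j]$ and $[i',j']$ are \emph{distinct} maximal convex regions, they are separated by reflex vertices, so the overlapping collinear portion cannot contain an edge of $P$ of positive length joining the two regions — the only way the segments $\overline{\ell\ell'}$ and $\overline{hh'}$ can overlap in more than a point while respecting this separation is the degenerate case $\ell'=h'$ (equivalently $\ell=h$ is impossible only if one of them is a single shared point). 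Once $\ell'=h'$, for every $t\in\{\ell,\dots,h\}$ we have $\ell\le t\le h$, both $\ell$ and $h$ see $\ell'=h'$, and by Lemma~\ref{lem:polyInConvex} applied inside $[i,j]$ together with collinearity, $t$ lies on $L$ and on segment $\overline{\ell h}\subseteq\overline{\ell\ell'}$, so $\overline{t\ell'}$ is contained in $P$; hence $\first(t,[i',j'])\le \ell'$. A symmetric argument using that $\ell',h'$ see $\ell$ and $h$ (and Lemma~\ref{lem:seeIntersect} to force the visibility segment from $t$ into $[i',j']$ to pass through the common segment) gives $\first(t,[i',j'])\ge \ell'$, so $\first(t,[i',j'])=\ell'=\min\{\ell',h'\}=\max\{\ell',h'\}$, which is exactly the claim.

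\textbf{Main obstacle.} The delicate point is the geometric rigidity step: converting ``the two segments overlap in more than one point'' into ``both convex regions degenerate to collinear pieces of a single line, in fact giving $\ell'=h'$''. One must carefully use (a) that a convex chain whose two endpoints lie on a line $L$, with the polygon interior on a fixed side, must lie entirely on $L$; (b) that $P$ is simple, so its boundary cannot self-overlap; and (c) that $[i,j]\ne[i',j']$ are maximal, hence separated by reflex vertices, ruling out a long shared collinear edge between them. Getting the case analysis clean — especially handling the boundary sub-cases $\ell=h$ or $\ell'=h'$ at the outset, and ruling out the scenario where the overlap segment has positive length but the two regions are genuinely different — is where the real work lies; the visibility conclusion afterwards is then a routine application of Lemmas~\ref{lem:polyInConvex} and~\ref{lem:seeIntersect}.
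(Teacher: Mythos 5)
Your overall skeleton (collinearity, degeneration to $\ell'=h'$, straightness of the chain from $\ell$ to $h$, then computing $\first(t,[i',j'])$) matches the paper's, but the way you reach the rigidity conclusions is not sound. The principle you invoke, that ``a convex chain with both endpoints on a supporting line degenerates to that line,'' is false (a convex chain may bulge away from the line through its endpoints, as in $(0,0),(5,3),(10,0)$ over the $x$-axis), and nothing in your argument establishes that $L$ supports either chain from the relevant side. The subsequent ``identification'' is worse: the portion of $[i,j]$ between $\ell$ and $h$ and the portion of $[i',j']$ between $\ell'$ and $h'$ can never be ``the same segment'' of the boundary, and $\ell,h$ cannot ``really be'' $\ell',h'$ --- these are distinct vertices in distinct maximal convex regions of a simple polygon. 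The correct route is much simpler: since the two segments overlap in more than one point, their union is a subsegment of $L$ contained in $P$, so $\ell$ sees $h'$ and $h$ sees $\ell'$, and minimality in the definition of $\first$ gives $\ell'\le h'\le\ell'$, i.e.\ $\ell'=h'$; then one segment is a subsegment of the other, so (say) $h$ lies in the relative interior of $\overline{\ell h'}$, and because $h$ and the intermediate vertices of $[i,j]$ are convex while the points of $L$ on both sides of them belong to $P$, their edges are forced onto $L$ (interior angle exactly $180$ degrees). This last fact, not your supporting-line principle, is what entitles you to say later that $t$ lies on $\overline{\ell h}\subseteq\overline{\ell\ell'}$.

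The decisive half of the lemma, however, is simply missing from your proposal: showing that $t$ cannot see a vertex of $[i',j']$ of index smaller than $h'$, i.e.\ $\first(t,[i',j'])\ge h'$. The paper spends most of its proof here: assuming $t'=\first(t,[i',j'])<h'$, it notes that $t$ (interior angle $180$ degrees, both edges on $L$) sees nothing strictly on the far side of $L$, and then constructs a convex polygon on $t',t,h,h'$ together with the chain of $[i',j']$ from $t'$ to $h'$, contained in $P$, so that by Observation~\ref{obs:convexPolygon} $h$ sees $t'$, contradicting $\first(h,[i',j'])=h'$. Your one-line ``symmetric argument,'' with a parenthetical appeal to Lemma~\ref{lem:seeIntersect}, does not do this work: knowing that $\overline{tt'}$ must meet the common segment is not yet an index bound. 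To close it along your line you would need, explicitly: (i) every point $t$ sees lies in the closed half-plane of $L$ that locally contains $P$ at $t$, so if $t'$ were off $L$ then $\overline{tt'}\setminus\{t\}$ would avoid $L$ and hence miss $\overline{hh'}\subseteq L$ (since $t\notin\overline{hh'}$), contradicting Lemma~\ref{lem:seeIntersect} applied to the pair $t\le h$; and (ii) if $t'$ lies on $L$, then $\overline{tt'}$ can only reach $\overline{hh'}$ by running along $L$ past $h$, whence $h$ sees $t'$ and minimality of $\first(h,[i',j'])$ forces $t'\ge h'$. Spelled out, this is a legitimate (and arguably slicker) alternative to the paper's convex-polygon construction, but as written your proposal neither states nor uses these ingredients, so the key step remains a gap.
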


\begin{proof}
Since $[i,j]$ is a convex region of $P$, and because $\overline{\ell {\ell'}}$ and $\overline{h {h'}}$ intersect at more than one point, we have that the interior angle at $t$ in $P$, for all $t\in\{\ell+1,\ldots,h-1\}$, is exactly 180 degrees (see Fig.~\ref{fig:nonUniqueIntersection}(A)).
Then, $\ell$ sees ${h'}$ and $h$ sees ${\ell'}$, which implies that $\ell'=h'$. Thus, one of the line segments $\overline{\ell {h'}}$ and $\overline{h {h'}}$ is a subsegment of the other. Without loss of generality, suppose that $\overline{h {h'}}$ is a subsegment of $\overline{\ell {h'}}$, and that $\overline{\ell {h'}}$ and $\overline{h {h'}}$ are parallel to the $x$ axis. Note that this means that the interior angle at $h$ in $P$ is also 180 degrees.

Suppose that $\ell<h-1$, else the proof is complete. Let $t\in\{\ell+1,\ldots,h-1\}$, and denote $t'=\first(t,[i',j'])$. We need to prove that $t'=h'$. Suppose, by way of contradiction, that $t'\neq h'$. Because $t$ sees $h'$, this means that $t'<h'$. Observe that $t$ sees $h'$, and $t$ does not see any vertex in $[i',j']$ whose $y$-coordinate is lower than the $y$-coordinate of $h'$.  Thus, the $y$-coordinate of ${t'}$ is larger than the one of $t$. Then, the polygon defined by $\overline{{t'} t}, \overline{t h},\overline{h {h'}}$ and $\overline{q(q+1)}$ for all $q\in\{t',\ldots,h'-1\}$ is convex and contained in $P$ (see Fig.~\ref{fig:nonUniqueIntersection}(B)). However, by Observation \ref{obs:convexPolygon}, this means that $h$ sees ${t'}$, and hence $\first(h,[i',j'])$ cannot be equal to ${h'}$ (because $t'<h'$). We have thus reached a contradiction, which concludes the proof.
\end{proof}

\begin{figure}[t]
\centering
\fbox{\includegraphics[scale=0.8]{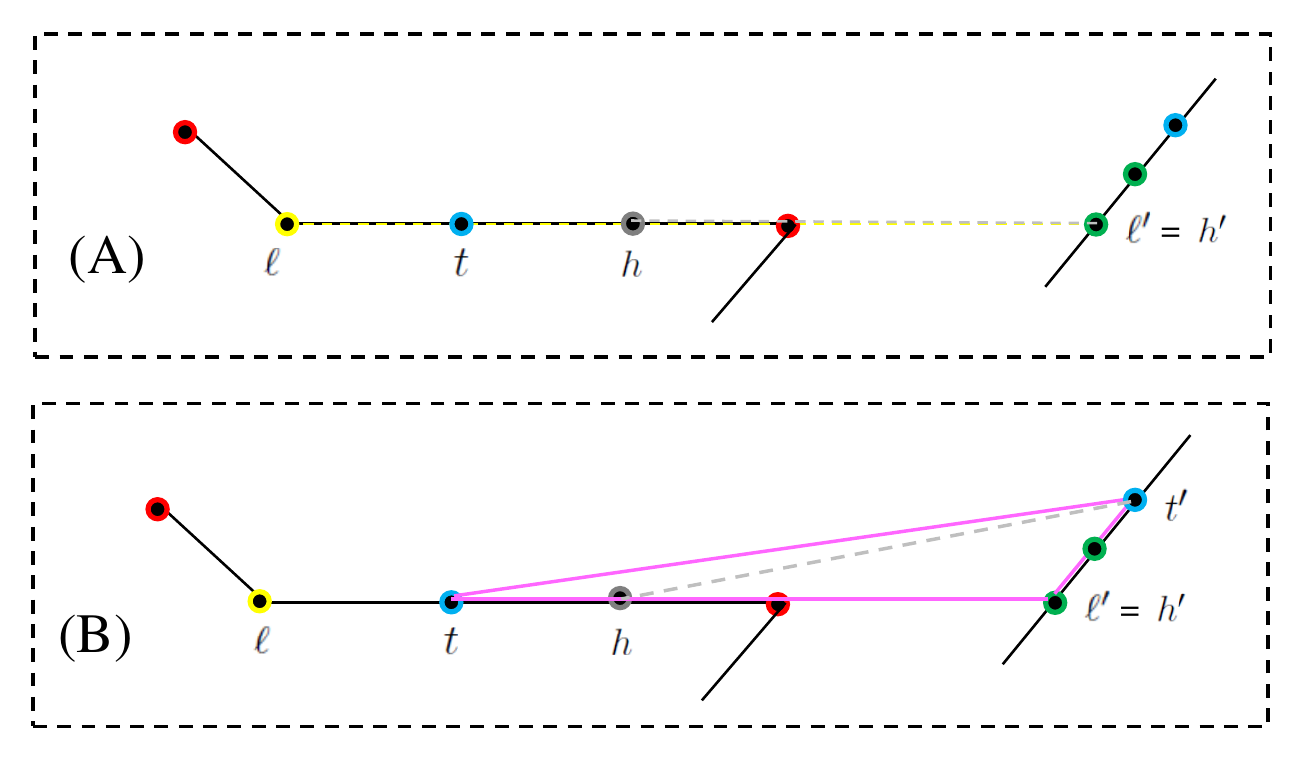}}
\caption{(A) The vertices $\ell,\ell',h,h'$ and $t$ in the proof of Lemma~\ref{lem:seeSeveralIntersect}. (B) The polygon defined in the proof of Lemma~\ref{lem:seeSeveralIntersect}.}\label{fig:nonUniqueIntersection}
\end{figure}

From Lemmas \ref{lem:seeIntersect}, \ref{lem:seeUniqueIntersect} and \ref{lem:seeSeveralIntersect}, we derive the following result.

\begin{lem}\label{lem:seeDistinct}
Let $P=(V,E)$ be a simple polygon. Let $[i,j]$ and $[i',j']$ be two distinct maximal convex regions of $P$. Let ${\ell}$ and $h$ be two vertices in $[i,j]$ such that $\ell\leq h$, ${\ell'}=\first(\ell,[i',j'])\neq\nil$, ${h'}=\first(h,[i',j'])\neq\nil$. Then, for all $t\in\{\ell,\ell+1,\ldots,h\}$, either $\first(t,[i',j'])=\nil$ or $\min\{{\ell'},{h'}\}\leq\first(t,[i',j'])\leq \max\{{\ell'},{h'}\}$.
\end{lem}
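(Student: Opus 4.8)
\textbf{Proof plan for Lemma~\ref{lem:seeDistinct}.} The plan is to obtain this lemma as an essentially immediate corollary of the three preceding lemmas, by a case split on the nature of the intersection of the two line segments $\overline{\ell\ell'}$ and $\overline{hh'}$. First I would dispose of degenerate cases: if $\ell=h$ then $\ell'=h'$ and the statement is trivial (the only $t$ is $\ell$), and if $\ell'=h'$ the conclusion reads $\first(t,[i',j'])\in\{\nil,\ell'\}$, so it suffices to rule out any value strictly between---I will see this falls out of the main argument as well, so no separate treatment is really needed. Assuming $\ell<h$, Lemma~\ref{lem:seeIntersect} guarantees that $\overline{\ell\ell'}$ and $\overline{hh'}$ do intersect, so exactly one of the following holds: they meet in a single point, or they meet in more than one point.

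In the first case (single intersection point), I would simply invoke Lemma~\ref{lem:seeUniqueIntersect}, whose hypotheses match ours verbatim, to conclude that for every $t\in\{\ell,\ldots,h\}$ either $\first(t,[i',j'])=\nil$ or $\min\{\ell',h'\}\le\first(t,[i',j'])\le\max\{\ell',h'\}$---which is precisely the desired conclusion. In the second case (intersection in more than one point), I would invoke Lemma~\ref{lem:seeSeveralIntersect}, which gives the even stronger statement that $\min\{\ell',h'\}=\first(t,[i',j'])=\max\{\ell',h'\}$ for all $t\in\{\ell,\ldots,h\}$ (in particular $\first(t,[i',j'])\neq\nil$ throughout); this trivially implies the weaker sandwiching inequality we want. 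Combining the two cases completes the proof.

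Since the entire argument is just a dispatch to the three prior lemmas, there is essentially no obstacle here---the only thing to be careful about is that the case distinction ``single point versus more than one point'' is exhaustive \emph{given} that the segments intersect at all, which is exactly what Lemma~\ref{lem:seeIntersect} supplies, and that the boundary vertices $t=\ell$ and $t=h$ are already covered because the stated ranges include them (with $\first(\ell,[i',j'])=\ell'$ and $\first(h,[i',j'])=h'$ both lying trivially within $[\min\{\ell',h'\},\max\{\ell',h'\}]$). The real work has been done in Lemmas~\ref{lem:seeIntersect}--\ref{lem:seeSeveralIntersect}; this lemma merely packages it. I would therefore keep the write-up to a few lines.

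\begin{proof}
If $\ell=h$ then $\ell'=h'$ and the only vertex $t$ to consider is $\ell$ itself, for which the claim is immediate. So assume $\ell<h$. By Lemma~\ref{lem:seeIntersect}, the line segments $\overline{\ell\ell'}$ and $\overline{hh'}$ intersect. If they intersect at a single point, then Lemma~\ref{lem:seeUniqueIntersect} directly yields that for all $t\in\{\ell,\ell+1,\ldots,h\}$, either $\first(t,[i',j'])=\nil$ or $\min\{\ell',h'\}\le\first(t,[i',j'])\le\max\{\ell',h'\}$, as desired. Otherwise, they intersect at more than one point, and Lemma~\ref{lem:seeSeveralIntersect} yields the stronger conclusion that $\min\{\ell',h'\}=\first(t,[i',j'])=\max\{\ell',h'\}$ for all $t\in\{\ell,\ell+1,\ldots,h\}$, which in particular implies the claimed inequality. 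In both cases the statement holds.
\end{proof}
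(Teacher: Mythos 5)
Your proposal is correct and matches the paper's approach exactly: the paper derives Lemma~\ref{lem:seeDistinct} directly from Lemmas~\ref{lem:seeIntersect}, \ref{lem:seeUniqueIntersect} and \ref{lem:seeSeveralIntersect} via the same dispatch (the segments intersect, and the single-point and multi-point cases are handled by the latter two lemmas, respectively). Your write-up simply makes explicit the case analysis the paper leaves implicit.
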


\medskip
\noindent{\bf Proof of the first statement of Lemma \ref{lem:mainStructF}.} 
Suppose, by way of contradiction, that the way in which $[i,j]$ views $[i',j]$ with respect to $\first$ is neither non-decreasing nor non-increasing. Then, there exist $x,y,z\in\{i,i+1,\ldots,j\}$ such that $x<y<z$, $\first(x,[i',j'])\neq\nil$, $\first(z,[i',j'])\neq\nil$, and
\begin{enumerate}
\item\label{cond1} $\max\{\first(x,[i',j']),\first(z,[i',j'])\}<\first(y,[i',j'])$, or
\item\label{cond2} $\min\{\first(x,[i',j']),\first(z,[i',j'])\}>\first(y,[i',j'])$, or
\item\label{cond3} $\first(x,[i',j']) = \first(z,[i',j'])$ and $\first(y,[i',j'])=\nil$.
\end{enumerate}

If $\first(x,[i',j']) = \first(z,[i',j'])$, then by Lemma \ref{lem:seeAllBetFin}, $\first(x,[i',j'])$ sees $t$ for all $t\in\{x,x+1,\ldots,z\}$. Thus, the third condition cannot be satisfied.
If $[i,j]\neq [i',j']$, then Lemma \ref{lem:seeDistinct} implies that neither of the first two conditions can be satisfied. Otherwise, if $[i,j]=[i',j']$, then Lemma \ref{lem:seeSame} implies that neither of the first two conditions can be satisfied. Thus, we necessarily reach a contradiction.\qed

%


\subsection{Turing Reduction to Structured Art Gallery}\label{sec:structured}
An intermediate step in our reduction from {\sc Art Gallery} to {\sc Monotone 2-CSP} addresses an annotated version of {\sc Art Gallery}, called {\sc Structured Art Gallery}. Intuitively, in {\sc Structured Art Gallery} each convex region ``announces'' how many guards it should contain, and how many guards are to be used to see it completely. In addition, each convex region announces by which {\em unknown guard} (identified as ``the $i^{th}$ guard to be placed on region $C$'' for some $i$ and $C$) its prefix should be guarded, by which unknown guard a region after this prefix should be guarded, and so on.
In what follows, we formally define the {\sc Structured Art Gallery} problem; then, we present our reduction from {\sc Art Gallery} to {\sc Structured Art Gallery}, and afterwards argue that this reduction is correct.
For a polygon $P$, let ${\cal C}(P)$ be the set of maximal convex regions of $P$. Note that $|{\cal C}(P)|\leq r$. 

\medskip
\noindent{\bf Problem Definition.} The input of {\sc Structured Art Gallery} consists of a simple polygon $P=(V,E)$, a non-negative integer $k<r$, and the following functions (see Fig.~\ref{fig:turing1}).
\begin{itemize}
\item $\ig: {\cal C}(P)\cup\reflex(P)\rightarrow \{0,\ldots,k\}$, where $\sum_{x\in {\cal C}(P)\cup\reflex(P)}\ig(x)\leq k$. Intuitively, for a convex region or reflex vertex $x$, $\ig$ assigns the number of guards to be placed in $x$.

\item  $\og: {\cal C}(P)\cup\reflex(P)\rightarrow \{1,\ldots,k\}$, where for all $x\in\reflex(P)$, $\og(x)=1$.  Intuitively, for a convex region or reflex vertex $x$, $\og$ assigns the number of guards required to see $x$.

\item For each $x\in {\cal C}(P)\cup\reflex(P)$, $\how_x: \{1,\ldots,\og(x)\}\rightarrow ({\cal C}(P)\cup\reflex(P))\times\{1,\ldots,k\}$, where for each $(y,i)$ in the image of $\how_x$, $i\leq \ig(y)$. Intuitively, for any $j\in\{1,\ldots,\og(x)\}$, $\how_x(j)=(y,i)$ indicates that the $j^{th}$ guard required to see $x$ is the $i^{th}$ guard placed~in~$y$.
\end{itemize}

\begin{figure}[t]
\centering
\fbox{\includegraphics[scale=0.8]{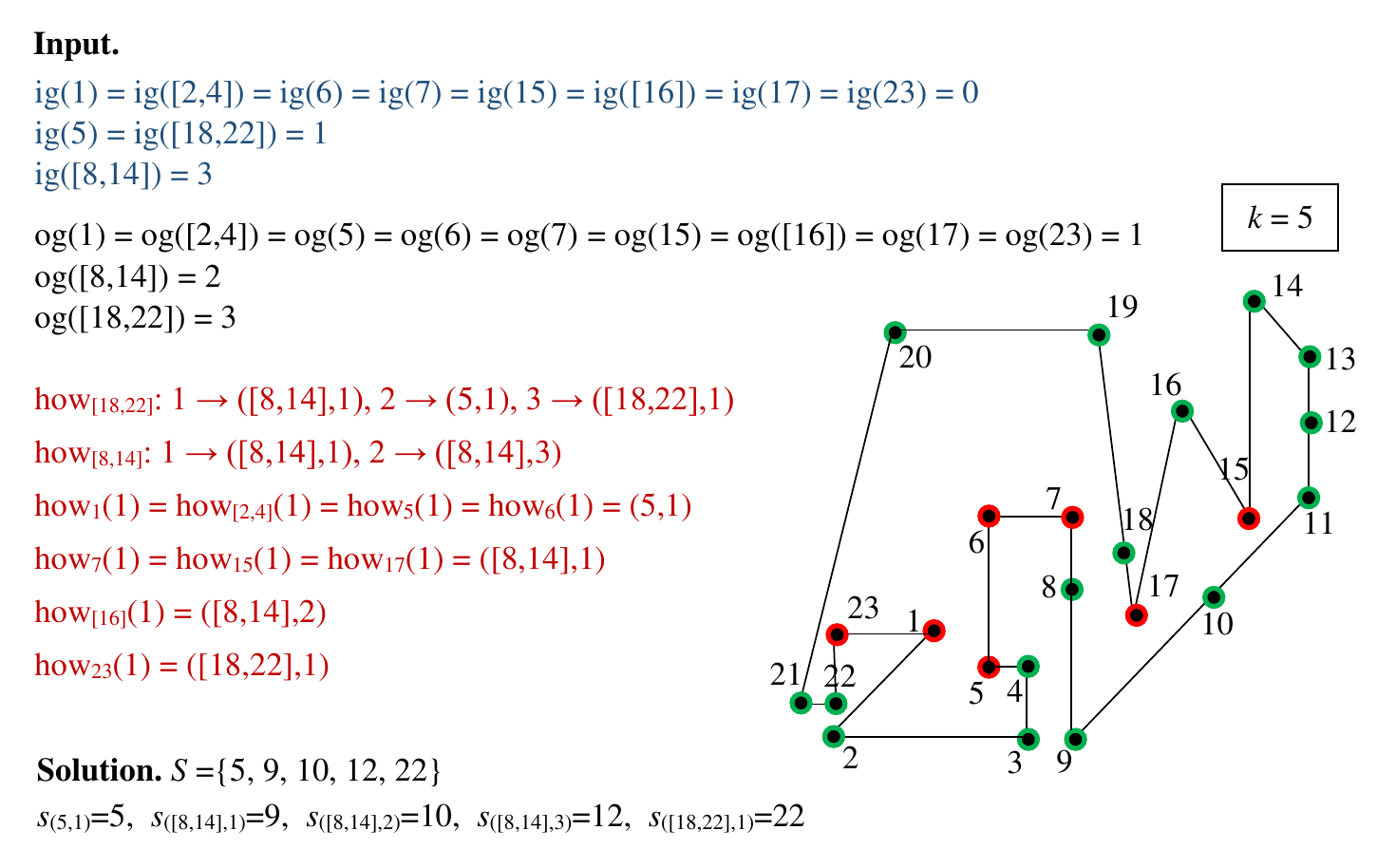}}
\caption{An input and a solution for the {\sc Structured Art Gallery} problem.}\label{fig:turing1}
\end{figure}

The objective of {\sc Structured Art Gallery} is to determine whether there exists a set $S\subseteq V$ of size at most $k$ such that the following conditions hold:
\begin{enumerate}
\item\label{condition:structured1} For each $x\in {\cal C}(P)\cup\reflex(P)$, $|S\cap x|=\ig(x)$.\footnote{If $x\in\reflex(P)$, by $S\cap x$ we mean $S\cap\{x\}$.} Accordingly, for each $x\in {\cal C}(P)\cup\reflex(P)$ and $i\in\{1,\ldots,\ig(x)\}$, let $s_{(x,i)}$ denote the $i^{th}$ largest vertex in $S\cap x$ (see Fig.~\ref{fig:turing1}).
\item\label{condition:structured2}  For each $x\in\reflex(P)$, $s_{\how_x(1)}$ sees $x$.
\item\label{condition:structured3}  For each $C\in {\cal C}(P)$, the following conditions hold:
	\begin{enumerate}
	\item\label{condition:structured3a} $\first(s_{\how_C(1)},C)$ is the smallest vertex in $C$.
	\item\label{condition:mid} For every $t\in \{1,\ldots,\og(C)-1\}$, denote $i=\last(s_{\how_C(t)},C)$, $j=\first(s_{\how_C(t+1)},C)$ and $q=\last(s_{\how_C(t+1)},C)$. Then, {\em (i)} $i\geq j-1$, and {\em (ii)} $i\leq q-1$. (See Fig.~\ref{fig:turing2}.)
	\item\label{condition:structured3c} $\last(s_{\how_C(\og(C))},C)$ is the largest vertex in $C$.
	\end{enumerate}
\end{enumerate}
Informally, Condition \ref{condition:mid} states that {\em (i)} the last vertex in $C$ seen by its $t^{th}$ guard should be at least as large as the predecessor of the first vertex in $C$ seen by its $(t+1)^{th}$ guard, and {\em (ii)} the last vertex in $C$ seen by its $t^{th}$ guard should be smaller than the last vertex in $C$ seen by its $(t+1)^{th}$ guard. The first condition ensures that no unseen ``gaps'' are created within $C$, while the second condition ensures that as the index $t$ grows larger, the last vertex seen by the $t^{th}$ guard grows larger as well. (The second condition will be part of our transition towards the interpretation of the objective of {\sc Art Gallery} by {\em binary} constraints.)

\begin{figure}[t]
\centering
\fbox{\includegraphics[scale=0.8]{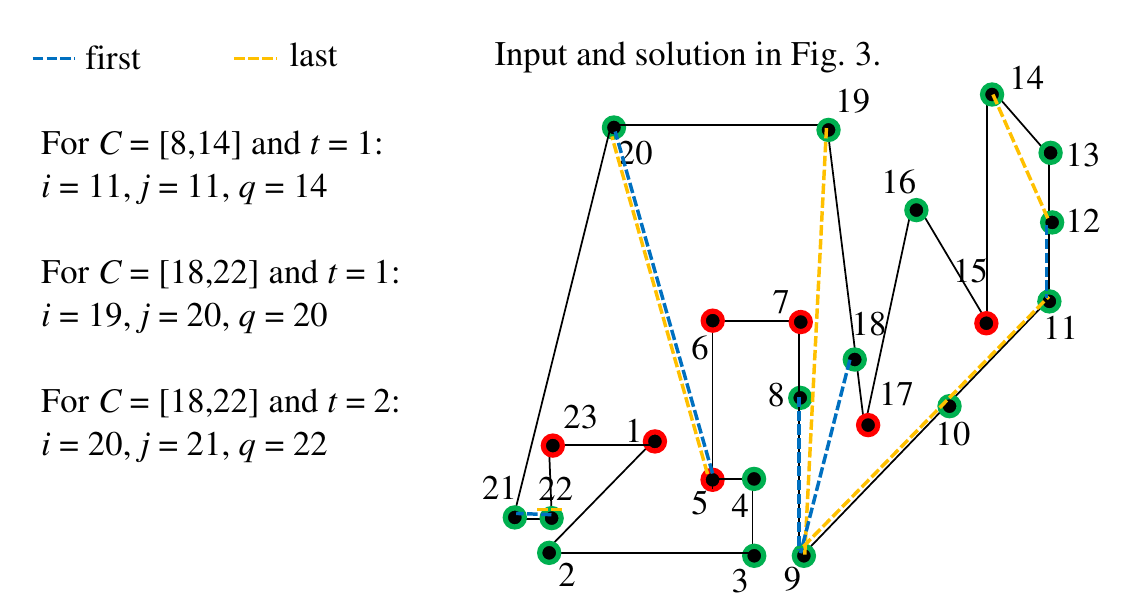}}
\caption{Condition \ref{condition:mid} satisfied by a solution for {\sc Structured Art Gallery}.}\label{fig:turing2}
\end{figure}

\medskip
\noindent{\bf Turing Reduction.} Given an instance $(P,k)$ of {\sc Art Gallery}, in case $r\leq k$, output \Yes.\footnote{To comply with the formal definition of a Turing reduction, by \Yes\ we mean a set with a single trivial \Yes-instance of {\sc Structured Art Gallery}.}
Otherwise, the output of the reduction, $\red(P,k)$, is the set of all instances $(P,k,\ig,\og,\{\how_x\}|_{x\in {\cal C}(P)\cup\reflex(P)})$ of {\sc Structured Art Gallery}.

Observe that $|{\cal C}(P)\cup\reflex(P)|\leq 2r$, and therefore the number of possible functions $\ig$ is upper bounded by $(k+1)^{2r}$, the number of possible functions $\og$ is upper bounded by $k^{2r}$, and for each $x\in {\cal C}(P)\cup\reflex(P)$, the number of possible functions $\how_x$ is upper bounded by $(2rk)^k$. Hence, the number of instances produced is upper bounded by $(k+1)^{2r}\cdot k^{2r}\cdot ((2rk)^k)^{2r}$. When $k\leq r$, this number is upper bounded by $r^{\OO(r^2)}$. Moreover, the instances in $\red(P,k)$ can be enumerated with polynomial delay. Thus, 

\begin{obs}\label{obs:redToStructTime}
Let $(P,k)$ be an instance of {\sc Art Gallery}. Then, $|\red(P,k)|=r^{\OO(r^2)}$, and $\red(P,k)$ is computable in time $r^{\OO(r^2)}n^{\OO(1)}$.
\end{obs}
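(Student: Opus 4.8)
The plan is a straightforward counting argument, since $\red(P,k)$ is defined to be the set of \emph{all} syntactically valid inputs of {\sc Structured Art Gallery} on $P$ with parameter $k$. First I would dispose of the degenerate case $r\le k$: there $\red(P,k)$ is a single trivial \Yes-instance, so $|\red(P,k)|=1$ and the reduction runs in $n^{\OO(1)}$ time, and both claims hold trivially. Hence I may assume $k<r$ (which is anyway forced by the definition of {\sc Structured Art Gallery}).

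Next I would bound the number of tuples $(\ig,\og,\{\how_x\}_{x\in {\cal C}(P)\cup\reflex(P)})$ that can appear. Writing $D={\cal C}(P)\cup\reflex(P)$, we have $|D|\le 2r$ because $|{\cal C}(P)|\le r$. A function $\ig\colon D\to\{0,\ldots,k\}$ is one of at most $(k+1)^{|D|}\le (k+1)^{2r}$ objects; a function $\og\colon D\to\{1,\ldots,k\}$ is one of at most $k^{2r}$ objects; and for each fixed $x\in D$, the map $\how_x$ sends $\{1,\ldots,\og(x)\}$, a subset of $\{1,\ldots,k\}$, into $D\times\{1,\ldots,k\}$, so there are at most $(2rk)^k$ choices for it and hence at most $\big((2rk)^k\big)^{2r}$ choices for the whole family. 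The remaining requirements in the problem definition ($\sum_{x\in D}\ig(x)\le k$, $\og(x)=1$ for reflex $x$, and $i\le\ig(y)$ for every $(y,i)$ in the image of $\how_x$) only \emph{restrict} the admissible tuples, so they do not increase the count. Multiplying, $|\red(P,k)|\le (k+1)^{2r}\cdot k^{2r}\cdot\big((2rk)^k\big)^{2r}$, and since $k<r$ every factor is $r^{\OO(r^2)}$; therefore $|\red(P,k)|=r^{\OO(r^2)}$.

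For the time bound I would observe that ${\cal C}(P)$ and $\reflex(P)$ are computable in $n^{\OO(1)}$ time, after which the tuples above can be generated in a fixed (say lexicographic) order over the stated domains and ranges, with polynomial delay between consecutive outputs, skipping those tuples that violate one of the efficiently checkable side constraints. Since each emitted instance has description size $n^{\OO(1)}$ and there are $r^{\OO(r^2)}$ of them, the whole enumeration runs in $r^{\OO(r^2)}n^{\OO(1)}$ time.

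I do not expect any genuine obstacle here: the statement is pure bookkeeping on the definition of $\red$. The only mild points requiring care are to remember that $\red(P,k)$ consists of all \emph{syntactically} valid inputs, so the side constraints are checked during enumeration rather than assumed, and not to overlook the degenerate case $r\le k$ where the output is a single trivial instance.
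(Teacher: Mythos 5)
Your proposal is correct and follows essentially the same argument as the paper: bound $|{\cal C}(P)\cup\reflex(P)|$ by $2r$, count at most $(k+1)^{2r}$ choices for $\ig$, $k^{2r}$ for $\og$, and $\big((2rk)^k\big)^{2r}$ for the family $\{\how_x\}$, conclude $r^{\OO(r^2)}$ instances since $k<r$, and enumerate with polynomial delay. Your explicit handling of the degenerate case $r\le k$ and the remark that the side constraints only restrict the count are fine additions but do not change the argument.
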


\medskip
\noindent{\bf Correctness.} Our proof of correctness crucially relies on Lemma \ref{lem:seeAllBetFin} and Proposition \ref{prop:reflex}. 

\begin{lem}\label{lem:correct-Turing-reduction-vvag}
An instance $(P,k)$ is a \Yes-instance of {\sc Art Gallery} if and only if there is a \Yes-instance of {\sc Structured Art Gallery} in $\red(P,k)$.
\end{lem}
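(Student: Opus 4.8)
The plan is to prove both directions of the equivalence by translating between an arbitrary solution to {\sc Art Gallery} and a solution to one of the structured instances. First note that if $r \le k$, then by Proposition~\ref{prop:reflex} the set $\reflex(P)$ of size $r \le k$ guards all points within $P$, and in particular all vertices, so $(P,k)$ is a \Yes-instance; on the other side $\red(P,k)$ was defined to be (a set containing) a trivial \Yes-instance, so the equivalence holds in this corner case. Hence assume $k < r$ from now on, which matches the constraint $k<r$ in the definition of {\sc Structured Art Gallery}.

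For the forward direction, suppose $S \subseteq V$ with $|S| \le k$ guards all vertices of $P$. I would first shrink $S$ if necessary so that $|S \cap x|$ is ``honest'' for each $x \in \C{C}(P)\cup\reflex(P)$, and then read off the functions: set $\ig(x) = |S \cap x|$ (and $\ig(x)=1$ or $0$ for reflex $x$ depending on whether $x \in S$), so that $\sum_x \ig(x) \le k$. For each convex region $C$, consider the guards of $S$ that see at least one vertex of $C$; using Lemma~\ref{lem:seeAllBetFin} each such guard sees a contiguous block $[\first(s,C),\last(s,C)]$ of $C$, and since $S$ covers all of $C$ these blocks cover $\{ \text{smallest vertex of } C, \ldots, \text{largest vertex of } C\}$. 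I would greedily pick a minimal subsequence of guards $g_1,\dots,g_{\og(C)}$ ordered so that $\first(g_1,C)$ is the smallest vertex, $\last(g_{\og(C)},C)$ is the largest vertex, and consecutive blocks overlap or abut (this is the standard interval-cover greedy: repeatedly take the guard reaching furthest to the right among those whose block starts at or before one past the current coverage). Minimality of the greedy choice gives exactly that each new guard strictly extends the right endpoint, which is Condition~\ref{condition:mid}(ii), while the ``overlap or abut'' property is Condition~\ref{condition:mid}(i); Conditions~\ref{condition:structured3a} and \ref{condition:structured3c} are the endpoints. Set $\og(C)$ to be this number and $\how_C(t) = (y,i)$ where $g_t$ is the $i^{th}$ largest vertex of $S\cap y$. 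For a reflex vertex $x$, pick any guard in $S$ seeing $x$ (one exists since $S$ covers all vertices, and $x$ sees itself so if $x\in S$ take $x$ itself), set $\og(x)=1$ and define $\how_x(1)$ accordingly. With these functions, the very same set $S$ witnesses that this instance of {\sc Structured Art Gallery} is a \Yes-instance.

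For the backward direction, suppose some instance $(P,k,\ig,\og,\{\how_x\}) \in \red(P,k)$ has a solution $S$ of size at most $k$ satisfying Conditions~\ref{condition:structured1}--\ref{condition:structured3}. I claim $S$ guards every vertex of $P$. Indeed, every vertex lies either in some maximal convex region $C \in \C{C}(P)$ or is a reflex vertex. If it is a reflex vertex $x$, Condition~\ref{condition:structured2} says $s_{\how_x(1)}$ sees $x$. If it is a vertex $v$ in a maximal convex region $C$, then by Conditions~\ref{condition:structured3a}, \ref{condition:mid}(i), \ref{condition:structured3c} the blocks $[\first(s_{\how_C(t)},C),\last(s_{\how_C(t)},C)]$ for $t = 1,\dots,\og(C)$ start at the smallest vertex of $C$, never leave a gap (each next block starts at most one past where the previous ended — and here Condition~\ref{condition:mid}(ii) guarantees the right endpoints are strictly increasing so we genuinely advance and terminate), and end at the largest vertex of $C$; hence $v$ lies in one of these blocks, say the $t^{th}$, so by Lemma~\ref{lem:seeAllBetFin} the guard $s_{\how_C(t)}$ sees $v$. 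Thus $S$ of size $\le k$ guards all vertices, so $(P,k)$ is a \Yes-instance of {\sc Art Gallery}.

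The main obstacle I anticipate is the forward direction: extracting the functions $\og$, $\how_x$ and, crucially, the subsequences of guards realizing each convex region so that \emph{both} parts of Condition~\ref{condition:mid} hold simultaneously. The strict-increase Condition~\ref{condition:mid}(ii) is what the greedy interval cover is designed to produce, but one must check carefully that Lemma~\ref{lem:seeAllBetFin} really does make each guard's visibility within $C$ a contiguous interval (so the greedy is even well-posed), and that the ``honesty'' cleanup of $S$ does not destroy coverage. The backward direction is comparatively routine: it is essentially a telescoping argument over $t$ combined with one application of Lemma~\ref{lem:seeAllBetFin} per vertex, plus Proposition~\ref{prop:reflex} only for the $r \le k$ degenerate case.
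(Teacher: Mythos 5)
Your proposal is correct and takes essentially the same route as the paper's proof: in the forward direction you read off $\ig$, $\og$ and $\how$ from a solution $S$ and sweep each convex region greedily, relying on Lemma~\ref{lem:seeAllBetFin} to make each guard's view a contiguous block so that Conditions~\ref{condition:structured3a}--\ref{condition:structured3c} hold, and in the reverse direction you chain the blocks (the paper phrases this as an induction over the vertices of the region) and use Proposition~\ref{prop:reflex} for the $k\geq r$ corner case. The only cosmetic differences are that the paper's greedy picks an arbitrary guard seeing the first still-uncovered vertex rather than the furthest-reaching one (both yield Condition~\ref{condition:mid}), and that your preliminary ``shrinking'' of $S$ is unnecessary, since $\ig(x)=|S\cap x|$ already satisfies $\sum_x\ig(x)\leq k$.
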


\begin{proof}
{\bf Forward Direction.} Suppose that $(P,k)$ is a \Yes-instance of {\sc Art Gallery} and that $r>k$. Accordingly, let $S\subseteq V$ be a solution to $(P,k)$. We first define the function $\ig: {\cal C}(P)\cup\reflex(P)\rightarrow \{0,\ldots,k\}$ as follows. For each $x\in {\cal C}(P)\cup\reflex(P)$, let $\ig(x)=|S\cap x|$. Because $|S|\leq k$ (since $S$ is a solution to $(P,k)$), we have that $\sum_{x\in {\cal C}(P)\cup\reflex(P)}\ig(x)\leq k$. For each $x\in {\cal C}(P)\cup\reflex(P)$, we order the vertices in $S\cap x$ from smallest to largest, and denote them accordingly by $s_{(x,1)},s_{(x,2)},\ldots,s_{(x,\ig(x))}$.

Now, we define the functions $\og: {\cal C}(P)\cup\reflex(P)\rightarrow \{1,\ldots,k\}$ and $\how_x: \{1,\ldots,\og(x)\}\rightarrow ({\cal C}(P)\cup\reflex(P))\times\{1,\ldots,k\}$ for all $x\in {\cal C}(P)\cup\reflex(P)$.
For each reflex vertex $x\in\reflex(P)$, define $\og(x)=1$, and $\how_x(1)={(y,i)}$ for some vertex $s_{(y,i)}\in S$ that sees $x$. The existence of such a vertex $s_{(y,i)}$ follows from the assertion that $S$ is a solution to $(P,k)$.
For each convex region $C\in{\cal C}(P)$, define $\og(C)$ and $\how_C$ as follows. Let $W$ denote the set of vertices in $S$ that see at least one vertex in $C$. Since $W$ sees $C$, there exists a vertex in $W$ that sees the smallest vertex in $C$. Pick such a vertex arbitrarily and denote it by $w_1$. Now, if $w_1$ does not see the largest vertex in $C$, then there exists a vertex in $W$ that sees the smallest vertex in $C$ that is larger than the largest vertex seen by $w_1$. We pick such a vertex arbitrarily, and denote it by $w_2$. Next, if $w_2$ does not see the largest vertex in $C$, then there exists a vertex in $W$ that sees the smallest vertex in $C$ that is larger than the largest vertex seen by $w_2$. We pick such a vertex arbitrarily, and denote it by $w_3$. Similarly, we define $w_4,w_5,\ldots,w_p$, for the appropriate $p\in\{1,\ldots,k\}$ (see Fig.~\ref{fig:selection}). Here, the supposition that $p\leq k$ follows from Lemma \ref{lem:seeAllBetFin}, which implies that $w_i\neq w_j$ for all distinct $i,j\in\{1,\ldots,p\}$.
We define $\og(C)=p$, and for all $t\in\{1,\ldots,\og(C)\}$, we define $\how_C(t)=(y,i)$ for the pair $(y,i)\in ({\cal C}(P)\cup\reflex(P))\times\{1,\ldots,k\}$ that satisfies $w_t=s_{(y,i)}$.

\begin{figure}[t]
\centering
\fbox{\includegraphics[scale=0.75]{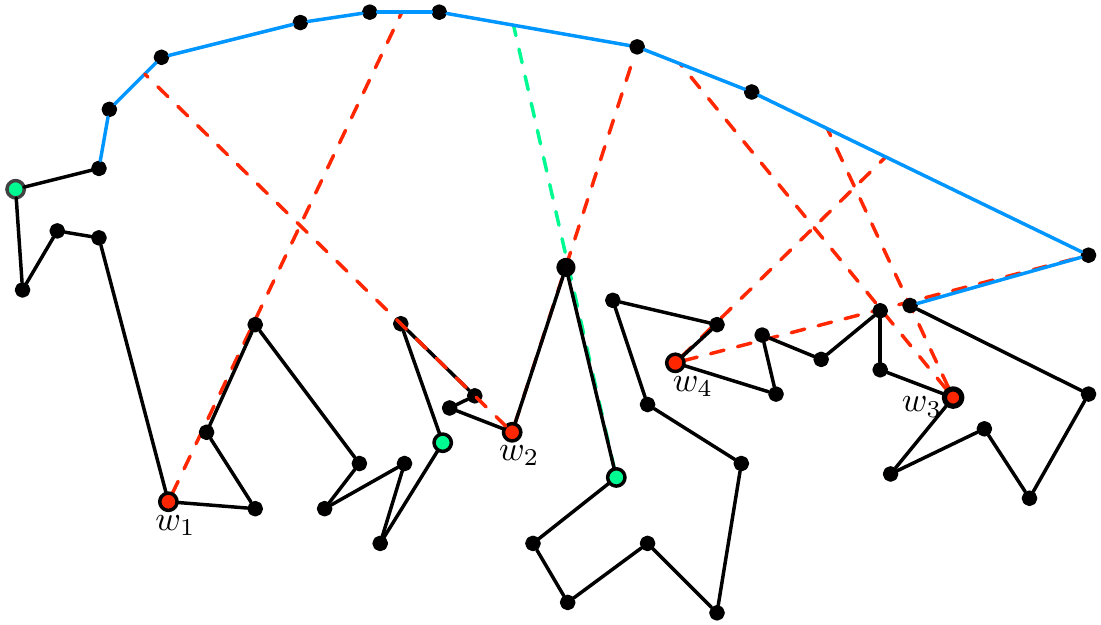}}
\caption{Example of a possible selection of $w_1,w_2,\ldots,w_p$. Solution vertices are colored green and red, and $C$ is colored blue.}\label{fig:selection}
\end{figure}

Our definitions directly ensure that for each $C\in {\cal C}(P)$, the following conditions hold:
	\begin{enumerate}
	\item $\first(s_{\how_C(1)},C)$ is the smallest vertex in $C$.
	\item For every $t\in \{1,\ldots,\og(C)-1\}$, denote $i=\last(s_{\how_C(t)},C)$, $j=\first(s_{\how_C(t+1)},C)$ and $q=\last(s_{\how_C(t+1)},C)$. Then, {\em (i)} $i\geq j-1$, and {\em (ii)} $i\leq q-1$.
	\item $\last(s_{\how_C(\og(C))},C)$ is the largest vertex in $C$.
	\end{enumerate}
By the arguments above, $I=(P,k,\ig,\og,\{\how_x\}|_{x\in {\cal C}(P)\cup\reflex(P)})$ is an instance of {\sc Structured Art Gallery}, and $S$ is a solution to $I$. Since $I\in\red(P,k)$, the proof of the forward direction is complete.

\medskip
\noindent{\bf Reverse Direction.} If $k\geq r$, then we output \Yes\ (or rather a trivial \Yes-instance), and by Proposition \ref{prop:reflex}, indeed the input is a \Yes-instance\ as well. Next, suppose that $k<r$, and there is a \Yes-instance $I=(P,k,\ig,\og,\{\how_x\}|_{x\in {\cal C}(P)\cup\reflex(P)})$ in $\red(P,k)$. Accordingly, let $S\subseteq V$ be a solution to $I$. Then, $|S|\leq k$. Thus, to prove that $(P,k)$ is a \Yes-instance of {\sc Art Gallery}, it suffices to show that $S$ sees $V$.
For each $x\in\reflex(P)$, $s_{\how_x(1)}$ sees $x$, and therefore $S$ sees $\reflex(P)$.

Now, we show that $S$ sees $\con(P)$. To this end, we choose a convex region $[i,j]\in {\cal C}(P)$, and show that $S$ sees $[i,j]$. Specifically, for each $p\in\{i,\ldots,j\}$, we prove that there is $t\in\{1,\ldots,\og([i,j])\}$ such that $s_{\how_{[i,j]}(t)}$ (which is a vertex in $S$) sees $p$. The proof is by induction on $p$.
In the basis, where $p=i$, correctness follows from the assertion that $\first(s_{\how_{[i,j]}(1)},{[i,j]})$ is the smallest vertex in $[i,j]$.
Now, we suppose that the claim is correct for $p$, and prove it for ${p+1}$. By the inductive hypothesis, there is $t\in\{1,\ldots,\og([i,j])\}$ such that $s_{\how_{[i,j]}(t)}$ sees $p$. If $s_{\how_{[i,j]}(t)}$ sees ${p+1}$, then we are done. Thus, we now suppose that $s_{\how_{[i,j]}(t)}$ does not see ${p+1}$. Then, $\last(s_{\how_{[i,j]}(t)},[i,j])=p$. We have two~cases:
\begin{itemize}
\item First, consider the case where $t<\og([i,j])$. Then, because $S$ is a solution to $I$, the vertex $p=\last(s_{\how_{[i,j]}(t)},[i,j])$ is larger or equal to $d-1$ for $d=\first(s_{\how_{[i,j]}(t+1)},[i,j])$. This means that $\first(s_{\how_{[i,j]}(t+1)},[i,j])\leq {p+1}$. Moreover, $p$ is smaller than the vertex $\last(s_{\how_{[i,j]}(t+1)},[i,j])$. Thus, ${p+1}\leq\last(s_{\how_{[i,j]}(t+1)},[i,j])$. Then, $\first(s_{\how_{[i,j]}(t+1)},[i,j])$ $\leq {p+1}\leq\last(s_{\how_{[i,j]}(t+1)},[i,j])$. By Lemma \ref{lem:seeAllBetFin}, this means that $s_{\how_{[i,j]}(t+1)}$ sees ${p+1}$.

\item Second, consider the case where $t=\og([i,j])$. In this case, because $S$ is a solution to $I$, we have that $\last(s_{\how_{[i,j]}(\og([i,j]))},[i,j])$ is the largest vertex in $[i,j]$. Thus, ${p+1}\leq\last(s_{\how_{[i,j]}(\og([i,j]))},[i,j])$, which is a contradiction.
\end{itemize}
This completes the proof.
\end{proof}

\subsection{Karp Reduction to Monotone 2-CSP}\label{sec:reduction}
We proceed to the second part of our proof, a reduction from {\sc Structured Art Gallery} to {\sc Monotone 2-CSP}.\footnote{CSP is an abbreviation of Constraint Satisfaction Problem, and 2 is the maximum arity of a constraint.} 

\medskip
\noindent{\bf Problem Definition.} The input of {\sc Monotone 2-CSP} consists of a set $X$ of {\em variables}, denoted by $X=\{x_1,x_2,\ldots,x_{|X|}\}$, a set $C$ of {\em constraints}, and $N\in\mathbb{N}$ given in unary. Each constraint $c\in C$ has the form $[x_i \si f(x_j)]$ where $i,j\in\{1,\ldots,|X|\}$, $\si\in\{\geq,\leq\}$ and $f: \{0,\ldots,N\}\rightarrow \{0,\ldots,N\}$ is a monotone function. An assignment $\alpha: X\rightarrow \{0,\ldots,N\}$ {\em satisfies} a constraint $c=[x_i \si f(x_j)]\in C$ if $[\alpha(x_i) \si f(\alpha(x_j))]$ is true. The objective of {\sc Monotone 2-CSP} is to decide if there exists an assignment $\alpha: X\rightarrow\{0,\ldots,N\}$ that satisfies all the constraints in $C$ (see Fig.~\ref{fig:csp}).

\begin{figure}[t]
\centering
\fbox{\includegraphics[scale=0.8]{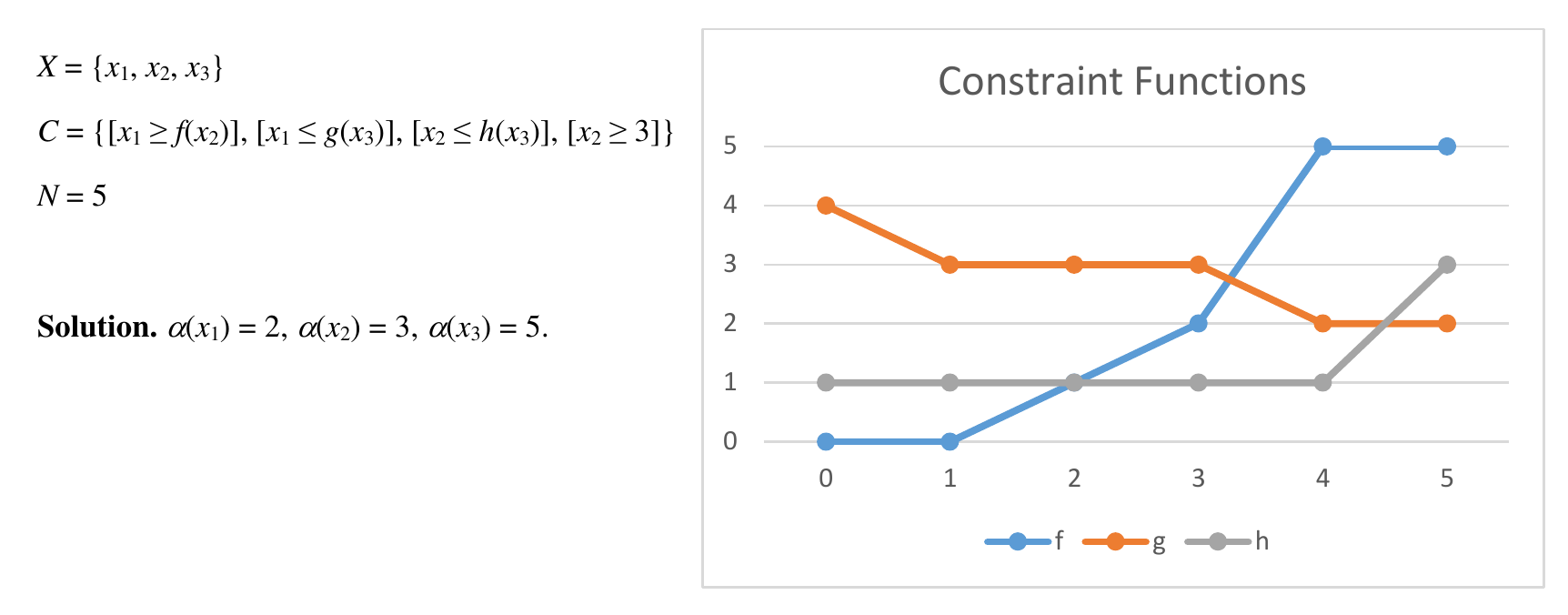}}
\caption{An input for {\sc Monotone 2-CSP} that has a unique solution.}\label{fig:csp}
\end{figure}

If the function $f$ of a constraint $c=[x_i \si f(x_j)]$ is constantly $\beta$ (that is, for every $t\in\{0,\ldots,N\}$, $f(t)=\beta$), then we use the shorthand $c=[x_i \si \beta]$. Moreover, we suppose that every constraint represented by a quadruple is associated with two distinct variables.

\medskip
\noindent{\bf Karp Reduction.} 
Given an instance $I=(P,k,\ig,\og,\{\how_x\}|_{x\in {\cal C}(P)\cup\reflex(P)})$ of {\sc Structured Art Gallery}, define an instance $\red(I)=(X,C,N)$ of {\sc Monotone 2-CSP} as follows. 
Let $k^\star=\sum_{e\in {\cal C}(P)\cup\reflex(P)}\ig(e)$, $X=\{x_1,x_2,\ldots,x_{k^\star}\}$ and $N=n+1$. (Here, $n=|V|$.) Additionally, let $\bij$ be an arbitrary bijective function from $X$ to $\{(e,i): e\in {\cal C}(P)\cup\reflex(P),i\in\{1,\ldots,\ig(e)\}\}$. Intuitively, for any variable $x\in X$ with $\bij(x)=(e,i)$, we think of $x$ as the $i^{\mathrm{th}}$ guard to be placed in region $e$. In particular, the value to be assigned to $x$ is the identity of this guard. The values $0$ and $n+1$ are not identities of vertices in $V$, and we will ensure that no solution assignment assigns them; we note that these two values are useful because they will allow us to exclude assignments that should not be solutions.
Next, we define our constraints and show that their functions are monotone.

\medskip
\noindent{\bf Association.} For each $x\in X$ with $\bij(x)=(e,i)$, we need to ensure that the vertex assigned to $x$ is within the region $e$. To this end, we introduce the following constraints.
\begin{itemize}
\item If $e\in\reflex(P)$, then insert the constraint $[x = e]$. (That is, insert $[x\leq e]$ and $[x \geq e]$.)
\item Else, $\bij(x)=(e,j)$ for $e\in{\cal C}(P)$. Let $\ell$ and $h$ be the smallest and largest vertices in $e$, respectively, and insert the constraints $[x \geq \ell]$ and $[x \leq h]$.
\end{itemize}
Let $A$ denote this set of constraints.

\medskip
\noindent{\bf Order in a convex region.} For all $x,x'\in X$ where $\bij(x)=(C,i)$ and $\bij(x')=(C,j)$ for the same convex region $C\in{\cal C}(P)$ and $i<j$, we need to ensure that the vertex assigned to $x'$ is larger than the one assigned to $x$. To this end, we introduce the constraint $[x' \geq f(x)]$ where $f$ is defined as follows. For all $q\in\{0,\ldots,N-1\}$, $f(q)=q+1$, and $f(N)=N$. Let $O$ denote this set of constraints. We note that the constraints in $A\cup O$ together enforce each variable $x\in X$ with $\bij(x)=(C,i)$ for $C\in{\cal C}(P)$ to be assigned the $i^{th}$ guard placed in $C$.

\medskip
\noindent{\bf Guarding reflex vertices.} For every reflex vertex $y\in\reflex(P)$ with $\how_{y}(1)=(e,i)$, we need to ensure that the vertex assigned to $x=\bij^{-1}(e,i)$ sees $y$. To this end, consider two cases. First, suppose that $e\in\reflex(P)$. Then, {\em (i)} if $e$ does not see $y$, output \No, and {\em (ii)} else, no constraint is introduced. Second, suppose that $e\in {\cal C}(P)$. Denote $\ell=\first(y,e)$ and $h=\last(y,e)$. Then, {\em (i)} if $\ell$ (and thus also $h$) is $\nil$, then output \No, and {\em (ii)} else, introduce the constraints $c^1_{y}=[x \geq \ell]$ and $c^2_{y}=[x \leq h]$.

\medskip
\noindent{\bf Guarding first vertices in convex regions.} For every convex region $C=[q,q']\in{\cal C}(P)$ with $\how_C(1)=(e,i)$, we need to ensure that the vertex assigned to $x=\bij^{-1}(e,i)$ sees $q$, the first vertex of $C$. To this end, consider two cases. First, suppose that $e\in\reflex(P)$. Then, {\em (i)} if $e$ does not see $q$, output \No, and {\em (ii)} else, no constraint is introduced. Second, suppose that $e\in {\cal C}(P)$. Denote $\ell=\first(q,e)$ and $h=\last(q,e)$. Then, {\em (i)} if $\ell$ is $\nil$, then output \No, and {\em (ii)} else, insert the constraints $c^1_{(C,1)}=[x \geq \ell]$ and $c^2_{(C,1)}=[x \leq h]$.

\medskip
\noindent{\bf Guarding last vertices in convex regions.} For every convex region $C=[q,q']\in{\cal C}(P)$ with $\how_C(\og(C))=(e,i)$, we need to ensure that the vertex assigned to $x=\bij^{-1}(e,i)$ sees $q'$, the last vertex of $C$. To this end, consider two cases. First, suppose that $e\in\reflex(P)$. Then, {\em (i)} if $e$ does not see $q'$, output \No, and {\em (ii)} else, no constraint is introduced. Second, suppose that $e\in {\cal C}(P)$. Denote $\ell=\first(q',e)$ and $h=\last(q',e)$. Then, {\em (i)} if $\ell$ is $\nil$, then output \No, and {\em (ii)} else, insert the constraints $c^1_{(C,\og(C))}=[x \geq \ell]$ and $c^2_{(C,\og(C))}=[x \leq h]$.

\medskip
\noindent{\bf Guarding middle vertices in convex regions.} For every convex region $C\in{\cal C}(P)$ and $t\in \{2,\ldots,\og(C)\}$, we introduce four constraints based on the following notation.
\begin{itemize}
\item $(e,\gamma)=\how_C(t)$ and $x=\bij^{-1}(e,\gamma)$. Intuitively, the $t^{\mathrm{th}}$ vertex to guard $C$ should be the $\gamma^{\mathrm{th}}$ guard to be placed in $e$, and its precise identity should be assigned to $x$.
If no vertex in $e$ sees at least one vertex in $C$, then return \No.\footnote{In case $e\in\reflex(P)$, we mean that $e$ itself does not see any vertex in $C$.} Let ${\ell}$ and ${h}$ be the smallest and largest vertices in $e$ that see at least one vertex in $C$, respectively.
\item $(e',\gamma')=\how_C(t-1)$ and $x'=\bij^{-1}(e',\gamma')$. Intuitively, the $(t-1)^{\mathrm{th}}$ vertex to guard $C$ should be the ${\gamma'}^{\mathrm{th}}$ guard to be placed in $e'$, and its precise identity should be assigned to $x'$.
If no vertex in $e'$ sees at least one vertex in $C$, then return \No. Let ${\ell'}$ and ${h'}$ be the smallest and largest vertices in $e'$ that see at least one vertex in $C$, respectively.
\end{itemize}

Now, insert the constraints $\widetilde{c}^1_{(C,t)}=[x \geq \ell]$ and $\widetilde{c}^2_{(C,t)}=[x \leq h]$. Intuitively, these two constraints {\em help} to ensure that $x$ will be assigned a vertex that sees at least one vertex in $C$. However, these constraints alone are insufficient for this task---ensuring that we pick a guard between two vertices that see vertices in $C$ does not ensure that this guard sees vertices in $C$.\footnote{For example, in Fig.~\ref{fig:view}, neither $\first(4,[8,19])$ nor $\first(6,[8,19])$ is $\nil$, but $\first(5,[8,19])=\nil$.} Nevertheless, combined with our final constraints, this task is achieved.

Lastly, we consider two sets of four cases. The first set introduces a constraint to ensure that $x$, which stands for the $t^{\mathrm{th}}$ vertex to guard $C$, should satisfy that the first vertex in $C$ seen by $x$ is smaller or equal than the vertex larger by $1$ than the last vertex in $C$ seen by $x'$, which stands for the $(t-1)^{\mathrm{th}}$ vertex to guard $C$. On the other hand, the second set introduces a constraint to ensure that the last vertex in $C$ seen by $x$ is larger than the last vertex in $C$ seen by $x'$. Together, because views have no ``gaps'', this would imply that $x$ sees the vertex in $C$ that is larger by $1$ than the last vertex in $C$ seen by $x'$.
To unify notation, if $e$ (or $e'$) is a reflex vertex, we say that the way $e$ (or $e'$) views $C$ is non-decreasing with respect to both $\first$ and $\last$. Here, Lemma \ref{lem:mainStructF} ensures that at least one case in each set is satisfied.
We start with the first set of four~cases.
\begin{enumerate}
\item The way $e'$ views $C$ is non-decreasing with respect to $\last$, and the way $e$ views $C$ is non-decreasing with respect to $\first$. We insert a constraint $[x \leq f(x')]$, where $f:\{0,\ldots,N\}\rightarrow\{0,\ldots,N\}$ is defined as follows.
	\begin{itemize}
	\item For all $i<\ell'$: $f(i)=0$.
	\item For $i=\ell',\ell'+1,\ldots,h'$: Denote $a=\last({i},C)$.  We have two subcases.
		\begin{itemize}
		\item	If {\bf(i)} $a=\nil$, {\bf(ii)} ${a+1}\notin C$, or {\bf(iii)} $\first(j,C)\leq {a+1}$ for no $j\in e$, then $f(i)=f(i-1)$.  
		\item Otherwise, let $j$ be the largest vertex in $e$ such that $\first(j,C)\leq {a+1}$, and define $f(i)=j$.
		\end{itemize}
	\item For all $i>h'$: $f(i)=N$.
	\end{itemize}
	{\bf Monotonicity.} We claim that $f$ is monotonically non-decreasing. To show this, we choose some $i\in\{1,\ldots,N\}$. If $i\leq\ell'$ or $i>h'$, then it is clear that $f(i)\geq f(i-1)$. Now, suppose that $\ell'<i\leq h'$. If $a=\nil$, ${a+1}\notin C$ or $\first(j,C)\leq {a+1}$ for no $j\in e$, then it is clear that $f(i)\geq f(i-1)$. Hence, we next suppose that this is not the case. Then, $j$ is well-defined. To prove that $f(i)\geq f(i-1)$, we need to show that $f(i-1)\leq j$. Let $\widehat{i}$ be the largest vertex in $\{\ell',\ldots,i-1\}$ such that ${\widehat{a}}=\last({\widehat{i}},C)\neq\nil$, ${\widehat{a}+1}\in C$, and there is a vertex ${\widehat{j}}\in e$ such that $\first({\widehat{j}},C)\leq {\widehat{a}+1}$. (If such a vertex does not exist, then $f(i-1)=0$, and we are done.) Denote $\widehat{j}=f(\widehat{i})$. Note that it suffices to show that $j\geq \widehat{j}$. Because the way $e'$ views $C$ is non-decreasing with respect to $\last$, we have that ${\widehat{a}}\leq a$. Then, because the way $e$ views $C$ is non-decreasing with respect to $\first$, we have that $j\geq \widehat{j}$.

\item The way $e'$ views $C$ is non-decreasing with respect to $\last$, and the way $e$ views $C$ is non-increasing with respect to $\first$. We insert a constraint $[x \geq f(x')]$, where $f:\{0,\ldots,N\}\rightarrow\{0,\ldots,N\}$ is defined as follows.
	\begin{itemize}
	\item For all $i<\ell'$: $f(i)=N$.
	\item For $i=\ell',\ell'+1,\ldots,h'$: Denote $a=\last({i},C)$.  We have two subcases.
		\begin{itemize}
		\item	If {\bf(i)} $a=\nil$, {\bf(ii)} ${a+1}\notin C$, or {\bf(iii)} $\first(j,C)\leq {a+1}$ for no $j\in e$, then $f(i)=f(i-1)$. 
		\item Otherwise, let $j$ be the smallest vertex in $e$ such that $\first(j,C)\leq {a+1}$, and define $f(i)=j$.
		\end{itemize}
	\item For all $i>h'$: $f(i)=0$.
	\end{itemize}
	{\bf Monotonicity.} We claim that $f$ is monotonically non-increasing. To show this, we choose some $i\in\{1,\ldots,N\}$. If $i\leq\ell'$ or $i>h'$, then it is clear that $f(i)\leq f(i-1)$. Now, suppose that $\ell'<i\leq h'$. If $a=\nil$, ${a+1}\notin C$ or $\first(j,C)\leq {a+1}$ for no $j\in e$, then it is clear that $f(i)\leq f(i-1)$. Hence ,we next suppose that this is not the case. Then, $j$ is well-defined. To prove that $f(i)\leq f(i-1)$, we need to show that $f(i-1)\geq j$. Let $\widehat{i}$ be the largest vertex in $\{\ell',\ldots,i-1\}$ such that ${\widehat{a}}=\last({\widehat{i}},C)\neq\nil$, ${\widehat{a}+1}\in C$, and there is a vertex ${\widehat{j}}\in e$ such that $\first({\widehat{j}},C)\leq {\widehat{a}+1}$. (If such a vertex does not exist, then $f(i-1)=N$, and we are done.) Denote $\widehat{j}=f(\widehat{i})$. Note that it suffices to show that $j\leq \widehat{j}$. Because the way $e'$ views $C$ is non-decreasing with respect to $\last$, we have that ${\widehat{a}}\leq a$. Then, because the way $e$ views $C$ is non-increasing with respect to $\first$, we have that $j\leq \widehat{j}$. 

\item The way $e'$ views $C$ is non-increasing with respect to $\last$, and the way $e$ views $C$ is non-decreasing with respect to $\first$. We insert a constraint $[x \leq f(x')]$, where $f:\{0,\ldots,N\}\rightarrow\{0,\ldots,N\}$ is defined as follows.
	\begin{itemize}
	\item For all $i>h'$: $f(i)=0$.\footnote{In the third and fourth cases, unlike the first and second cases, we first define $f$ for integers $i>h'$ rather than for integers $i<\ell'$. The correctness of the reduction relies on this choice of design (we further elaborate on this in footnote~\ref{foot} in the proof).}
	\item For $i=h',h'-1,\ldots,\ell'$: Denote $a=\last({i},C)$.  We have two subcases.
		\begin{itemize}
		\item	If {\bf(i)} $a=\nil$, {\bf(ii)} ${a+1}\notin C$, or {\bf(iii)} $\first(j,C)\leq {a+1}$ for no $j\in e$, then $f(i)=f(i+1)$. 
		\item Otherwise, let $j$ be the largest vertex in $e$ such that $\first(j,C)\leq {a+1}$, and define $f(i)=j$.
		\end{itemize}
	\item For all $i<\ell'$: $f(i)=N$.
	\end{itemize}
	{\bf Monotonicity.} We claim that $f$ is monotonically non-increasing. To show this, we choose some $i\in\{0,\ldots,N-1\}$. If $i<\ell'$ or $i\geq h'$, then it is clear that $f(i)\geq f(i+1)$. Now, suppose that $\ell'\leq i< h'$. If $a=\nil$, ${a+1}\notin C$ or $\first(j,C)\leq {a+1}$ for no $j\in e$, then it is clear that $f(i)\geq f(i+1)$. Hence, we next suppose that this is not the case. Then, $j$ is well-defined. To prove that $f(i)\geq f(i+1)$, we need to show that $j\geq f(i+1)$. Let $\widehat{i}$ be the smallest vertex in $\{i+1,\ldots,h'\}$ such that ${\widehat{a}}=\last({\widehat{i}},C)\neq\nil$, ${\widehat{a}+1}\in C$, and there is a vertex ${\widehat{j}}\in e$ such that $\first({\widehat{j}},C)\leq {\widehat{a}+1}$. (If such a vertex does not exist, then $f(i+1)=0$, and we are done.) Denote $\widehat{j}=f(\widehat{i})$. Note that it suffices to show that $\widehat{j}\leq j$. Because the way $e'$ views $C$ is non-increasing with respect to $\last$, we have that $a\geq {\widehat{a}}$. Then, because the way $e$ views $C$ is non-decreasing with respect to $\first$, we have that $\widehat{j}\leq j$.

\item The way $e'$ views $C$ is non-increasing with respect to $\last$, and the way $e$ views $C$ is non-increasing with respect to $\first$. We insert a constraint $[x \geq f(x')]$, where $f:\{0,\ldots,N\}\rightarrow\{0,\ldots,N\}$ is defined as follows.
	\begin{itemize}
	\item For all $i>h'$: $f(i)=N$.	
	\item For $i=h',h'-1,\ldots,\ell'$: Denote $a=\last({i},C)$.  We have two subcases.
		\begin{itemize}
		\item	If {\bf(i)} $a=\nil$, {\bf(ii)} ${a+1}\notin C$, or {\bf(iii)} $\first(j,C)\leq {a+1}$ for no $j\in e$, then $f(i)=f(i+1)$. 
		\item Otherwise, let $j$ be the smallest vertex in $e$ such that $\first(j,C)\leq {a+1}$, and define $f(i)=j$.
		\end{itemize}
	\item For all $i<\ell'$: $f(i)=0$.
	\end{itemize}
	{\bf Monotonicity.} We claim that $f$ is monotonically non-decreasing. To show this, we choose some $i\in\{0,\ldots,N-1\}$. If $i\geq h'$ or $i<\ell'$, then it is clear that $f(i+1)\geq f(i)$. Now, suppose that $\ell'\leq i<h'$. If $a=\nil$, ${a+1}\notin C$ or $\first(j,C)\leq {a+1}$ for no $j\in e$, then it is clear that $f(i+1)\geq f(i)$. Hence, we next suppose that this is not the case. Then, $j$ is well-defined. To prove that $f(i+1)\geq f(i)$, we need to show that $j\leq f(i-1)$. Let $\widehat{i}$ be the smallest vertex in $\{i+1,\ldots,h'\}$ such that ${\widehat{a}}=\last({\widehat{i}},C)\neq\nil$, ${\widehat{a}+1}\in C$, and there is a vertex ${\widehat{j}}\in e$ such that $\first({\widehat{j}},C)\leq {\widehat{a}+1}$. (If such a vertex does not exist, then $f(i+1)=N$, and we are done.) Denote $\widehat{j}=f(\widehat{i})$. Note that it suffices to show that $\widehat{j}\geq j$. Because the way $e'$ views $C$ is non-increasing with respect to $\last$, we have that $a\geq {\widehat{a}}$. Then, because the way $e$ views $C$ is non-increasing with respect to $\first$, we have that $\widehat{j}\geq j$. 
\end{enumerate}

Let us now give the second set of four cases. Here, each proof of monotonicity follows from arguments similar to those given for the first set, and therefore it is omitted.
\begin{enumerate}
\item The ways $e'$ and $e$ view $C$ are both non-decreasing with respect to $\last$. We insert a constraint $[x \geq f(x')]$, where $f:\{0,\ldots,N\}\rightarrow\{0,\ldots,N\}$ is defined as follows.
	\begin{itemize}
	\item For all $i>h'$: $f(i)=N$.
	\item For $i=h',h'-1,\ldots,\ell'$: Denote $a=\last({i},C)$.  We have two subcases.
		\begin{itemize}
		\item If {\bf (i)} $a=\nil$, {\bf (ii)} ${a+1}\notin C$, or {\bf (iii)} $\last(j,C)\geq {a+1}$ for no $j\in e$, then $f(i)=f(i+1)$.
		\item Otherwise, let $j$ be the smallest vertex in $e$ such that $\last(j,C)\geq {a+1}$, and define $f(i)=j$.
		\end{itemize}
	\item For all $i<\ell'$: $f(i)=0$.
	\end{itemize}

\item The way $e'$ views $C$ is non-decreasing with respect to $\last$, and the way $e$ views $C$ is non-increasing with respect to $\last$. We insert a constraint $[x \leq f(x')]$, where $f:\{0,\ldots,N\}\rightarrow\{0,\ldots,N\}$ is defined as follows.
	\begin{itemize}
	\item For all $i>h'$: $f(i)=0$.	
	\item For $i=h',h'-1,\ldots,\ell'$: Denote $a=\last({i},C)$.  We have two subcases.
		\begin{itemize}
		\item If {\bf (i)} $a=\nil$, {\bf (ii)} ${a+1}\notin C$, or {\bf (iii)} $\last(j,C)\geq {a+1}$ for no $j\in e$, then $f(i)=f(i+1)$. 
		\item Otherwise, let $j$ be the largest vertex in $e$ such that $\last(j,C)\geq {a+1}$, and define $f(i)=j$.
		\end{itemize}
	\item For all $i<\ell'$: $f(i)=N$.
	\end{itemize}

\item The way $e'$ views $C$ is non-increasing with respect to $\last$, and the way $e$ views $C$ is non-decreasing with respect to $\last$. We insert a constraint $[x \geq f(x')]$, where $f:\{0,\ldots,N\}\rightarrow\{0,\ldots,N\}$ is defined as follows.
	\begin{itemize}
	\item For all $i<\ell'$: $f(i)=N$.
	\item For $i=\ell',\ell'+1,\ldots,h'$: Denote $a=\last({i},C)$.  We have two subcases.
		\begin{itemize}
		\item If {\bf (i)} $a=\nil$, {\bf (ii)} ${a+1}\notin C$, or {\bf (iii)} $\last(j,C)\geq {a+1}$ for no $j\in e$, then $f(i)=f(i-1)$. 
		\item Otherwise, let $j$ be the smallest vertex in $e$ such that $\last(j,C)\geq {a+1}$, and define $f(i)=j$.
		\end{itemize}
	\item For all $i>h'$: $f(i)=0$.
	\end{itemize}

\item The ways $e'$ and $e$ view $C$ are both non-increasing with respect to $\last$. We insert a constraint $[x \leq f(x')]$, where $f:\{0,\ldots,N\}\rightarrow\{0,\ldots,N\}$ is defined as follows.
	\begin{itemize}
	\item For all $i<\ell'$: $f(i)=0$.
	\item For $i=\ell',\ell'+1,\ldots,h'$: Denote $a=\last({i},C)$.  We have two subcases.
		\begin{itemize}
		\item If {\bf (i)} $a=\nil$, {\bf (ii)} ${a+1}\notin C$, or {\bf (iii)} $\last(j,C)\geq {a+1}$ for no $j\in e$, then $f(i)=f(i-1)$.
		\item Otherwise, let $j$ be the largest vertex in $e$ such that $\last(j,C)\geq {a+1}$, and define $f(i)=j$.
		\end{itemize}
	\item For all $i>h'$: $f(i)=N$.
	\end{itemize}
\end{enumerate}

The following observation directly follows from the definition of our reduction.

\begin{obs}\label{obs:reduction}
For an instance $I=(P,k,\ig,\og,\{\how_x\}|_{x\in {\cal C}(P)\cup\reflex(P)})$ of {\sc Structured Art Gallery}, $|X|=\OO(r)$ where $\red(I)=(X,C,N)$. Moreover, $\red$ is computable in polynomial time.
\end{obs}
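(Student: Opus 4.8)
The plan is to check the two assertions separately. For the bound on $|X|$: by construction $|X| = k^\star = \sum_{e\in {\cal C}(P)\cup\reflex(P)}\ig(e)$, and the definition of {\sc Structured Art Gallery} stipulates both $\sum_{e\in {\cal C}(P)\cup\reflex(P)}\ig(e)\leq k$ and $k<r$. Hence $|X| = k^\star\leq k < r$, so $|X| = \OO(r)$.

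For polynomial-time computability, I would inspect each block of the reduction in the order it is presented. The maximal convex regions ${\cal C}(P)$ and the set $\reflex(P)$ are obtained by a single scan of the boundary, classifying each vertex by its interior angle; the bijection $\bij$, the integer $N = n+1$, and the value $k^\star$ are then immediate. The constraint sets $A$ (Association) and $O$ (Order) contribute $\OO(r)$ and $\OO(r^2)$ constraints respectively, each attached either to a constant function or to the fixed successor function $q\mapsto \min\{q+1,N\}$; since $N=n+1$ is given in unary, it is legitimate to store each such function as an explicit table of $\OO(n)$ entries, computable in $\OO(n)$ time.

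The remaining blocks ("Guarding reflex vertices", "Guarding first/last/middle vertices in convex regions") all rest on a single geometric primitive: deciding, for two vertices $u,w$ of $P$, whether $u$ sees $w$. This is decidable in polynomial time by standard means---checking whether the open segment $\overline{uw}$ properly crosses an edge of $P$ and whether it locally enters the interior of $P$ at $u$. Given this primitive, for every vertex $v$ and convex region $[i,j]$ the quantities $\first(v,[i,j])$ and $\last(v,[i,j])$, and likewise the smallest and largest vertices of a region that see at least one vertex of a fixed convex region $C$, are computed by a single pass over the $\OO(n)$ vertices involved. The only block requiring a branch on the structure of the polygon is "Guarding middle vertices": to choose which of the four monotone functions $f$ to install, we must first decide, for the regions $e$ and $e'$, whether the way each views $C$ is non-decreasing or non-increasing with respect to $\first$ (respectively $\last$). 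By Lemma \ref{lem:mainStructF} each such view is monotone in exactly one of the two directions, so it suffices to compare the relevant $\first$ (respectively $\last$) values across the vertices of the region, which is polynomial; once the case is fixed, each entry of $f$ is determined by scanning the vertices of $e$ for the extremal one satisfying a visibility inequality, again polynomial. There are $\OO(r\cdot k)=\OO(r^2)$ such constraints.

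Summing up, the reduction produces $\OO(r^2)$ constraints, each carrying a function table of size $\OO(n)$, together with $X$, $C$, and $N$; the total work is polynomial in $|I|$. I do not anticipate a genuine obstacle here: the only points needing care are that representing the functions $f$ explicitly is justified precisely because $N$ is encoded in unary, and that we are entitled to decide algorithmically which of the four monotonicity cases applies---this is exactly where Lemma \ref{lem:mainStructF} is invoked, guaranteeing that one of the cases always holds and that the correct one can be identified by direct inspection of the computed $\first$ and $\last$ values.
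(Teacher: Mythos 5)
Your proposal is correct and matches the paper, which simply asserts that the observation ``directly follows from the definition of our reduction'': the bound $|X|=k^\star\leq k<r$ and the routine polynomial-time verification (visibility tests, computing $\first$/$\last$, tabulating each function over $\{0,\ldots,N\}$ with $N$ in unary, and selecting the monotonicity case via Lemma~\ref{lem:mainStructF}) are exactly the intended justification. One tiny wording nit: a view may be monotone in both directions (not ``exactly one''), but since the reduction only needs \emph{some} applicable case, testing non-decreasingness and otherwise falling back on non-increasingness suffices, so nothing breaks.
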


To establish the correctness of our reduction, we start with the reverse direction.

\begin{lem}\label{lem:reverse}
Let $I=(P,k,\ig,\og,\{\how_x\}|_{x\in {\cal C}(P)\cup\reflex(P)})$ be an instance of {\sc Structured Art Gallery}, and denote $\red(I)=(X,C,N)$. If $(X,C,N)$ is a \Yes-instance of {\sc Monotone 2-CSP}, then $I$ is a \Yes-instance of {\sc Structured Art Gallery}.
\end{lem}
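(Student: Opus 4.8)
The plan is to take a satisfying assignment $\alpha\colon X\to\{0,\ldots,N\}$ of $\red(I)=(X,C,N)$ and turn it into a solution $S=\{\alpha(x):x\in X\}$ of $I$, verifying Conditions~\ref{condition:structured1}--\ref{condition:structured3c} of {\sc Structured Art Gallery} for $S$. The first thing I would establish is that $\alpha$ assigns only \emph{genuine} vertices: for $x\in X$ with $\bij(x)=(e,i)$, the association constraints $A$ force $\alpha(x)$ to lie between the smallest and largest vertex of $e$ (or to equal $e$ if $e\in\reflex(P)$), so $1\le\alpha(x)\le n$ and neither $0$ nor $N=n+1$ is ever used. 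This is the workhorse of the whole argument: a constraint $[x\le 0]$ or $[x\ge N]$ is never satisfiable, so whenever a function built by the reduction would evaluate to $0$ on a ``$\le$''-constraint, or to $N$ on a ``$\ge$''-constraint, we reach a contradiction.

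Next I would dispatch the easy conditions. For a convex region $C$, the order constraints $O$ together with $A$ force $\bij^{-1}(C,1),\ldots,\bij^{-1}(C,\ig(C))$ to receive strictly increasing vertices of $C$ (the slack $N>n$ makes the step ``$q\mapsto q+1$'' strict), and for reflex $e$ the single relevant variable equals $e$; hence $|S\cap e|=\ig(e)$ for all $e$, $|S|=\sum_e\ig(e)\le k$, and $s_{(e,i)}=\alpha(\bij^{-1}(e,i))$, which gives Condition~\ref{condition:structured1} and the identification $s_{\how_x(\cdot)}=\alpha(\bij^{-1}(\how_x(\cdot)))$. For each reflex vertex $y$, and likewise for the first vertex $q$ and the last vertex $q'$ of each convex region $C$, the fact that the reduction output an instance (not \No) means the region $e$ named by the relevant value of $\how$ either is a reflex vertex seeing $y$ (so $s=e$ sees $y$), or is convex with $\first(y,e)\ne\nil$ and with inserted constraints placing $\alpha(\bij^{-1}(e,i))$ between $\first(y,e)$ and $\last(y,e)$; in the latter case Lemma~\ref{lem:seeAllBetFin}, applied with source $y$ and convex region $e$, gives that $y$ sees $\alpha(\bij^{-1}(e,i))$. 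This yields Condition~\ref{condition:structured2}, and for Conditions~\ref{condition:structured3a} and~\ref{condition:structured3c} I would add the trivial remark that $q$ (resp.\ $q'$) being the extreme vertex of $C$ then forces $\first(s,C)=q$ (resp.\ $\last(s,C)=q'$).

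The heart of the proof is Condition~\ref{condition:mid}. Fixing $C\in{\cal C}(P)$, I would prove by induction on $t\in\{1,\ldots,\og(C)\}$ that $s_{\how_C(t)}$ sees at least one vertex of $C$, and that for $t\ge2$ Condition~\ref{condition:mid} holds for index $t-1$; the base case is Condition~\ref{condition:structured3a}. For the step, set $x=\bij^{-1}(\how_C(t))$, $x'=\bij^{-1}(\how_C(t-1))$, let $e,e'$ be the two regions, and put $a=\last(\alpha(x'),C)$, which is not $\nil$ by the hypothesis; since $\alpha(x')\in e'$ sees a vertex of $C$, it lies in the interval $[\ell',h']$ of the reduction. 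By Lemma~\ref{lem:mainStructF} each of $e,e'$ views $C$ monotonically in both $\first$ and $\last$, so exactly one case of the first set and one of the second set applies, contributing one constraint each, which together with the $\widetilde c$-constraints $[x\ge\ell]$, $[x\le h]$ is all I know about $\alpha(x)$. In every one of the (at most) eight combinations the argument runs:
\begin{enumerate}
\item $a+1\in C$: otherwise $a$ is the last vertex of $C$, and monotonicity of $e'$ in $\last$ forces the second-set function at $\alpha(x')$ to propagate all the way to $N$ (for a ``$\ge$''-constraint) or $0$ (for a ``$\le$''-constraint), contradicting the first paragraph;
\item $\alpha(x')$ falls into the non-degenerate subcase of \emph{both} functions: otherwise the same kind of propagation again forces value $0$ or $N$, contradicting the first paragraph;
\item hence the two functions at $\alpha(x')$ are the explicit vertices of $e$ singled out by $a+1$ (the largest/smallest $v\in e$ with $\first(v,C)\le a+1$, resp.\ with $\last(v,C)\ge a+1$), and the two middle constraints plus $[x\ge\ell]$, $[x\le h]$ sandwich $\alpha(x)$ between two vertices of $e$ that --- using monotonicity of $e$'s view of $C$ and the fact that $\alpha$ satisfies all four constraints simultaneously --- are both shown to see $a+1$;
\item therefore, by Lemma~\ref{lem:seeAllBetFin} applied with source $a+1$ and convex region $e$, $s_{\how_C(t)}=\alpha(x)$ sees $a+1$, so $\first(s_{\how_C(t)},C)\le a+1\le\last(s_{\how_C(t)},C)$, which is part~(i) of Condition~\ref{condition:mid} for index $t-1$, while $\last(s_{\how_C(t)},C)\ge a+1>a$ is part~(ii); and $s_{\how_C(t)}$ sees a vertex of $C$, closing the induction.
\end{enumerate}
Assembling these pieces shows $S$ satisfies all conditions of {\sc Structured Art Gallery}, so $I$ is a \Yes-instance.

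The hard part is the third item of the inductive step: in each of the eight combinations one must identify exactly which vertices of $e$ the two functions evaluate to, eliminate the degenerate ``$f(i)=f(i\pm1)$'' regime through the value-$0$/value-$N$ exclusion established at the outset (this is precisely where the filling direction chosen for the third and fourth cases of the first set is needed), and then marry the first-set constraint (which controls $\first(\alpha(x),C)$) to the second-set constraint (which controls $\last(\alpha(x),C)$) so that the ``no gaps'' Lemma~\ref{lem:seeAllBetFin} can be invoked to conclude that $\alpha(x)$ genuinely sees $a+1$ --- the constraints $[x\ge\ell]$, $[x\le h]$ alone never guarantee this, as the reduction's own footnote warns. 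Everything outside this step is routine manipulation of the constraint sets $A$, $O$ and the guarding constraints.
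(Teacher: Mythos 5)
Your proposal is correct and takes essentially the same route as the paper's proof: the same handling of the $A\cup O$ and guarding constraints, the same induction on $t$ for Condition~\ref{condition:mid}, the same exclusion of the values $0$ and $N$, and the same double use of Lemma~\ref{lem:seeAllBetFin} together with the $\widetilde{c}$-constraints. The only differences are presentational: you rule out the degenerate ``copying'' regime by a propagation-to-$0$/$N$ argument (the contrapositive of the paper's existence-of-$\delta$ step), and you close Condition~\ref{condition:mid} by showing directly that $\alpha(x)$ sees $a+1$, which merges the paper's separate derivations of the two inequalities and its handling of the $\first(\alpha(x),C)=\last(\alpha(x),C)=\nil$ scenario.
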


\begin{proof} Suppose that $(X,C,N)$ is a \Yes-instance of {\sc Monotone 2-CSP}. Accordingly, let $\alpha: X\rightarrow\{0,\ldots,N\}$ be a solution to $(X,C,N)$. By the constraints in $A$, we have that for all $x\in X$, for $(e,i)=\bij^{-1}(x)$, it holds that ${\alpha(x)}\in e$.\footnote{If $e\in\reflex(P)$, by ${\alpha(x)}\in e$ we mean ${\alpha(x)}=e$.} In particular, for $S=\{{\alpha(x)}: x\in X\}$, we have that $S\subseteq V$. In what follows, we show that $S$ is a solution to $I$, which would conclude the proof. Because $|X|\leq k$, we immediately have that $|S|\leq k$. Thus, it remains to show that Conditions \ref{condition:structured1}, \ref{condition:structured2} and \ref{condition:structured3} in the definition of the objective of {\sc Structured Art Gallery} are satisfied.

\medskip
{\noindent\bf Condition \ref{condition:structured1}.}
First, note that for each convex region or reflex vertex $y\in {\cal C}(P)\cup\reflex(P)$, $|S\cap y|=|\{x\in X: (y,i)=\bij^{-1}(x)$ for some $i\in\{1,\ldots,\ig(y)\}\}|=\ig(y)$. Here, the first equality followed from the definition of $S$, and the last equality followed from the the fact that $\bij$ is bijective.
Accordingly, for each $y\in {\cal C}(P)\cup\reflex(P)$ and $i\in\{1,\ldots,\ig(y)\}$, let $s_{(y,i)}$ denote the $i^{th}$ largest vertex in $S\cap y$; by the constraints in $A\cup O$, we have that $s_{(y,i)}={\alpha(x)}$ for $x=\bij^{-1}(y,i)$.

\medskip
{\noindent\bf Condition \ref{condition:structured2}.}
Consider some reflex vertex $y\in\reflex(P)$, and denote $(e,i)=\how_{(y,1)}$. 
First, suppose that $e\in\reflex(P)$. Then, $e$ sees $y$, else we would have outputted \No. By the constraints in $A$, we have that $e=s_{\how_y(1)}\in S$, and hence $s_{\how_y(1)}\in S$ sees $y$. Second, suppose that $e\in {\cal C}(P)$. Then, since $\alpha$ satisfies the constraints $c^1_{y}$ and $c^2_{y}$, for the variable $x\in X$ that satisfies $\bij(x)=(e,i)$, we have that $\first(y,e)\leq {\alpha(x)}\leq \last(y,e)$. By Lemma \ref{lem:seeAllBetFin}, this means that ${\alpha(x)}$ sees $y$. Thus, because $s_{\how_y(1)}=s_{(e,i)}={\alpha(x)}$, we have that $s_{\how_y(1)}$ sees $y$.

\medskip
{\noindent\bf Condition \ref{condition:structured3a}.}
In what follows, consider some convex region $C\in {\cal C}(P)$. Here, we need to show that $\first(s_{\how_C(1)},C)$ is the smallest vertex in $C$. Denote $(e,i)=\how_C(1)$ and $x=\bij^{-1}(e,i)$. Additionally, denote the first vertex in $C$ by $q$. First, suppose that $e\in\reflex(P)$. Then, $e$ sees $q$, else we would have outputted \No. By the constraints in $A$, we have that $e=s_{\how_C(1)}\in S$. Thus, $s_{\how_C(1)}$ sees $q$ (which means that $\first(s_{\how_C(1)},C)$ is the smallest vertex in $C$).
Second, suppose that $e\in{\cal C}$. Let $\ell=\first(q,e)$ and $h=\last(q,e)$. If $\ell$ (and $h$) is $\nil$, then we would have outputted \No. Thus, by the constraints $c^1_{(C,1)}$ and $c^2_{(C,1)}$, we have that $\ell\leq {\alpha(x)}\leq h$. By Lemma \ref{lem:seeAllBetFin}, this means that ${\alpha(x)}$ sees $q$. Thus, because $s_{\how_C(1)}=s_{(e,i)}={\alpha(x)}$, we have that $s_{\how_C(1)}$ sees $q$.

\medskip
{\noindent\bf Condition \ref{condition:structured3c}.}
Here, we need to show that $\last(s_{\how_C(\og(C))},C)$ is the largest vertex in $C$.  Denote $(e,i)=\how_C(\og(C))$ and $x=\bij^{-1}(e,i)$. Additionally, denote the last vertex in $C$ by $q$. First, suppose that $e\in\reflex(P)$. Then, $e$ sees $q$, else we would have outputted \No. By the constraints in $A$, we have that $e=s_{\how_C(\og(C)}\in S$. Thus, $s_{\how_C(\og(C)}$ sees $q$ (which means that $\last(s_{\how_C(\og(C))},C)$ is the largest vertex in $C$). Second, suppose that $e\in{\cal C}$. Let $\ell=\first(q,e)$ and $h=\last(q,e)$. If $\ell$ (and $h$) is $\nil$, then we would have outputted \No. Thus, by the constraints $c^1_{(C,\og(C))}$ and $c^2_{(C,\og(C))}$, we have that $\ell\leq {\alpha(x)}\leq h$. By Lemma \ref{lem:seeAllBetFin}, this means that ${\alpha(x)}$ sees $q$. Thus, because $s_{\how_C(\og(C))}=s_{(e,i)}={\alpha(x)}$, we have that $s_{\how_C(\og(C))}$ sees $q$.

\medskip
{\noindent\bf Condition \ref{condition:mid}.} 
Lastly, we need to show that for every $t\in \{1,\ldots,\og(C)-1\}$, it holds that
\[\first(s_{\how_C(t+1)},C)-1\leq\last(s_{\how_C(t)},C)\leq\last(s_{\how_C(t+1)},C)-1.\]
Rephrased differently, we need to show that for every $t\in \{2,\ldots,\og(C)\}$, it holds that
\[\first(s_{\how_C(t)},C)-1\leq\last(s_{\how_C(t-1)},C)\leq\last(s_{\how_C(t)},C)-1.\]
Observe that these inequalities encompass the requirement that $s_{\how_C(t)}$ sees at least one vertex in $C$. (Indeed, $1$ cannot be subtracted from $\nil$, and $\nil$ cannot be smaller or larger than an integer.) For $t=1$, we only claim that $s_{\how_C(1)}$ sees at least one vertex in $C$. Now, the proof is by induction on $t$.\footnote{Here, induction is not mandatory. Instead, we can rely on the constraints marked with a tilde. However, these constraints are required for a different purpose (rather than only to encompass the inductive hypothesis). To highlight this, we prefer to use induction.} In the basis, where $t=1$, the claim holds since we have already proved that Condition \ref{condition:structured3a} is satisfied. Next, we suppose that the claim is true for all $t'\in\{1,\ldots,t-1\}$, and prove it for $t\in \{2,\ldots,\og(C)\}$

Denote $(e,\gamma)=\how_C(\og(t))$ and $x=\bij^{-1}(e,\gamma)$. In addition, denote $(e',\gamma')=\how_C(\og(t-1))$ and $x'=\bij^{-1}(e',\gamma')$. By the constraints in $A\cup O$, we have that $s_{\how_C(\og(t))}=s_{(e,\gamma)}={\alpha(x)}$ and $s_{\how_C(\og(t-1))}=s_{(e',\gamma')}={\alpha(x')}$. Denote $a=\last({\alpha(x')},C)$, and observe that $a\neq\nil$ by the inductive hypothesis. With this notation, our task is to show that {\em (i)} $a\geq b-1$ for $b=\first({\alpha(x)},C)$, and {\em (ii)} $a\leq q-1$ for $q=\last({\alpha(x)},C)$. If $a+1\notin C$, then the second condition cannot be satisfied. Therefore, it suffices to show that
\begin{enumerate}
\item\label{condition:ab} either $a+1\in C$ or $a\geq b-1$ for $b=\first({\alpha(x)},C)$, and 
\item\label{condition:aq}  $a\leq q-1$ for $q=\last({\alpha(x)},C)$.
\end{enumerate}

The first set of four cases\footnote{See ``guarding the middle vertices in a convex region'' in Section \ref{sec:reduction}.} is necessary mainly to prove the first condition above, and the second set of four cases is necessary mainly to prove the second condition above. However, to rule out the possibility that $b=q=\nil$, the first set of four cases is also required to prove the second condition, and the second set of four cases is also required to prove the first one. Thus, both conditions are proved simultaneously. In this context, let $c=[x \si f(x')]$ be the constraint that was introduced due to appropriate case from the first set of four cases, and let $\widehat{c}=[x \widehat{\si} \widehat{f}(x')]$ be the constraint that was introduced due to the appropriate case from the second set of four cases.
We consider eight cases, depending on the way $e'$ views $C$ with respect to $\last$, and the way $e$ views $C$ with respect to both $\first$ and $\last$.

\medskip
\noindent{\bf Case 1 of First Set.} In this case, we suppose that the way $e'$ views $C$ is non-decreasing with respect to $\last$, and the way $e$ views $C$ is non-decreasing with respect to $\first$. Then, $\si$ is equal to $\leq$. Moreover, in this case, $f(\alpha(x'))$ is defined as follows. (Here, recall that the possibility that $a=\nil$ has already been ruled out.)
If ${a+1}\notin C$ or $\first(j,C)\leq {a+1}$ for no $j\in e$, then $f(\alpha(x'))=f(\alpha(x')-1)$. Otherwise, $f(\alpha(x'))$ is the largest vertex $j\in e$ such that $\first(j,C)\leq {a+1}$. In what follows, we suppose that $a+1\in C$ for the sake of the proof of Condition \ref{condition:ab}, else the proof of this condition is complete.

Since $\alpha$ is a solution to $(X,C,N)$, we have that $\alpha(x)\leq f(\alpha(x'))$. In particular, since ${\alpha(x)}\notin\{0,N\}$ (because ${\alpha(x)}\in S$ and $S\subseteq V$), we have that $f(\alpha(x'))\neq 0$. To proceed our analysis, we define $\delta$ and $a^\star$ as follows. Let $\delta$ be the largest vertex, not larger than $\alpha(x')$, such that $f(\delta)=f(\alpha(x'))$ and the following conditions hold for ${a}^\star=\last({\delta},C)$:
	\begin{enumerate}
	\item ${a}^\star\neq\nil$ and ${a}^\star+1\in C$;
	\item ${f(\alpha(x'))}$ is the largest vertex $v\in e$ such that $\first(v,C)\leq {a^\star+1}$.
	\end{enumerate}
The existence of such $\delta$ follows from the definition of $f$ and because $f(\alpha(x'))\neq 0$. Since $\delta\leq\alpha(x')$ and the way $e'$ views $C$ is non-decreasing with respect to $\last$, we have that ${a}^\star\leq a$. Thus, $\first({f(\alpha(x'))},C)\leq {a}^\star+1\leq  {a+1}$. By the definition of $f(\alpha(x'))$, this means that $f(\alpha(x'))$ is the largest vertex $j\in e$ such that $\first(j,C)\leq {a+1}$.  Because $\alpha(x)\leq f(\alpha(x'))=j$ and the way $e$ views $C$ is non-decreasing with respect to $\first$, we have that either $\first({\alpha(x)},C)\leq \first(j,C)$ or $\first({\alpha(x)},C)=\nil$. In the first scenario, $b\leq a+1$, hence the proof of Condition \ref{condition:ab} is complete. (The second scenario is addressed ahead.)

\medskip
\noindent{\bf Case 1 of First Set + Case 1 of Second Set.} In this case, we suppose that $e$ views $C$ is non-decreasing with respect to $\last$. Then, $\widehat{\si}$ is equal to $\geq$. Moreover, in this case, $\widehat{f}(\alpha(x'))$ is defined as follows. (Here, recall that the possibility that $a=\nil$ has already been ruled out.)
If ${a+1}\notin C$ or $\last({\widehat{j}},C)\geq {a+1}$ for no $\widehat{j}\in e$, then $\widehat{f}(\alpha(x'))=\widehat{f}(\alpha(x')+1)$. Otherwise, $\widehat{f}(\alpha(x'))$ is the smallest vertex ${\widehat{j}}\in e$ such that $\last({\widehat{j}},C)\geq {a+1}$.

Since $\alpha$ is a solution to $(X,C,N)$, we have that $\alpha(x)\geq \widehat{f}(\alpha(x'))$. In particular, since ${\alpha(x)}\notin\{0,N\}$ (because ${\alpha(x)}\in S$ and $S\subseteq V$), we have that $\widehat{f}(\alpha(x'))\neq N$. To proceed our analysis, we define $\widehat{\delta}$ and $\widehat{a}^\star$ as follows. Let $\widehat{\delta}$ be the smallest vertex, not smaller than $\alpha(x')$, such that $\widehat{f}(\widehat{\delta})=\widehat{f}(\alpha(x'))$ and the following conditions hold for $\widehat{a}^\star=\last({\delta},C)$:
	\begin{enumerate}
	\item $\widehat{a}^\star\neq\nil$ and $\widehat{a}^\star+1\in C$;
	\item ${\widehat{f}(\alpha(x'))}$ is the smallest vertex $\widehat{v}\in e$ such that $\last(\widehat{v},C)\geq {\widehat{a}^\star+1}$.
	\end{enumerate}
The existence of such $\widehat{\delta}$ follows from the definition of $\widehat{f}$ and because $\widehat{f}(\alpha(x'))\neq N$.\footnote{If the function $f$ were defined first for $i<\ell'$ rather than for $i>h'$, then the existence of $\widehat{\delta}$ would not have followed. Specifically, we need the integer that ``propagates'' in the definition of $\widehat{f}$ to be $N$ rather than $0$ because we have the assertion $\alpha(x)\geq \widehat{f}(\alpha(x'))$ rather than $\alpha(x)\leq \widehat{f}(\alpha(x'))$.\label{foot}} Since $\widehat{\delta}\geq\alpha(x')$ and the way $e'$ views $C$ is non-decreasing with respect to $\last$, we have that $\widehat{a}^\star\geq a$. Thus, $\last({\widehat{f}(\alpha(x'))},C)\geq \widehat{a}^\star+1\geq  {a+1}$, and hence $a+1\in C$.
By the definition of $\widehat{f}(\alpha(x'))$, this means that $\widehat{f}(\alpha(x'))$ is the smallest vertex $\widehat{j}\in e$ such that $\last({\widehat{j}},C)\geq {a+1}$. Because $\alpha(x)\geq \widehat{f}(\alpha(x'))=\widehat{j}$ and the way $e$ views $C$ is non-decreasing with respect to $\last$, we have that either $\last({\alpha(x)},C)\geq \last(\widehat{j},C)$ or $\last({\alpha(x)},C)=\nil$. In the first case, $q\geq a+1$, hence the proof of Condition \ref{condition:aq} is complete.

We are left with the scenario where $\first({\alpha(x)},C)=\last({\alpha(x)},C)=\nil$. To handle this scenario, recall that $\widehat{j}\leq\alpha(x)\leq j$, and $\first(j,C)\leq a+1\leq \last(\widehat{j},C)$. Because the way $e$ views $C$ is non-decreasing with respect to both $\first$ and $\last$, the first chain of inequalities implies that $\first(\widehat{j},C)\leq \first(j,C)$ and $\last(\widehat{j},C)\leq \last(j,C)$. Thus,  $\first(j,C)\leq a+1\leq \last(j,C)$ and $\first(\widehat{j},C)\leq a+1\leq \last(\widehat{j},C)$. By Lemma \ref{lem:seeAllBetFin}, we have that both $j$ and $\widehat{j}$ see $a+1$. In turn, by Lemma \ref{lem:seeAllBetFin} and since $\widehat{j}\leq\alpha(x)\leq j$, this means that $\alpha(x)$ sees $a+1$, which is a contradiction to $\first({\alpha(x)},C)=\last({\alpha(x)},C)=\nil$. Thus, this scenario cannot occur.

\medskip
\noindent{\bf Case 1 of First Set + Case 2 of Second Set.} In this case, we suppose that the way $e$ views $C$ is non-increasing with respect to $\last$. Then, $\widehat{\si}$ is equal to $\leq$. Moreover, in this case, $\widehat{f}(\alpha(x'))$ is defined as follows. (Here, recall that the possibility that $a=\nil$ has already been ruled out.)
If ${a+1}\notin C$ or $\last({\widehat{j}},C)\geq {a+1}$ for no $\widehat{j}\in e$, then $\widehat{f}(\alpha(x'))=\widehat{f}(\alpha(x')+1)$. Otherwise, $\widehat{f}(\alpha(x'))$ is the largest vertex ${\widehat{j}}\in e$ such that $\last({\widehat{j}},C)\geq {a+1}$.

Since $\alpha$ is a solution to $(X,C,N)$, we have that $\alpha(x)\leq \widehat{f}(\alpha(x'))$. In particular, since ${\alpha(x)}\notin\{0,N\}$ (because ${\alpha(x)}\in S$ and $S\subseteq V$), we have that $\widehat{f}(\alpha(x'))\neq 0$. To proceed our analysis, we define $\widehat{\delta}$ and $\widehat{a}^\star$ as follows. Let $\widehat{\delta}$ be the smallest vertex, not smaller than $\alpha(x')$, such that $\widehat{f}(\widehat{\delta})=\widehat{f}(\alpha(x'))$ and the following conditions hold for $\widehat{a}^\star=\last({\delta},C)$:
	\begin{enumerate}
	\item $\widehat{a}^\star\neq\nil$ and $\widehat{a}^\star+1\in C$;
	\item ${\widehat{f}(\alpha(x'))}$ is the largest vertex $\widehat{v}\in e$ such that $\last(\widehat{v},C)\geq {\widehat{a}^\star+1}$.
	\end{enumerate}
The existence of such $\widehat{\delta}$ follows from the definition of $\widehat{f}$ and because $\widehat{f}(\alpha(x'))\neq 0$. Since $\widehat{\delta}\geq\alpha(x')$ and the way $e'$ views $C$ is non-decreasing with respect to $\last$, we have that $\widehat{a}^\star\geq a$. Thus, $\last({\widehat{f}(\alpha(x'))},C)\geq \widehat{a}^\star+1\geq  {a+1}$, and hence $a+1\in C$.
By the definition of $\widehat{f}(\alpha(x'))$, this means that $\widehat{f}(\alpha(x'))$ is the largest vertex $\widehat{j}\in e$ such that $\last({\widehat{j}},C)\geq {a+1}$. Because $\alpha(x)\leq \widehat{f}(\alpha(x'))=\widehat{j}$ and the way $e$ views $C$ is non-increasing with respect to $\last$, we have that either $\last({\alpha(x)},C)\geq \last(\widehat{j},C)$ or $\last({\alpha(x)},C)=\nil$. In the first case, $q\geq a+1$, hence the proof of Condition \ref{condition:aq} is complete.

We are left with the scenario where $\first({\alpha(x)},C)=\last({\alpha(x)},C)=\nil$. To handle this scenario, recall that $\alpha(x)\leq \min(\widehat{j},j)$. 
Due to the constraint $\widetilde{c}^1_{(C,t)}=[x \geq \ell]$, we have that $\ell\leq \alpha(x)$, and therefore $\ell\leq \min(\widehat{j},j)$. Moreover, by the definition of $\ell$, it sees at least one vertex in $C$. Thus, since the way $e$ views $C$ is non-decreasing with respect to $\first$ and non-increasing with respect to $\last$, we have that $\first(\ell,C)\leq \first(j,C)\leq \last(j,C)\leq \last(\ell,C)$. By Lemma \ref{lem:seeAllBetFin}, this means that $\ell$ sees $\first(j,C)$. In turn, by Lemma \ref{lem:seeAllBetFin} and since $\ell\leq\alpha(x)\leq j$, this means that $\first(j,C)$ sees $\alpha(x)$, which is a contradiction to $\first({\alpha(x)},C)=\last({\alpha(x)},C)=\nil$. Thus, this scenario cannot occur.

\medskip
{\bf\noindent The proofs of the other three cases follow the same lines as the proof of the first case. For the sake of illustration, we give the details of the second case.}

\medskip
\noindent{\bf Case 2 of First Set.}  In this case, we suppose that the way $e'$ views $C$ is non-decreasing with respect to $\last$, and the way $e$ views $C$ is non-increasing with respect to $\first$. Then, $\si$ is equal to $\geq$. Moreover, in this case, $f(\alpha(x'))$ is defined as follows. (Here, recall that the possibility that $a=\nil$ has already been ruled out.)
If ${a+1}\notin C$ or $\first(j,C)\leq {a+1}$ for no $j\in e$, then $f(\alpha(x'))=f(\alpha(x')-1)$. Otherwise, $f(\alpha(x'))$ is the smallest vertex $j\in e$ such that $\first(j,C)\leq {a+1}$. In what follows, we suppose that $a+1\in C$ for the sake of the proof of Condition \ref{condition:ab}, else the proof of this condition is complete.

Since $\alpha$ is a solution to $(X,C,N)$, we have that $\alpha(x)\geq f(\alpha(x'))$. In particular, since ${\alpha(x)}\notin\{0,N\}$ (because ${\alpha(x)}\in S$ and $S\subseteq V$), we have that $f(\alpha(x'))\neq N$. To proceed our analysis, we define $\delta$ and $a^\star$ as follows. Let $\delta$ be the largest vertex, not larger than $\alpha(x')$, such that $f(\delta)=f(\alpha(x'))$ and the following conditions hold for ${a}^\star=\last({\delta},C)$:
	\begin{enumerate}
	\item ${a}^\star\neq\nil$ and ${a}^\star+1\in C$;
	\item ${f(\alpha(x'))}$ is the smallest vertex $v\in e$ such that $\first(v,C)\leq {a^\star+1}$.
	\end{enumerate}
The existence of such $\delta$ follows from the definition of $f$ and because $f(\alpha(x'))\neq N$. Since $\delta\leq\alpha(x')$ and the way $e'$ views $C$ is non-decreasing with respect to $\last$, we have that ${a}^\star\leq a$. Thus, $\first({f(\alpha(x'))},C)\leq {a}^\star+1\leq  {a+1}$. By the definition of $f(\alpha(x'))$, this means that $f(\alpha(x'))$ is the smallest vertex $j\in e$ such that $\first(j,C)\leq {a+1}$.  Because $\alpha(x)\geq f(\alpha(x'))=j$ and the way $e$ views $C$ is non-increasing with respect to $\first$, we have that either $\first({\alpha(x)},C)\leq \first(j,C)$ or $\first({\alpha(x)},C)=\nil$. In the first scenario, $b\leq a+1$, hence the proof of Condition \ref{condition:ab} is complete.

\medskip
\noindent{\bf Case 2 of First Set + Case 1 of Second Set.} In this case, we suppose that $e$ views $C$ is non-decreasing with respect to $\last$. Then, $\widehat{\si}$ is equal to $\geq$. By repeating the {\em exact} same arguments given in ``Case 1 of First Set + Case 1 of Second Set'', we derive that either $\last({\alpha(x)},C)\geq \last(\widehat{j},C)$ or $\last({\alpha(x)},C)=\nil$. Indeed, all the arguments presented up to that point are oblivious to the way in which $e$ views $C$ with respect to $\first$.
In the first case (where $\last({\alpha(x)},C)\geq \last(\widehat{j},C)$), $q\geq a+1$, hence the proof of Condition \ref{condition:aq} is complete.

We are left with the scenario where $\first({\alpha(x)},C)=\last({\alpha(x)},C)=\nil$. To handle this scenario, recall that $\max(\widehat{j},j)\leq\alpha(x)$. Due to the constraint $\widetilde{c}^2_{(C,t)}=[x \leq h]$, we have that $\alpha(x)\leq h$, and therefore $\max(\widehat{j},j)\leq h$. Moreover, by the definition of $h$, it sees at least one vertex in $C$. Thus, since the way $e$ views $C$ is non-increasing with respect to $\first$ and non-decreasing with respect to $\last$, we have that $\first(h,C)\leq \first(j,C)\leq \last(j,C)\leq \last(h,C)$. By Lemma \ref{lem:seeAllBetFin}, this means that $h$ sees $\first(j,C)$. In turn, by Lemma \ref{lem:seeAllBetFin} and since $j\leq\alpha(x)\leq h$, this means that $\first(j,C)$ sees $\alpha(x)$, which is a contradiction to $\first({\alpha(x)},C)=\last({\alpha(x)},C)=\nil$. Thus, this scenario cannot occur.

\medskip
\noindent{\bf Case 2 of First Set + Case 2 of Second Set.} In this case, we suppose that $e$ views $C$ is non-increasing with respect to $\last$. By repeating the {\em exact} same arguments given in ``Case 1 of First Set + Case 2 of Second Set'', we derive that either $\last({\alpha(x)},C)\geq \last(\widehat{j},C)$ or $\last({\alpha(x)},C)=\nil$. Indeed, all the arguments presented up to that point are oblivious to the way in which $e$ views $C$ with respect to $\first$.
In the first case (where $\last({\alpha(x)},C)\geq \last(\widehat{j},C)$), $q\geq a+1$, hence the proof of Condition \ref{condition:aq} is complete.

We are left with the scenario where $\first({\alpha(x)},C)=\last({\alpha(x)},C)=\nil$. To handle this scenario, recall that $j\leq\alpha(x)\leq \widehat{j}$, and $\first(j,C)\leq a+1\leq \last(\widehat{j},C)$. Because the way $e$ views $C$ is non-increasing with respect to both $\first$ and $\last$, the first chain of inequalities implies that $\first(\widehat{j},C)\leq \first(j,C)$ and $\last(\widehat{j},C)\leq \last(j,C)$. Thus, $\first(j,C)\leq a+1\leq \last(j,C)$ and $\first(\widehat{j},C)\leq a+1\leq \last(\widehat{j},C)$. By Lemma \ref{lem:seeAllBetFin}, we have that both $j$ and $\widehat{j}$ see $a+1$. In turn, by Lemma \ref{lem:seeAllBetFin} and since $j\leq\alpha(x)\leq \widehat{j}$, this means that $\alpha(x)$ sees $a+1$, which is a contradiction to $\first({\alpha(x)},C)=\last({\alpha(x)},C)=\nil$. Thus, this scenario cannot occur.
%
%
%
%
%
\end{proof}

Now, we prove the correctness of the forward direction.

\begin{lem}\label{lem:forward}
Let $I=(P,k,\ig,\og,\{\how_x\}|_{x\in {\cal C}(P)\cup\reflex(P)})$ be an instance of {\sc Structured Art Gallery}, and denote $\red(I)=(X,C,N)$. If $I$ is a \Yes-instance of {\sc Structured Art Gallery}, then $(X,C,N)$ is a \Yes-instance of {\sc Monotone 2-CSP}.
\end{lem}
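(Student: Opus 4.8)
The plan is to exhibit a satisfying assignment built directly from a solution to the {\sc Structured Art Gallery} instance. Let $S\subseteq V$ be a solution to $I$, and define $\alpha\colon X\to\{0,\dots,N\}$ by $\alpha(x)=s_{\bij(x)}$; that is, if $\bij(x)=(e,i)$ then $\alpha(x)$ is the $i$-th vertex of $S\cap e$ in the ordering used in Condition \ref{condition:structured1}, which is well defined since $|S\cap e|=\ig(e)$ by that condition. As $S\subseteq V$ we have $\alpha(x)\in\{1,\dots,n\}\subseteq\{0,\dots,N\}$. First I would check that the reduction does not output \No\ on input $I$: every ``output \No'' branch fires only when some prescribed vertex (a reflex vertex $e$, or a vertex of a region $e$) fails to see a target that, because $S$ is a solution, it must see --- Condition \ref{condition:structured2} forces $s_{\how_y(1)}$ to see each reflex vertex $y$, Conditions \ref{condition:structured3a} and \ref{condition:structured3c} force $s_{\how_C(1)}$ and $s_{\how_C(\og(C))}$ to see the first and last vertices of $C$, and Condition \ref{condition:mid} (together with Conditions \ref{condition:structured3a}, \ref{condition:structured3c} at the ends) forces $s_{\how_C(t)}$ to see at least one vertex of $C$ for every $C$ and every $t\in\{1,\dots,\og(C)\}$. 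Hence no \No\ is produced and $\red(I)=(X,C,N)$ is a genuine instance.

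Next I would verify that $\alpha$ satisfies each constraint family, in order of increasing difficulty. The family $A$ is immediate since $\alpha(x)=s_{(e,i)}\in e$. The family $O$ holds because, inside a fixed region $C$, the vertices $s_{(C,1)},s_{(C,2)},\dots$ are distinct integers in increasing order and $\alpha(x)\le n<N$, so the successor function is respected. For the three families guarding reflex vertices, first vertices of regions, and last vertices of regions, the designated guard --- $s_{\how_y(1)}$, $s_{\how_C(1)}$, or $s_{\how_C(\og(C))}$ depending on the family --- sees the relevant target $w$ by Condition \ref{condition:structured2}/\ref{condition:structured3a}/\ref{condition:structured3c}, and since it also lies in the region $e$, it lies in $[\first(w,e),\last(w,e)]$, which is exactly what the constraints $c^1,c^2$ demand (when $e\in\reflex(P)$ no constraint is added and the required visibility was already confirmed above). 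Finally, for the ``tilde'' constraints of the middle-vertex family, $\alpha(x)=s_{\how_C(t)}$ sees at least one vertex of $C$ and lies in $e$, hence lies between the extreme vertices $\ell,h$ of $e$ that see a vertex of $C$.

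The core of the argument is the middle-vertex family: for each $C$ and each $t\in\{2,\dots,\og(C)\}$ one must check the two constraints $c=[x\,\si\,f(x')]$ and $\widehat{c}=[x\,\widehat{\si}\,\widehat{f}(x')]$, where $x=\bij^{-1}(\how_C(t))$ and $x'=\bij^{-1}(\how_C(t-1))$, across all eight combined sub-cases. Writing $a=\last(\alpha(x'),C)$, the pivotal observation --- obtained by applying Condition \ref{condition:mid} to the consecutive guards $s_{\how_C(t-1)}$ and $s_{\how_C(t)}$ --- is that $a\neq\nil$, that $a+1\in C$ (since $a\le\last(\alpha(x),C)-1$ and $\last(\alpha(x),C)$ is a vertex of $C$), that $\first(\alpha(x),C)\le a+1$, and that $\last(\alpha(x),C)\ge a+1$. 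Moreover $\alpha(x')=s_{\how_C(t-1)}$ sees a vertex of $C$ and lies in $e'$, so $\ell'\le\alpha(x')\le h'$, which means $f(\alpha(x'))$ and $\widehat f(\alpha(x'))$ are computed by the recursive clause of their piecewise definitions rather than by the flat boundary clauses. I would then treat one representative sub-case in full --- say $e'$ non-decreasing with respect to $\last$ and $e$ non-decreasing with respect to both $\first$ and $\last$, selecting Case~1 of the first set and Case~1 of the second set. Here $\alpha(x)\in e$ is itself a vertex of $e$ with $\first(\alpha(x),C)\le a+1$, so the degenerate sub-sub-case in the definition of $f$ at $\alpha(x')$ cannot occur and $f(\alpha(x'))$ equals the \emph{largest} vertex $j\in e$ with $\first(j,C)\le a+1$; since $\alpha(x)$ is a candidate for this maximum, $\alpha(x)\le f(\alpha(x'))$, so $c$ holds. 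Symmetrically $\alpha(x)\in e$ witnesses $\last(\alpha(x),C)\ge a+1$, so $\widehat f(\alpha(x'))$ equals the \emph{smallest} $\widehat j\in e$ with $\last(\widehat j,C)\ge a+1$, whence $\alpha(x)\ge\widehat f(\alpha(x'))$ and $\widehat c$ holds. The remaining seven sub-cases follow the same template, swapping ``largest''/``smallest'' and the relation symbols according to the monotonicity directions of $e$ and $e'$ fixed by the reduction (which exist by Lemma \ref{lem:mainStructF}).

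I expect the bookkeeping of these eight sub-cases to be the main obstacle, and in particular the slightly circular role played by $\alpha(x)$: it serves simultaneously as the witness showing that the relevant clause of $f$ (respectively $\widehat f$) at $\alpha(x')$ is non-degenerate, and as the quantity that the resulting extremal vertex of $e$ bounds from the correct side. Keeping the piecewise definitions of $f,\widehat f$, Condition \ref{condition:mid}, and the ``no-gaps'' Lemma \ref{lem:seeAllBetFin} carefully aligned across all the cases is where the argument must be written out in detail.
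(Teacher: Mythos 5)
Your proposal is correct and follows essentially the same route as the paper's proof: define $\alpha(x)=s_{\bij(x)}$, dispose of the $A$, $O$, reflex/first/last and tilde constraints directly, and for the middle-vertex constraints use Condition \ref{condition:mid} to get $a\neq\nil$, $a+1\in C$, $\first(\alpha(x),C)\leq a+1$ and $\last(\alpha(x),C)\geq a+1$, so that $f(\alpha(x'))$ (resp. $\widehat{f}(\alpha(x'))$) is given by the non-degenerate clause and $\alpha(x)$, being itself a candidate for the extremal vertex of $e$, satisfies the inserted inequality in each of the eight sub-cases. Your upfront check that no \No\ branch fires is handled implicitly (and equivalently) in the paper within each constraint family.
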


\begin{proof}
Suppose that $I$ is a \Yes-instance of {\sc Structured Art Gallery}. Accordingly, let $S\subseteq V$ be a solution to $I$. Then, $|S|\leq k$, and the following conditions hold:
\begin{enumerate}
\item For each $y\in {\cal C}(P)\cup\reflex(P)$, $|S\cap y|=\ig(y)$. Accordingly, for each $y\in {\cal C}(P)\cup\reflex(P)$ and $i\in\{1,\ldots,\ig(y)\}$, let $s_{(y,i)}$ denote the $i^{th}$ largest vertex in $S\cap y$.
\item For each $y\in\reflex(P)$, $s_{\how_y(1)}$ sees $y$.
\item For each $C\in {\cal C}(P)$, the following conditions hold:
	\begin{enumerate}
	\item $\first(s_{\how_C(1)},C)$ is the smallest vertex in $C$.
	\item For every $t\in \{1,\ldots,\og(C)-1\}$, denote $a=\last(s_{\how_C(t)},C)$, $j=\first(s_{\how_C(t+1)},C)$ and $q=\last(s_{\how_C(t+1)},C)$. Then, {\em (i)} $a\geq j-1$, and {\em (ii)} $a\leq q-1$.
	\item $\last(s_{\how_C(\og(C))},C)$ is the largest vertex in $C$.
	\end{enumerate}
\end{enumerate}

In order to define an assignment $\alpha: X\rightarrow\{0,\ldots,N\}$, let $x\in X$. Denote $\bij(x)=(e,i)$. Accordingly, let $t$ denote the $i^{th}$ largest vertex $t$ in $S\cap e$, namely, $s_{(e,i)}$. Then, define $\alpha(x)=t$. Since for $e\in {\cal C}(P)\cup\reflex(P)$, $|S\cap e|=\ig(e)$, and by the definition of the bijection $\bij$, we have that $t$ is well-defined. In what follows, we argue that $\alpha$ is a solution to $(X,C,N)$. First, by the definition of $\alpha$, it is clear that all of the constraints in $A\cup O$ are satisfied.

\medskip
{\noindent\bf Guarding reflex vertices.} Consider some $y\in\reflex(P)$. Note that $s_{\how_y(1)}$ sees $y$. Denote $(e,i)=\how_y{(1)}$. If $e\in\reflex(P)$, then $e$ sees $y$ and no constraint is introduced. Next, suppose that $e\in{\cal C}(P)$.  Let $x\in X$ be the variable that satisfies $\bij(x)=\how_{y}(1)$. Denote $\ell=\first(y,e)$ and $h=\last(y,e)$. Since $s_{\how_y(1)}$ sees $y$, neither $\ell$ nor $h$ is $\nil$. We thus have the constraints $c^1_{y}=[x \geq \ell]$ and $c^2_{y}=[x \leq h]$. To prove that $\alpha$ satisfies them, we need to show that $\ell\leq \alpha(x)\leq h$. However, this directly follows from the fact that $\alpha(x)=s_{\how_y(1)}$ sees $y$.

In what follows, we consider some $C\in{\cal C}(P)$, and show that $\alpha$ satisfies all of the constraints introduced in the context of $C$.

\medskip
{\noindent\bf Guarding the first vertex in a convex region.}  First, denote $(e,i)=\how_C(1)$ and $x=\bij^{-1}(e,i)$. In addition, denote the first vertex in $C$ by $q$. Observe that $\first(s_{\how_C(1)},C)=q$, which means that $s_{\how_C(1)}$ sees $q$. If $e\in\reflex(P)$, then $e$ sees $q$ and no constraint is introduced. Next, suppose that $e\in{\cal C}(P)$. Let $\ell=\first(q,e)$ and $h=\last(q,e)$. Since $s_{\how_C(1)}$ sees $q$, neither $\ell$ nor $h$ is  $\nil$. We thus have the constraints $c^1_{(C,1)}=[x \geq \ell]$ and $c^2_{(C,1)}=[x \leq h]$. To prove that $\alpha$ satisfies them, we need to show that $\ell\leq \alpha(x)\leq h$. However, this directly follows from the fact that $\alpha(x)=s_{\how_C(1)}$ sees $q$.

\medskip
{\noindent\bf Guarding the last vertex in a convex region.}  Secondly, denote $(e,i)=\how_C(\og(C))$ and $x=\bij^{-1}(e,i)$. In addition, denote the last vertex in $C$ by $q$. Observe that $\last(s_{\how_C(\og(C))},$ $C)=q$, which means that $s_{\how_C(\og(C))}$ sees $q$. If $e\in\reflex(P)$, then $e$ sees $q$ and no constraint is introduced. Next, suppose that $e\in{\cal C}(P)$. Let $\ell=\first(q,e)$ and $h=\last(q,e)$. Since $s_{\how_C(\og(C))}$ sees $q$, neither $\ell$ nor $h$ is $\nil$. We thus have the constraints $c^1_{(C,\og(C))}=[x \geq \ell]$ and $c^2_{(C,\og(C))}=[x \leq h]$. To prove that $\alpha$ satisfies them, we need to show that $\ell\leq \alpha(x)\leq h$. However, this directly follows from the fact that $\alpha(x)=s_{\how_C(\og(C))}$ sees $q$.

\medskip
{\noindent\bf Guarding the middle vertices in a convex region.} Lastly, choose some $t\in \{2,\ldots,\og(C)\}$. Denote $(e,i)=\how_C(t)$, $x=\bij^{-1}(e,i)$, $(e',i')=\how_C(t-1)$ and $x'=\bij^{-1}(e',i')$. Note that $\alpha(x)=s_{\how_C(t)}\in e$ and ${\alpha(x')}=s_{\how_C(t-1)}\in e'$. Recall that since $S$ is a solution, we have that the vertex $a=\last(s_{\how_C(t-1)},C)$ is {\em (i)} larger or equal to $b-1$ where $b=\first(s_{\how_C(t)},C)$, and {\em (ii)} smaller than $q=\last(s_{\how_C(t)},C)$. Note that $a=\last({\alpha(x')},C)$, $b=\first({\alpha(x)},C)$ and $q=\last({\alpha(x)},C)$. This implies that $a(x)\in e$ sees at least one vertex in $C$ as well as that $a(x')\in e'$ sees at least one vertex in $C$. In particular, four constraints are introduced, and it is immediate that both $\widetilde{c}^1_{(C,t)}$ and $\widetilde{c}^2_{(C,t)}$ are satisfied.

In what follows, we need to show that $\alpha$ satisfies the constraints inserted in our two sets of four cases, which depend on the way $e'$ views $C$ with respect to $\last$, and the way $e$ views $C$ with respect to both $\first$ and $\last$. In the analysis of all cases below, when we identify $f(\alpha(x'))$, we rely on the fact that $a\neq\nil$ and $a+1\in C$ (because $a\leq q+1$ and $q\in C$). Moreover, for the first set of four cases, we rely on the fact that there exists a vertex $j\in e$ such that $\first(j,C)\leq {a+1}$ (because $b\leq a+1$). For the second set set of four cases, we rely on the fact that there exists a vertex $j\in e$ such that $\last(j,C)\geq {a+1}$ (because $a\leq q-1$). Here, the analysis of some of the cases is identical (e.g., the first and third cases of the first set); however, recall that in other proofs, these cases were analyzed differently (e.g., in the proof of monotonicity).

\medskip
\noindent{\bf Case 1 of First Set.} The way $e'$ views $C$ is non-decreasing with respect to $\last$, and the way $e$ views $C$ is non-decreasing with respect to $\first$. Let $c=[x \leq f(x')]$ be the constraint inserted in this case. To prove that $\alpha$ satisfies $c$, we need to show that $\alpha(x)\leq f(\alpha(x'))$. By the discussion before the case analysis, $f(\alpha(x'))$ is the largest vertex $j\in e$ such that $\first(j,C)\leq {a+1}$. Then, we need to show that $\alpha(x)\leq j$. However, since $\first({\alpha(x)},C)\leq {a+1}$, the inequality follows.

\medskip
\noindent{\bf Case 2 of First Set.} The way $e'$ views $C$ is non-decreasing with respect to $\last$, and the way $e$ views $C$ is non-increasing with respect to $\first$. Let $c=[x \geq f(x')]$ be the constraint inserted in this case. To prove that $\alpha$ satisfies $c$, we need to show that $\alpha(x)\geq f(\alpha(x'))$. By the discussion before the case analysis, $f(\alpha(x'))$ is the smallest vertex $j\in e$ such that $\first(j,C)\leq {a+1}$. Then, we need to show that $\alpha(x)\geq j$. However, since $\first({\alpha(x)},C)\leq {a+1}$, the inequality follows.

\medskip
\noindent{\bf Case 3 of First Set.} The way $e'$ views $C$ is non-increasing with respect to $\last$, and the way $e$ views $C$ is non-decreasing with respect to $\first$. Let $c=[x \leq f(x')]$ be the constraint inserted in this case. To prove that $\alpha$ satisfies $c$, we need to show that $\alpha(x)\leq f(\alpha(x'))$. By the discussion before the case analysis, $f(\alpha(x'))$ is the largest vertex $j\in e$ such that $\first(j,C)\leq {a+1}$. Then, we need to show that $\alpha(x)\leq j$. However, since $\first({\alpha(x)},C)\leq {a+1}$, the inequality follows.

\medskip
\noindent{\bf Case 4 of First Set.} The way $e'$ views $C$ is non-decreasing with respect to $\last$, and the way $e$ views $C$ is non-increasing with respect to $\first$. Let $c=[x \geq f(x')]$ be the constraint inserted in this case. To prove that $\alpha$ satisfies $c$, we need to show that $\alpha(x)\geq f(\alpha(x'))$. By the discussion before the case analysis, $f(\alpha(x'))$ is the smallest vertex $j\in e$ such that $\first(j,C)\leq {a+1}$. Then, we need to show that $\alpha(x)\geq j$. However, since $\first({\alpha(x)},C)\leq {a+1}$, the inequality follows.

\medskip
\noindent{\bf Case 1 of Second Set.} The ways $e'$ and $e$ view $C$ are both non-decreasing with respect to $\last$. Let $c=[x \geq f(x')]$ be the constraint inserted in this case. To prove that $\alpha$ satisfies $c$, we need to show that $\alpha(x)\geq f(\alpha(x'))$. By the discussion before the case analysis, $f(\alpha(x'))$ is the smallest vertex $j\in e$ such that $\last(j,C)\geq {a+1}$. Then, we need to show that $\alpha(x)\geq j$. However, since $\last({\alpha(x)},C)\geq {a+1}$, the inequality follows.

\medskip
\noindent{\bf Case 2 of Second Set.} The ways $e'$ and $e$ view $C$ are non-decreasing and non-increasing, respectively, with respect to $\last$. Let $c=[x \leq f(x')]$ be the constraint inserted in this case. To prove that $\alpha$ satisfies $c$, we need to show that $\alpha(x)\leq f(\alpha(x'))$. By the discussion before the case analysis, $f(\alpha(x'))$ is the largest vertex $j\in e$ such that $\last(j,C)\geq {a+1}$. Then, we need to show that $\alpha(x)\leq j$. However, since $\last({\alpha(x)},C)\geq {a+1}$, the inequality follows.

\medskip
\noindent{\bf Case 3 of Second Set.} The ways $e'$ and $e$ view $C$ are non-increasing and non-decreasing, respectively, with respect to $\last$. Let $c=[x \geq f(x')]$ be the constraint inserted in this case. To prove that $\alpha$ satisfies $c$, we need to show that $\alpha(x)\geq f(\alpha(x'))$. By the discussion before the case analysis, $f(\alpha(x'))$ is the smallest vertex $j\in e$ such that $\last(j,C)\geq {a+1}$. Then, we need to show that $\alpha(x)\geq j$. However, since $\last({\alpha(x)},C)\geq {a+1}$, the inequality follows.

\medskip
\noindent{\bf Case 4 of Second Set.} The ways $e'$ and $e$ view $C$ are both non-increasing with respect to $\last$. Let $c=[x \leq f(x')]$ be the constraint inserted in this case. To prove that $\alpha$ satisfies $c$, we need to show that $\alpha(x)\leq f(\alpha(x'))$. By the discussion before the case analysis, $f(\alpha(x'))$ is the largest vertex $j\in e$ such that $\last(j,C)\geq {a+1}$. Then, we need to show that $\alpha(x)\leq j$. However, since $\last({\alpha(x)},C)\geq {a+1}$, the inequality follows.
\end{proof}

\section{Algorithm for Monotone 2-CSP}\label{sec:algoCSP}
In this section, we design a polynomial time algorithm for {\sc Monotone 2-CSP}, running in time $\rtimedcsp$. We obtain this algorithm by reducing the given instance $(X,C,N)$ to an instance of {\sc $2$-SAT}. We note that without monotonicity or arity bound, the problem is \WOH, while when we have both these conditions (and arity is at most two), then our algorithm shows that the problem is polynomial time solvable. Indeed, to see the necessity for monotonicity, consider a reduction from {\sc Multicolored Clique} to {\sc 2-CSP} as follows. For each vertex and edge in the hypothetical solution, we create a variable. That is, we have a variable $x_i$, for each $i \in [k]$, and for every distinct $i,j \in [k]$, where $i<j$, we have a variable $e_{ij}$. We can define two functions $f^1_{ij}$ and $f^2_{ij}$ which return the vertex from $i^{th}$ and $j^{th}$ part incident to the edge $e_{ij}$, respectively. Now we add constraints of the form $x_i = f_{ij}^1(e_{ij})$ and $x_j = f_{ij}^1(e_{ji})$, for $i<j$. Notice that the selected set of vertices and edges form a clique if and only if the {\sc 2-CSP} is satisfied for the respective assignment. Critically, note that the functions that we create are not monotone. Hence, the problem is \WOH, without the monotonicity condition. The necessity for monotonicity is given by Fomin et al.~\cite{cliquewidth}, who showed that for arity $4$ and when a requirement stronger than monotonicity is imposed, the problem is \WOH. 


%

If the function $f$ of a constraint $c=[x_i \si f(x_j)]$ is constantly $\beta$ (that is, for every $t\in\{0,\ldots,N\}$, $f(t)=\beta$), then we use the shorthand $c=[x_i \si \beta]$. Moreover, we suppose that every constraint represented by a quadruple is associated with two distinct variables.

Let $(X,C,N)$ be an instance of {\sc Monotone 2-CSP}. We create a {$2$-CNF-SAT} formula $\SC{C}$ as follows (we only describe its variables and clauses). For each $x \in X$ and $d\in \{0,1,\ldots, N,N+1\}$, we create a variable $x[d]$. (Setting $x[d] = 1$ will be interpreted as $x \geq d$.) We now describe the clauses that we create. 

\myparagraph{Ensuring Valid Assignments for Variables.}  We need to ensure that $x$ is assigned some value from $\{0,1,\ldots, N\}$. Thus, for each $x \in X$, $x \geq 0$ should always be satisfied. To ensure this, we add the clause $(x[0])$ to $\SC{C}$, for every $x\in X$. Similarly, to ensure that $x \leq N$, we add the clause $(\neg x[N+1])$ to $\SC{C}$, for each $x\in X$. 

\myparagraph{Encoding Order Implications.} For each $x \in X$ and $d \in \{1,2,\ldots, N,N+1\}$, we add the clause $(x[d] \to  x[d-1])$ to $\SC{C}$. (The above clauses ensure that if $x \geq d$, then $x \geq d-1$ also holds.)

\myparagraph{Encoding constant functions}. Consider a constraint of the form $c = [x \leq \beta]$, where $\beta \in \{0,1,\ldots, N\}$. We add the clause $(\neg x[\beta + 1])$ to $\SC{C}$. Next consider a constraint of the form $c = [x \geq \beta]$, where $\beta \in \{0,1,\ldots, N\}$. (We can safely assume that $\beta < N+1$, otherwise we can correctly report that the instance is a no-instance.) We add the clause $ (x[\beta])$ to $\SC{C}$. 

\myparagraph{Encoding non-constant functions.} We encode $c = [x_i \si f(x_j)]\in C$ based on different cases of $\si \in \{ \leq, \geq\}$ and whether $f$ is non-increasing or non-decreasing. 
\begin{enumerate}
\item $\si =$ $\geq$ and $f$ is non-decreasing. For each $d \in \{0,1,\ldots, N\}$, we add the clause $(x_j[d] \to x_i[f(d)])$. 

\item $\si =$ $\geq $ and $f$ is non-increasing. For each $d \in \{0,1,\ldots, N\}$, we add the clause $(\neg x_j[d+1] \to x_i[f(d)])$. 

\item $\si =$ $\leq$ and $f$ is non-decreasing. For each $d \in \{0,1,\ldots, N\}$, we add the clause $(\neg x_j[d] \to \neg x_i[f(d) + 1])$.

\item $\si =$ $\leq $ $f$ is non-increasing. For each $d \in \{0,1,\ldots, N\}$, we add the clause $(x_j[d] \to  \neg x_i[f(d) + 1])$. 
\end{enumerate}

In the following lemma we prove the correctness of our reduction. 

\begin{lem}\label{lem:correct-reduction-2-sat}
$(X,C,N)$ is a yes-instance of {\sc Monotone 2-CSP} if and only if $\SC{C}$ is a yes-instance of {\sc $2$-SAT}. 
\end{lem}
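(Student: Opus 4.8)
The plan is to set up a translation between integer assignments $\alpha\colon X\to\{0,\dots,N\}$ for {\sc Monotone 2-CSP} and truth assignments $\psi$ for $\SC{C}$ via the \emph{threshold encoding}, in which $\psi(x[d])=1$ is read as ``$\alpha(x)\ge d$''. For the ``only if'' direction I would take a solution $\alpha$ and define $\psi(x[d])=1$ iff $\alpha(x)\ge d$, for all $x\in X$ and $d\in\{0,\dots,N+1\}$, and then check each clause family of $\SC{C}$ in turn. The clauses $(x[0])$ and $(\neg x[N+1])$ hold because $0\le\alpha(x)\le N$; the order-implication clauses $(x[d]\to x[d-1])$ hold because ``$\alpha(x)\ge d$'' implies ``$\alpha(x)\ge d-1$''; the constant-function clauses follow immediately from the (trivially satisfied) constraints $[x\le\beta]$ or $[x\ge\beta]$; and for each of the four non-constant-function clause families one unwinds the encoding and applies monotonicity of $f$ in the appropriate direction together with the assumption that $\alpha$ satisfies the corresponding constraint $[x_i\,\si\,f(x_j)]$. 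For instance, when $\si={\ge}$ and $f$ is non-decreasing, if $\psi(x_j[d])=1$ then $\alpha(x_j)\ge d$, so $f(\alpha(x_j))\ge f(d)$, and $\alpha(x_i)\ge f(\alpha(x_j))\ge f(d)$, which is exactly $\psi(x_i[f(d)])=1$; the other three cases are analogous two-line computations.

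For the ``if'' direction I would start from a satisfying assignment $\psi$ of $\SC{C}$ and first extract an integer assignment. The clause $(x[0])$ forces $\psi(x[0])=1$, the order-implication clauses force the set $D_x=\{d\in\{0,\dots,N+1\}:\psi(x[d])=1\}$ to be downward closed, and $(\neg x[N+1])$ forces $N+1\notin D_x$; hence $\alpha(x):=\max D_x$ is well-defined and lies in $\{0,\dots,N\}$, and --- crucially --- $\psi(x[d])=1$ if and only if $\alpha(x)\ge d$ for all $d\in\{0,\dots,N+1\}$. This ``threshold characterization'' is what keeps the remaining verification short. For each constant-function constraint I would read the bound directly off the corresponding single-literal clause (e.g.\ $(\neg x[\beta+1])$ gives $\alpha(x)\le\beta$). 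For each non-constant-function constraint $[x_i\,\si\,f(x_j)]$ I would instantiate the relevant clause family at the threshold index $d$ pinned down by $\alpha(x_j)$, namely the $d$ for which the threshold characterization forces the antecedent literal of $x_j$ true (this is $d=\alpha(x_j)$ when the antecedent is $x_j[d]$ and $d=\alpha(x_j)$ or $\alpha(x_j)-1$ when it is $\neg x_j[d]$, depending on the case); the clause then forces the consequent literal of $x_i$, and reading it back through the threshold characterization --- using monotonicity of $f$ to identify $f(d)$ with $f(\alpha(x_j))$ --- yields precisely $[\alpha(x_i)\,\si\,f(\alpha(x_j))]$.

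The main obstacle I anticipate is purely the bookkeeping across the four cases of non-constant monotone functions: one must select the right threshold index at which to instantiate each clause in the reverse direction --- getting the off-by-one between the ``$\ge d$'' encoding $x[d]$ and its negation exactly right --- and invoke monotonicity of $f$ in precisely the direction that makes $f(d)$ \emph{equal} $f(\alpha(x_j))$, rather than merely bounding it the wrong way (which would only give a strictly weaker inequality and break the argument). Once the threshold characterization $\psi(x[d])=1\iff\alpha(x)\ge d$ is established, each of the eight sub-checks (four clause families $\times$ two directions), together with the constant-function and structural clauses, reduces to a single short chain of inequalities, so I expect no conceptual difficulty beyond carrying out this case analysis carefully.
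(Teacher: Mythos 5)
Your proposal is correct and follows essentially the same route as the paper's own proof: the forward direction uses the identical threshold assignment ($x[d]$ true iff $\alpha(x)\ge d$) and verifies each clause family via monotonicity of $f$, and the reverse direction extracts $\alpha(x)$ as the largest $d$ with $x[d]$ true and then instantiates each clause family at the index determined by $\alpha(x_j)$, exactly as the paper does. The only imprecision is that in the reverse direction monotonicity plays no role (and could not yield the needed equality): one simply instantiates the family at the index where $f$ is evaluated precisely at $\alpha(x_j)$ --- e.g.\ at $d=\alpha(x_j)$ for antecedents $x_j[d]$ and $\neg x_j[d+1]$ --- which is exactly the off-by-one bookkeeping you already flag as the main thing to get right.
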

\begin{proof}
Let $Z$ be the set of variables of $\SC{C}$. For one direction assume that $(X,C,N)$ is a yes-instance of {\sc Monotone 2-CSP}, and let $\asg: X \rightarrow \{0,1,\ldots, N\}$ be its solution. We construct an assignment $\varphi: Z \rightarrow \{0,1\}$ as follows. Consider $x\in X$ and $d \in \{0,1,\ldots, N, N+1\}$. If $\asg(x) \leq d$, then we set $\varphi(x[d]) =1$, otherwise, we set $\varphi(x[d]) = 0$. We will show that $\varphi$ is a satisfying assignment for $\SC{C}$. For $x \in X$ as $\alpha(x) \in \{0,1,\ldots, N\}$, the clauses $(x[0])$ and $(\neg x[N+1])$ are satisfied, thus the clauses ensuring valid assignments for variables and clauses for order implications are satisfied. Consider $\beta \in \{0,1,\ldots, N\}$ and $x \in X$. If $[x \leq \beta] \in C$, then by the construction of $\varphi$, the clause $(\neg x[\beta + 1])\in \SC{C}$ is satisfied. Similarly, if $[x \geq \beta] \in C$, then the clause $(x[\beta]) \in \SC{C}$ is satisfied. Thus, all the clauses encoding constant functions are satisfied. Now consider a constraint $c = [x_i \si f(x_j)] \in C$, and consider the following cases based on $\si \in \{\leq, \geq\}$ and whether $f$ is non-decreasing or non-decreasing. 

\begin{enumerate}
\item If $\si =$ $\geq$ and $f$ is non-decreasing, then for each $d \in \{0,1,\ldots, N\}$, we have the clause $(x_j[d] \to x_i[f(d)])$ in $\SC{C}$. We show that all the above clauses are satisfied by $\varphi$. Consider some $d \in \{0,1,\ldots, N\}$. If $d > \asg(x_j)$, then $\varphi(x_j[d]) = 0$, and thus $(x_j[d] \to x_i[f(d)])$ is satisfied. Now consider the case when $d \leq \asg(x_j)$. As $\alpha$ is a solution for the instance $(X,C,N)$, we have $f(\asg(x_j)) \leq \asg(x_i)$. As $f$ is non-decreasing, we have $f(d) \leq f(\asg(x_j)) \leq \asg(x_i)$. Thus we can conclude that $(x_j[d] \to x_i[f(d)])$ is satisfied by $\varphi$. 

\item If $\si =$ $\geq $ and $f$ is non-increasing, then for each $d \in \{0,1,\ldots, N\}$, we have $(\neg x_j[d+1] \to x_i[f(d)]) \in \SC{C}$. Consider some $d \in \{0,1,\ldots, N\}$. If $d < \asg(x_j)$, then $\varphi(x_j[d+1]) = 1$, and thus $(\neg x_j[d+1] \to x_i[f(d)])$ is satisfied. Now consider the case when $d \geq \asg(x_j)$, and $\varphi(x_j[d+1]) = 0$. As $\alpha$ is a solution for the instance $(X,C,N)$, we have $f(\asg(x_j)) \leq \asg(x_i)$. As $f$ is non-increasing, we have $f(d) \leq f(\asg(x_j)) \leq \asg(x_i)$. Thus we can conclude that $(\neg x_j[d+1] \to x_i[f(d)])$ is satisfied by $\varphi$.

\item If $\si =$ $\leq$ and $f$ is non-decreasing, then for each $d \in \{0,1,\ldots, N\}$, we have $(\neg x_j[d] \to \neg x_i[f(d) + 1]) \in \SC{C}$. Consider some $d \in \{0,1,\ldots, N\}$. If $d \leq \asg(x_j)$, then $\varphi(x_j[d]) = 1$, and thus $(\neg x_j[d] \to \neg x_i[f(d) + 1])$ is satisfied by $\varphi$. Now consider the case when $d > \asg(x_j)$, and $\varphi(x_j[d]) = 0$. As $\alpha$ is a solution for the instance $(X,C,N)$ and $f$ is non-decreasing, we have $\asg(x_i) \leq f(\asg(x_j)) \leq f(d)$. Thus, $\varphi(x_i[f(d) + 1]) = 0$, and we can conclude that $(\neg x_j[d] \to \neg x_i[f(d) + 1])$ is satisfied by $\varphi$.

\item If $\si =$ $\leq $ $f$ is non-increasing, for each $d \in \{0,1,\ldots, N\}$, we have $(x_j[d] \to  \neg x_i[f(d) + 1]) \in \SC{C}$. Consider some $d \in \{0,1,\ldots, N\}$. If $d > \asg(x_j)$, then $\varphi(x_j[d]) = 0$, and thus $(x_j[d] \to  \neg x_i[f(d) + 1])$ is satisfied. Now consider the case when $d \leq \asg(x_j)$, and $\varphi(x_j[d]) = 1$. As $\alpha$ is a solution for the instance $(X,C,N)$ and $f$ is non-increasing, we have $\asg(x_i) \leq f(\asg(x_j)) \leq f(d)$. Thus, $\varphi(x_i[f(d) + 1]) = 0$, and we can conclude that $(x_j[d] \to  \neg x_i[f(d) + 1])$ is satisfied by $\varphi$.
\end{enumerate}
The above discussions cover all clauses in $\SC{C}$, thus we can conclude that $\SC{C}$ is a yes-instance of {\sc $2$-SAT}.

For the other direction, let $\SC{C}$ be a yes-instance of {\sc $2$-SAT}, and let $\varphi: Z \rightarrow \{0,1\}$ be its solution. From the clauses for encoding valid assignments and order implications, for each $x\in X$, there is $d_x \in \{0,1,\ldots, N\}$, such that for all $d \in \{0,1, \ldots, d_x\}$, we have $x[d] = 1$ and for any $d' \in \{d+1, d+2, \ldots, N, N+1\}$, we have $x[d] = 0$. We construct $\asg : X \rightarrow \{0,1,\ldots, N\}$, by setting $\asg(x)=d_x$, where $x\in X$. We argue that $\asg$ is a solution for the instance $(X,C,N)$. Consider a clause of the form $[x \leq \beta] \in C$, where $\beta \in \{0,1,\ldots, N\}$. As the clause $(\neg x[\beta + 1])\in \SC{C}$ is satisfied by $\varphi$, we have $\asg(x) = d_x \leq \beta$. Thus, $[x \leq \beta] \in C$ is satisfied by $\asg$. Next, consider a clause of the form $[x \geq \beta] \in C$, for some $\beta \in \{0,1,\ldots, N\}$. As $ (x[\beta]) \in \SC{C}$ is satisfied by $\varphi$, we have $\asg(x) = d_x \geq \beta$. Now consider a constraint $c = [x_i \si f(x_j)] \in C$, and consider the following cases based on $\si \in \{\leq, \geq\}$ and whether $f$ is non-decreasing or non-increasing. 

\begin{enumerate}
\item If $\si =$ $\geq$ and $f$ is non-decreasing, then for each $d \in \{0,1,\ldots, N\}$, we have the clause $(x_j[d] \to x_i[f(d)])$ in $\SC{C}$. Note that we have $\varphi(x_j[d_{x_j}]) = 1$ and hence, $\varphi(x[f(d_{x_j})]) = 1$. Thus, $d_{x_i} \geq f(d_{x_j})$. Hence we can conclude that $\asg(x_j) =  d_{x_i} \geq f(\asg(x_j))$.

\item If $\si =$ $\geq $ and $f$ is non-increasing, then for each $d \in \{0,1,\ldots, N\}$, we have $(\neg x_j[d+1] \to x_i[f(d)]) \in \SC{C}$. Note that $\varphi(x_j[d_{x_j}+1]) = 0$. Thus $\asg(x_i) = d_x \geq f(\asg(x_j))$.

\item If $\si =$ $\leq$ and $f$ is non-decreasing, then for each $d \in \{0,1,\ldots, N\}$, we have $(\neg x_j[d] \to \neg x_i[f(d) + 1]) \in \SC{C}$. As $\varphi(x_j[d_{x_j}+1]) = 0$, we must have $d_x \leq f(d_{x_j})$. Thus, $\asg(x_i) = d_x \leq f(\asg(x_j))$.

\item If $\si =$ $\leq $ $f$ is non-increasing, for each $d \in \{0,1,\ldots, N\}$, we have $(x_j[d] \to \neg x_i[f(d) + 1]) \in \SC{C}$. As $\varphi(x_j[d_{x_j}]) = 1$, we must have $d_x \leq f(d_{x_j})$. Thus, we have $\asg(x_i) = d_x \leq f(\asg(x_j))$.
\end{enumerate}
Thus, we can conclude that $(X,C,N)$ is a yes-instance of {\sc Monotone 2-CSP}. 
\end{proof}

{\sc $2$-SAT} admits an algorithm running in time $\rtimetsat$, where $n$ is the number of variables and $m$ is the number of clauses~\cite{DBLP:journals/ipl/AspvallPT79}. This together with the construction of the {\sc $2$-SAT} instance $\SC{C}$ for the given instance $(X,C,N)$ of {\sc Monotone 2-CSP} and Lemma~\ref{lem:correct-reduction-2-sat}, implies Theorem~\ref{thm:csp}.

\section{Discretization for Boundary-Vertex Art Gallery and Vertex-Boundary Art Gallery}\label{sec:discretization}
In this section we show how we can discretize the given polygon to solve {\sc Boundary-Vertex Art Gallery} and {\sc Vertex-Boundary Art Gallery}, using the techniques used by our algorithm for {\sc Vertex-Boundary Art Gallery}. 

\begin{figure}[t]
\centering
\fbox{\includegraphics[scale=0.7]{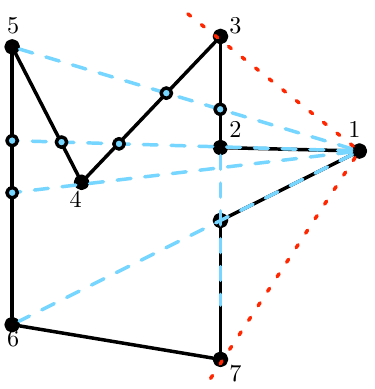}}
\caption{A (partial) illustration of the construction of $\Essn(P)$. The labelled vertices are the vertices of the polygon, whereas the blue vertices are the newly added vertices.}\label{fig:csp}
\end{figure}

We create a set $\Essn(P)$ of ``essential points'' of $P$, which will be useful for ``discretization''. 


\begin{definition}\label{def:essential-set-BB}
Consider a simple polygon $P$ with $V(P)=\{1,2,\cdots,n\}$ and $E(P) =\{\{i,i+1\} : i\in [n]\}$ (computation modulo $n$). The \emph{essential set} of $P$ is the set $\Essn(P)$ constructed as follows. Initially, $\Essn(P)$ contains all the vertices of $P$. For every distinct vertices $i,j\in [n]$, consider the line $L_{ij}$ containing $i$ and $j$. For each edge $e=\{i',j'\}$ which is not a sub-segment of $L_{ij}$, we add the intersection point (if it exists) of $L_{ij}$ and the line segment $\overline{i'j'}$, to the set $\Essn(P)$.
\end{definition} 

Note that $\Essn(P)$ can be computed in polynomial time. (We remark that by constructing $\Essn(P)$ more carefully (than what we do), we may optimize its size, but we choose to construct it this way to keep the definition simple.) 
Let $P_1$ be the polygon with vertex set $\Essn(P)$, obtained from $P$ by sub-dividing edges of $P$ (possibly multiple times). 

In the {\sc Boundary-Vertex Art Gallery} problem, the guards are placed on the boundary of $P$ and the objective is to guard the vertices of $P$. In the next lemma shows that if the given instance $(P,k)$ of {\sc Boundary-Vertex Art Gallery} is a yes-instance, then there is a solution which places guards only at vertices from $P_1$.

\begin{lem}\label{lem:place-guard-essn}
Let $(P,k)$ be a yes-instance of {\sc Boundary-Vertex Art Gallery}. Then there is a solution $S \subseteq V(P_1)$ to the instance $(P,k)$ of {\sc Boundary-Vertex Art Gallery}. 
\end{lem}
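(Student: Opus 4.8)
The plan is to take an arbitrary solution $S^\star \subseteq \partial P$ to the instance $(P,k)$ of {\sc Boundary-Vertex Art Gallery} and to show that each guard $g\in S^\star$ can be relocated to a vertex of $P_1$ without losing any visibility to vertices of $P$. More precisely, I would prove that for every point $g$ on an edge $e=\{i',j'\}$ of $P$ and every vertex $v\in V(P)$ that $g$ sees, there is an endpoint of the maximal subsegment of $e$ from which $v$ is visible, and that this endpoint is a vertex of $P_1$; moreover this endpoint can be chosen uniformly over all vertices $v$ seen by $g$, after possibly splitting into the two directions along $e$.

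First I would fix $g$ in the relative interior of an edge $e$ (the case $g\in V(P)$ is trivial since $V(P)\subseteq V(P_1)$). Let $Q(g)=\{v\in V(P): g \text{ sees } v\}$. The key observation is that for each $v\in Q(g)$, the set of points of $e$ that see $v$ is a (relatively closed) subsegment of $e$ containing $g$: indeed the boundary of the visibility region of the vertex $v$ inside $P$ consists of segments of lines of the form $L_{ab}$ through pairs of vertices $a,b$ (the standard fact that the visibility polygon of a point is bounded by segments supported by lines through pairs of polygon vertices, together with one window per reflex vertex that blocks $v$). Consequently the two endpoints of the maximal ``$v$-visible'' subsegment of $e$ are either endpoints of $e$ (hence vertices of $P$, hence in $V(P_1)$) or intersection points of $e$ with some line $L_{ab}$ through two vertices of $P$ — and these are exactly the points added to $\Essn(P)$ in Definition~\ref{def:essential-set-BB}. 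So every such endpoint lies in $V(P_1)$.

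Now I would walk along $e$ starting from $g$ toward $j'$: as we move, a vertex $v\in Q(g)$ is ``lost'' precisely when we cross the far endpoint of its $v$-visible subsegment, and by the previous paragraph the first such crossing occurs at a point $p^+ \in V(P_1)$ (or at $j'$ itself). Thus every $v\in Q(g)$ that is still visible just before reaching $p^+$ is in fact visible from $p^+$ (closedness of the subsegments). Symmetrically, moving toward $i'$ we reach a point $p^- \in V(P_1)$ from which every $v\in Q(g)$ visible just before $p^-$ is visible. Since moving from $g$ toward $j'$ up to $p^+$ can only lose the vertices whose visibility windows end on the $i'$-side of $g$, those are exactly the ones retained at $p^-$; hence $Q(g) \subseteq \mathrm{vis}(p^+) \cup \mathrm{vis}(p^-)$. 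Replacing the single guard $g$ by the two vertices $p^+,p^-\in V(P_1)$ thus preserves coverage of $Q(g)$. Doing this for every $g\in S^\star$ yields a set $S'\subseteq V(P_1)$ with $|S'|\le 2k$ that sees $V(P)$; but this does not immediately give the claimed bound $|S|\le k$.

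The main obstacle, and the point I would spend the most care on, is precisely getting down to $k$ rather than $2k$ guards: I must argue that a \emph{single} relocation target in $V(P_1)$ suffices for each $g$. For this I would not split into the two directions independently but instead observe that, by Proposition~\ref{prop:reflex} / the structure of visibility windows, the vertices of $Q(g)$ that become invisible when we perturb $g$ slightly toward $j'$ and those that become invisible when we perturb toward $i'$ cannot both be nonempty in a way that forces two distinct endpoints — more carefully, one considers the nearest window endpoint on each side and shows that sliding $g$ all the way to whichever of $p^+,p^-$ is ``inward'' relative to the reflex structure keeps all of $Q(g)$ visible, because any vertex lost on one side was only visible to $g$ through a window that the inward move does not cross. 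If even this fails in general, the fallback is to prove the weaker but still sufficient statement that there is \emph{some} point of $V(P_1)$ on the boundary (not necessarily on $e$) from which all of $Q(g)$ is visible — e.g.\ the relevant reflex vertex guaranteed by Proposition~\ref{prop:reflex} together with the observation that reflex vertices of $P$ are in $V(P_1)$ — but since reflex vertices guard all of $P$ this would only reprove the trivial bound $k\le r$, so the honest route is the window-crossing argument above, and I expect that the precise bookkeeping of which windows are crossed by which direction of motion along $e$ is where the real work lies.
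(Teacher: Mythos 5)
There is a genuine gap, and you yourself flag it: as written, your argument only yields a relocated set of size at most $2k$ (two targets $p^+,p^-$ per guard), and the reduction to a \emph{single} target per guard --- which is exactly the content of the lemma --- is left as the ``main obstacle'', resolved only by a vague appeal to an ``inward'' direction relative to the reflex structure, with a fallback that you correctly admit just reproves $k\le r$. The irony is that your own machinery already closes the gap, but you mis-book-keep it. If $p^+$ is the \emph{first} point, moving from $g$ toward $j'$, at which some $v\in Q(g)$ has the far endpoint of its $v$-visible subsegment, then by minimality no vertex of $Q(g)$ is lost strictly before $p^+$, and by closedness the vertices whose subsegments end exactly at $p^+$ are still seen from $p^+$; hence \emph{all} of $Q(g)$ is visible from $p^+\in V(P_1)$, and no second target is needed. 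Your sentence ``moving from $g$ toward $j'$ up to $p^+$ can only lose the vertices whose visibility windows end on the $i'$-side of $g$'' is where the argument derails: moving toward $j'$ you can only lose vertices whose windows end on the $j'$-side, and the first such loss happens, by definition, not before $p^+$.

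The paper's proof makes the same point in a cleaner way, and it is worth comparing. Instead of walking along the edge $e$ of $P$, it works with the edge $\{u,w\}$ of $P_1$ that contains the misplaced guard $q$ in its interior: since consecutive vertices of $P_1$ are consecutive essential points, no essential point lies strictly inside $\{u,w\}$; and every boundary endpoint of a visibility window of a vertex $v$ is supported by a line through $v$ and a blocking reflex vertex, i.e.\ a line $L_{ab}$ through two vertices of $P$, hence is an essential point. So if $u$ failed to see some $v$ seen by $q$, the line through $v$ and the blocking reflex vertex $v^*$ would meet $\overline{uq}$ at a point other than $u$, producing an essential point in the interior of the $P_1$-edge --- a contradiction. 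Thus the single endpoint $u$ inherits all of $q$'s visibility, and an exchange argument (take a solution maximizing $|S\cap V(P_1)|$ and replace $q$ by $u$) finishes the proof without any directional case analysis. Your window-endpoint observation (that these endpoints are intersections of $e$ with lines through pairs of vertices, hence in $\Essn(P)$) is exactly the right ingredient; what is missing in your write-up is drawing the one-target conclusion from it rather than the union bound, either via the corrected first-crossing argument or via the paper's ``no essential point inside a $P_1$-edge'' formulation.
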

\begin{proof}
Consider a minimal solution $S$ to $(P,k)$, where $S$ is a set of points from the boundary of $P$ of size at most $k$, and $S$ is a solution that maximizes $|V(P_1) \cap S|$. We will show that $S \subseteq V(P_1)$. Towards a contradiction suppose that $S \not \subseteq V(P_1)$, and consider a point $q \in S \setminus V(P_1)$. As $q \notin V(P_1)$, there is a unique edge in $P_1$ containing it, denote that edge by $e=\{u,w\}$, where $u < w$. Let $S' = (S \setminus \{q\}) \cup \{u\}$. We will show that $S'$ is also a solution for the instance $(P,k)$, thus contradicting the choice of $S$. To prove that $S'$ is a solution, it is enough to show that for every $v \in V(P)$ that is seen by $q$, $u$ also sees $v$. Consider some $v \in V(P)$ that is seen by $q$. Towards a contradiction assume that $u$ does not see $v$. Let $T$ be the triangle defined by $v,u$ and $q$. As $u$ does not see $v$ and $q \notin V(P_1)$, $T$ is a non-degenerate triangle. Also the line segment $\overline{uv}$ is not completely contained in $P$ (or $P_1$), and thus there is a reflex vertex $v^*$ from $P$ that is either strictly contained inside $T$ or contained in the line segment $\overline{vq}$. In either case, the line $L$ containing $v$ and $v^*$ intersects $\overline{uq}$ at a point different than $u$. This contradicts that $\{u,w\}$ is the edge in $P_1$ containing $q$, where $q\notin V(P_1)$. This concludes the proof. 
\end{proof}

We now briefly explain how we can obtain an \FPT\ algorithm for {\sc Boundary-Vertex Art Gallery} using the techniques that we used in Section~\ref{sec:art} and Lemma~\ref{lem:place-guard-essn}. Let $(P,k)$ be an instance of {\sc Boundary-Vertex Art Gallery}, and define $P_1$ as was described earlier. The first component of our algorithm for {\sc Vertex-Vertex Art Gallery} was a Turing reduction to a structured form of {\sc Art Gallery}, called {\sc Structured Art Gallery} (see Section~\ref{sec:structured}). We can define a {\sc Structured Boundary-Vertex Art Gallery} which takes an additional input, which is the set of vertices to be guarded. In additional to all other inputs, we provide $P_1$ as the input polygon and $V(P) \subseteq V(P_1)$ as the set of vertices to be guarded. The safeness of the above Turing reduction can be obtained from Lemma~\ref{lem:place-guard-essn} and arguments similar to the one used for the proof of Lemma~\ref{lem:correct-Turing-reduction-vvag}. The next step is to reduce the structured instance to an instance of {\sc Monotone CSP}. We follow similar procedure as given in Section~\ref{sec:reduction}, but we restrict the ranges for the functions to vertices appearing in $V(P)$. Finally, we resolve the instance by solving the instances of {\sc Monotone CSP}, using Theorem~\ref{thm:csp}. From the above discussions we can obtain the following theorem.

\begin{theorem}\label{thm:boundary-vertex-art-gallery}
{\sc Boundary-Vertex Art Gallery} is \FPT\ parameterized by $r$, the number of reflex vertices. In particular, it admits an algorithm with running time $r^{\OO(r^2)}n^{\OO(1)}$.
\end{theorem}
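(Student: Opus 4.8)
The plan is to reduce {\sc Boundary-Vertex Art Gallery} on an instance $(P,k)$ to the annotated generalization of {\sc Vertex-Vertex Art Gallery} on the subdivided polygon $P_1$ (with vertex set $\Essn(P)$), and then invoke the machinery of Section~\ref{sec:art}. First I would use Lemma~\ref{lem:place-guard-essn}: it implies that $(P,k)$ is a yes-instance of {\sc Boundary-Vertex Art Gallery} if and only if there is a set $S\subseteq V(P_1)$ with $|S|\leq k$ that sees every vertex of $V(P)\subseteq V(P_1)$. This is exactly an instance of the annotated problem where the polygon is $P_1$, the guard set $G$ is $V(P_1)$, and the client set is $V(P)$.

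The point that keeps the parameter under control is that $P_1$ is obtained from $P$ by subdividing edges, so every newly added vertex has interior angle exactly $180$ degrees and is therefore a \emph{convex} vertex in the sense of Section~\ref{sec:prelims}. Hence $\reflex(P_1)=\reflex(P)$, i.e.\ the reflex-vertex parameter is unchanged, while $|V(P_1)|=|\Essn(P)|=n^{\OO(1)}$ since $\Essn(P)$ adds at most one point per pair of vertices and per edge. Thus running the (annotated form of the) algorithm of Theorem~\ref{mainthm:agpreflex} on $(P_1,k)$ with the stated $G$ and client set yields total running time $r^{\OO(r^2)}|V(P_1)|^{\OO(1)}=r^{\OO(r^2)}n^{\OO(1)}$.

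It then remains to check that the three-stage reduction of Section~\ref{sec:art} goes through in the annotated setting. The structural claims of Section~\ref{sec:structureAG} concern only visibility among vertices of a simple polygon and make no reference to $G$ or to the client set, so they apply verbatim to $P_1$. In the Turing reduction to {\sc Structured Art Gallery} (Section~\ref{sec:structured}) the functions $\ig$ still place guards only on convex regions and reflex vertices of $P_1$, but the $\og$/$\how$ bookkeeping now only has to account for the vertices of $V(P)$; correctness follows from Lemma~\ref{lem:place-guard-essn} together with arguments identical to those in the proof of Lemma~\ref{lem:correct-Turing-reduction-vvag}, where in the reverse direction the induction over a convex region only needs to certify visibility of the client vertices lying in that region. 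In the Karp reduction to {\sc Monotone 2-CSP} (Section~\ref{sec:reduction}) we likewise restrict the ranges of the monotone functions to vertices of $V(P)$; the monotonicity arguments are unaffected, as they invoke only Lemma~\ref{lem:mainStructF} and Lemma~\ref{lem:seeAllBetFin} applied to $P_1$. Finally, Theorem~\ref{thm:csp} resolves each of the $r^{\OO(r^2)}$ produced {\sc Monotone 2-CSP} instances in polynomial time.

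I expect the main obstacle to be the bookkeeping rather than any new idea: one must verify that every place in Sections~\ref{sec:structured} and~\ref{sec:reduction} where ``every vertex of $P$'' is required to be guarded can be replaced cleanly by ``every vertex of the designated client set'' without breaking the inductive correctness arguments, and one must confirm carefully that subdivision leaves $\reflex(P)$ genuinely untouched so that the exponential blow-up remains a function of $r$ alone. Both points are routine but must be stated explicitly for self-containment.
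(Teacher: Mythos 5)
Your proposal is correct and follows essentially the same route as the paper: construct $\Essn(P)$ and the subdivided polygon $P_1$, use Lemma~\ref{lem:place-guard-essn} to restrict guards to $V(P_1)$, observe that subdivision adds only $180$-degree (hence convex) vertices so the parameter $r$ and the polynomial bound on $|V(P_1)|$ are preserved, and then run the annotated version of the Section~\ref{sec:art} pipeline (structured problem with client set $V(P)$, Karp reduction with ranges restricted to $V(P)$, and Theorem~\ref{thm:csp}). The level of detail at which you defer the adaptation of Lemma~\ref{lem:correct-Turing-reduction-vvag} and Section~\ref{sec:reduction} to the annotated setting matches the paper's own treatment.
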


Next we turn to {\sc Vertex-Boundary Art Gallery}. Recall that in the {\sc Vertex-Boundary Art Gallery} problem, the guards are to be placed on the vertices of $P$ and the goal is to guard the whole boundary of $P$. We obtain $P_1$ from $P$ as was described earlier. Furthermore, we obtain $P_2$ from $P_1$ by sub-dividing each edge of $P_1$ exactly once. In the next lemma we show that any set that guards all vertices of $P_2$, guards the whole boundary of $P$.

\begin{lem}\label{lem:enough-to-guard-vp2}
Let $(P,k)$ be an instance of {\sc Vertex-Boundary Art Gallery}. Consider a set $S \subseteq V(P)$ of size at most $k$, such that for each $v \in V(P_2)$, there is $s\in S$ that sees $v$. Then $S$ is a solution to the instance $(P,k)$ of {\sc Vertex-Boundary Art Gallery}. 
\end{lem}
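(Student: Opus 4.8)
The plan is to show that guarding every vertex of $P_2$ already forces every point on the boundary of $P$ to be guarded. First I would fix a point $p$ on the boundary of $P$; it lies on some edge $e=\{u,w\}$ of $P$. The key observation is that in the passage from $P$ to $P_2$, every edge of $P$ has been subdivided many times: the subdivision into $P_1$ introduces, on each edge $e$, all intersection points of $e$ with the lines $L_{ij}$ through pairs of vertices, and then $P_2$ adds one further subdivision point in the middle of each resulting sub-edge. Consequently, $p$ lies on a sub-edge $\{a,b\}$ of $P_2$ with $a,b\in V(P_2)$ such that the \emph{open} segment $(a,b)$ contains no point of $V(P_1)$ and no point of the middle of any $P_1$-sub-edge other than one of $a,b$. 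By hypothesis there are guards $s_a,s_b\in S$ seeing $a$ and $b$, respectively; if either $s_a$ or $s_b$ also sees $p$ we are done, so assume toward a contradiction that neither does.

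Next I would argue, exactly as in the proof of Lemma~\ref{lem:place-guard-essn}, that failure of a vertex $s\in S\subseteq V(P)$ to see an interior point $p$ of the sub-edge $\{a,b\}$ is witnessed by a reflex vertex of $P$. Concretely, since $s$ sees $a$ but not $p$, the segment $\overline{sp}$ leaves $P$, so there is a reflex vertex $v^\star$ of $P$ lying inside the triangle $T$ with vertices $s,a,p$ (or on segment $\overline{ap}$); here the triangle is non-degenerate because $s$ does see $a$ and the edge $e$ is not collinear with $\overline{sa}$ (if it were, visibility of $a$ would carry along $e$ to $p$ as well, using that $P$'s boundary is locally the segment $\overline{ab}$). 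The line $L$ through $a$ and $v^\star$ then crosses the relative interior of segment $\overline{ap}\subseteq\overline{ab}$, hence $L$ crosses the relative interior of the $P_1$-sub-edge containing $\{a,b\}$; but by construction $L=L_{a v^\star}$ is one of the lines used to define $\Essn(P)$, so its intersection with that sub-edge is a vertex of $P_1$, contradicting the fact that the sub-edge of $P_2$ containing $p$ has no interior point of $V(P_1)$ on it --- unless that intersection point is exactly $a$ or $b$, which is excluded since $L$ passes through $a$ and crosses the segment transversally at an interior point. I would run the symmetric argument with $s=s_b$ and the endpoint $b$, using the midpoint-subdivision to guarantee that at least one of the two ``halves'' of the $P_1$-sub-edge is entirely on one side, so that the obstructing reflex vertex forced by $s_a$ (or $s_b$) cannot simultaneously be consistent with both $a$ and $b$ being seen.

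Finally I would assemble these pieces: since the sub-edge $\{a,b\}$ of $P_2$ is so short (it contains no essential point in its interior, and the midpoint subdivision further splits each $P_1$-edge), any reflex vertex obstructing visibility of $p$ from $s_a$ would already obstruct visibility of $a$ from $s_a$, or symmetrically for $b$ --- contradicting $s_a$ seeing $a$ (resp.\ $s_b$ seeing $b$). Hence one of $s_a,s_b$ sees $p$, so $S$ guards $p$; as $p$ was an arbitrary boundary point, $S$ guards the whole boundary of $P$, and since $|S|\le k$ and $S\subseteq V(P)$, $S$ is a solution to $(P,k)$. The main obstacle I anticipate is the bookkeeping in the previous paragraph: making precise why the midpoint subdivision from $P_1$ to $P_2$ is exactly what prevents a single reflex vertex from being compatible with the visibility of both endpoints $a$ and $b$ of the sub-edge while still blocking the interior point $p$ --- i.e.\ correctly pinning down which ``side'' of the sub-edge the obstruction lies on and why the midpoint guarantees one clean half. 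Everything else is a direct reprise of the triangle-and-reflex-vertex argument already used in Lemma~\ref{lem:place-guard-essn}.
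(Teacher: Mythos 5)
There is a genuine gap, and it sits exactly where you flag your ``main obstacle.'' Two problems. First, the line you use to manufacture an essential point is the wrong one: you take $L$ through $a$ and the blocking reflex vertex $v^\star$, but $a$ is only a vertex of $P_2$ (typically a subdivision point), not a vertex of $P$, so $L_{av^\star}$ is \emph{not} one of the lines $L_{ij}$ (lines through two vertices of $P$) used to define $\Essn(P)$, and no contradiction with the structure of $P_1$ follows. (There is also a geometric slip: if $v^\star$ is strictly inside the triangle $s_a\,a\,p$, the line through the corner $a$ and $v^\star$ exits through the opposite side $\overline{s_a p}$, not through the relative interior of $\overline{ap}$.) The line that does the work must go through two vertices of $P$, namely the guard $s\in S\subseteq V(P)$ and the reflex vertex. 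Second, even after this repair, your endpoint-based strategy cannot close the case where the blocker lies on the segment from the guard to the seen vertex: a reflex vertex on $\overline{s_a a}$ is perfectly consistent with $s_a$ seeing $a$ but not $p$ (and $a$ may well lie in $V(P_1)$, so no contradiction arises), the blockers for $s_a$ and $s_b$ need not be the same or interact in any way, and your final claim that ``any reflex vertex obstructing visibility of $p$ from $s_a$ would already obstruct visibility of $a$ from $s_a$'' is simply false.

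The paper's proof avoids all of this by choosing a different witness: let $\{u,w\}$ be the edge of $P_1$ containing $p$ in its interior, and let $v\in V(P_2)\setminus V(P_1)$ be the midpoint vertex added on that edge; take the guard $s\in S$ that sees $v$ and show $s$ sees $p$. If a blocking reflex vertex $\widehat{v}$ lies strictly inside the triangle $p\,v\,s$, then the line $L_{s\widehat{v}}$ (through two vertices of $P$) crosses $\overline{pv}$ strictly inside the $P_1$-edge $\{u,w\}$, contradicting that $\{u,w\}$ is an edge of $P_1$ (its interior contains no point of $\Essn(P)$). If instead $\widehat{v}$ lies on $\overline{sv}$, then $v$ lies on $L_{s\widehat{v}}$, forcing $v\in\Essn(P)=V(P_1)$, contradicting $v\in V(P_2)\setminus V(P_1)$. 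This is the role of the extra subdivision from $P_1$ to $P_2$: it supplies, on the very $P_1$-edge containing $p$, a guarded vertex that is provably non-essential, which is what kills the ``blocker on the sight segment'' case that your endpoints $a,b$ cannot handle.
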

\begin{proof}
Consider a point $p$ in the boundary of $P$ which is not a vertex of $P_2$. Let $\{u,w\}$ be the edge in $P_1$ that contains $p$ strictly in its interior. By the construction of $P_2$, there is a vertex $v \in V(P_2) \setminus V(P_1)$ contained strictly inside the line segment $\overline{uw}$. Consider $s\in S$ such that $s$ sees $v$. We will show that $s$ sees $p$. Towards a contradiction, suppose that $s$ does not see $p$. Consider the triangle $T$ formed by $p,v$ and $s$. As $s$ does not see $p$, we can conclude that $T$ is non-degenerate and $\overline{ps}$ is not completely contained in $P$. Thus, there is a reflex vertex $\what{v}$ which is either strictly contained inside $T$, or contained in the line segment $\overline{sv}$. If $\what{v}$ is strictly contained in the interior of $T$, then we can contradict that $\{u,w\}$ is the edge in $P_1$ containing $p$. Otherwise, if $\what{v}$ is contained in the line segment $\overline{sv}$, and we can obtain a contradiction to the fact that $v \in V(P_2)\setminus V(P_1)$. Thus, we obtain that $s$ sees $p$. This concludes the proof.
\end{proof}

Now we explain how we can obtain an \FPT\ algorithm for {\sc Vertex-Boundary Art Gallery} using the techniques that we used in Section~\ref{sec:art} and Lemma~\ref{lem:enough-to-guard-vp2}. Let $(P,k)$ be an instance of {\sc Vertex-Boundary Art Gallery}, and define $P_1$ and $P_2$, as was described earlier. Again we define a structured form of the problem called {\sc Structured Boundary-Vertex Art Gallery}, which takes an additional set of vertices from which the guards can be selected. We give Turing reduction from {\sc Vertex-Boundary Art Gallery} to {\sc Structured Boundary-Vertex Art Gallery}, where apart from the other inputs, the input polygon is $P_2$ and the set from which we are allowed to select guards is $V(P)$. We can obtain the correctness of the above Turing reduction using Lemma~\ref{lem:enough-to-guard-vp2} and arguments similar to the one used for the proof of Lemma~\ref{lem:correct-Turing-reduction-vvag}. The next step is to reduce the structured instance to an instance of {\sc Monotone CSP}. We follow similar procedure as given in Section~\ref{sec:reduction}, but this time we restrict the domains for the functions to vertices appearing in $V(P)$. Finally, we resolve the instance by solving the instances of {\sc Monotone CSP}, using Theorem~\ref{thm:csp}. From the above discussions we can obtain the following theorem.

\begin{theorem}\label{thm:vertex-boundary-art-gallery}
{\sc Vertex-Boundary Art Gallery} is \FPT\ parameterized by $r$, the number of reflex vertices. In particular, it admits an algorithm with running time $r^{\OO(r^2)}n^{\OO(1)}$.
\end{theorem}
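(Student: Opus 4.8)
The plan is to mirror, step by step, the structure of the proof for {\sc Vertex-Vertex Art Gallery} in Section~\ref{sec:art}, using Lemma~\ref{lem:enough-to-guard-vp2} to reduce the continuous guarding requirement (the whole boundary of $P$) to a discrete one (the vertices of $P_2$). First I would fix the discretized polygons: $P_1$ is obtained from $P$ by subdividing each edge at all points of $\Essn(P)$ lying on it, and $P_2$ is obtained from $P_1$ by subdividing each edge once more. Crucially, $P$, $P_1$, and $P_2$ are the \emph{same} geometric region, so visibility between any two points is unchanged; moreover $\reflex(P_2)=\reflex(P_1)=\reflex(P)$ since subdividing an edge introduces only vertices of interior angle exactly $180^\circ$, which we treat as convex. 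Hence the parameter $r$ is preserved, and $|V(P_2)|=n^{\OO(1)}$, so working with $P_2$ instead of $P$ costs only a polynomial factor.

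Next I would observe that the annotated generalization of {\sc Vertex-Vertex Art Gallery}, in which $G$ and $C$ are arbitrary subsets of the vertex set of a polygon whose vertices may have interior angle $180^\circ$, is solved by exactly the same machinery as in Sections~\ref{sec:structureAG}--\ref{sec:algoCSP}: the structural claims (Lemmas~\ref{lem:seeAllBetFin} and~\ref{lem:mainStructF}) never used that non-reflex vertices have angle strictly less than $180^\circ$, so convex regions of $P_2$ behave identically; the Turing reduction to {\sc Structured Art Gallery} goes through once we restrict $\ig$ to assign guards only to vertices in $G$ (here $G=V(P)\subseteq V(P_2)$) and require the announced guards to see all of $C$ (here $C=V(P_2)$); and the Karp reduction to {\sc Monotone 2-CSP} is unchanged except that the {\sc Association} constraints force every variable to be assigned a vertex of $V(P)$, and the guarding conditions are imposed only for the target vertices in $V(P_2)$. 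The running time bound $r^{\OO(r^2)}n^{\OO(1)}$ follows verbatim from Observations~\ref{obs:redToStructTime} and~\ref{obs:reduction} together with Theorem~\ref{thm:csp}.

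The remaining work is purely the correctness of the overall reduction, and this is where Lemma~\ref{lem:enough-to-guard-vp2} does the real job: a set $S\subseteq V(P)$ of size at most $k$ is a solution to the instance $(P,k)$ of {\sc Vertex-Boundary Art Gallery} if and only if $S$ sees every vertex of $V(P_2)$. The ``only if'' direction is trivial since $V(P_2)$ lies on the boundary of $P$; the ``if'' direction is precisely Lemma~\ref{lem:enough-to-guard-vp2}. Thus $(P,k)$ is a yes-instance of {\sc Vertex-Boundary Art Gallery} iff the annotated {\sc Vertex-Vertex Art Gallery} instance $(P_2,k,G=V(P),C=V(P_2))$ is a yes-instance, and the latter is decided within the claimed time by the algorithm of Section~\ref{sec:art} as adapted above.

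I do not anticipate a genuine obstacle: the one point that needs care is checking that none of the structural lemmas or the two-stage reduction silently assumed strictly-convex vertices or that $G=C=V(P)$; but the definitions in Section~\ref{sec:prelims} already declare $180^\circ$ vertices to be convex, and every argument in Sections~\ref{sec:structureAG}--\ref{sec:reduction} quantifies over convex regions and reflex vertices in a way that is agnostic to whether $G$ and $C$ are the full vertex set. So the ``hard part'' is really just the bookkeeping of the annotated variant, which is routine, plus invoking Lemma~\ref{lem:enough-to-guard-vp2} at the interface between the continuous problem and the discrete one.
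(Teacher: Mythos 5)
Your proposal is correct and follows essentially the same route as the paper: the same discretization into $P_1$ and $P_2$, the same use of Lemma~\ref{lem:enough-to-guard-vp2} to replace the continuous requirement of guarding the whole boundary by guarding the vertex set $V(P_2)$, and the same adaptation of the annotated {\sc Vertex-Vertex} machinery (structured variant, then {\sc Monotone 2-CSP}) with guards restricted to $V(P)\subseteq V(P_2)$, giving the $r^{\OO(r^2)}n^{\OO(1)}$ bound. Your treatment of how the reduction restricts guard variables to $V(P)$ is at the same level of detail as the paper's own sketch, so there is nothing substantive to add.
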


\bibliography{main-AGP}

\begin{thebibliography}{10}

\bibitem{DBLP:conf/compgeom/AbrahamsenAM17}
Mikkel Abrahamsen, Anna Adamaszek, and Tillmann Miltzow.
\newblock Irrational guards are sometimes needed.
\newblock In {\em Proceedings of the 33rd International Symposium on
  Computational Geometry (SoCG)}, pages 3:1--3:15, 2017.

\bibitem{DBLP:conf/stoc/AbrahamsenAM18}
Mikkel Abrahamsen, Anna Adamaszek, and Tillmann Miltzow.
\newblock The art gallery problem is $\exists\mathbb{R}$-complete.
\newblock In {\em Proceedings of the 50th Annual {ACM} {SIGACT} Symposium on
  Theory of Computing {(STOC)}}, pages 65--73, 2018.

\bibitem{phdthesisAgrawal}
Alok Aggarwal.
\newblock {\em The art gallery theorem: its variations, applications and
  algorithmic aspects}.
\newblock PhD thesis, The Johns Hopkins University, Baltimore, Maryland, 1986.

\bibitem{DBLP:journals/ipl/AspvallPT79}
Bengt Aspvall, Michael~F. Plass, and Robert~Endre Tarjan.
\newblock A linear-time algorithm for testing the truth of certain quantified
  boolean formulas.
\newblock {\em Information Processing Letters}, 8(3):121--123, 1979.

\bibitem{DBLP:journals/pr/AvisT81}
David Avis and Godfried~T. Toussaint.
\newblock An efficient algorithm for decomposing a polygon into star-shaped
  polygons.
\newblock {\em Pattern Recognition}, 13(6):395--398, 1981.

\bibitem{DBLP:journals/jacm/BasuPR96}
Saugata Basu, Richard Pollack, and Marie{-}Fran{\c{c}}oise Roy.
\newblock On the combinatorial and algebraic complexity of quantifier
  elimination.
\newblock {\em Journal of the {ACM}}, 43(6):1002--1045, 1996.

\bibitem{DBLP:journals/corr/abs-1712-05492}
Pritam Bhattacharya, Subir~Kumar Ghosh, and Sudebkumar~Prasant Pal.
\newblock Constant approximation algorithms for guarding simple polygons using
  vertex guards.
\newblock {\em CoRR/arXiv}, abs/1712.05492, 2017.

\bibitem{Bhattacharya2017ApproximabilityOG}
Pritam Bhattacharya, Subir~Kumar Ghosh, and Bodhayan Roy.
\newblock Approximability of guarding weak visibility polygons.
\newblock {\em Discrete Applied Mathematics}, 228:109--129, 2017.

\bibitem{ArtGalW1Hard}
{\'{E}}douard Bonnet and Tillmann Miltzow.
\newblock Parameterized hardness of art gallery problems.
\newblock In {\em Proceedings of the 24th Annual European Symposium on
  Algorithms {(ESA)}}, pages 19:1--19:17, 2016.

\bibitem{BonnetM17}
{\'{E}}douard Bonnet and Tillmann Miltzow.
\newblock An approximation algorithm for the art gallery problem.
\newblock In {\em Proceedings of the 33rd International Symposium on
  Computational Geometry (SoCG)}, pages 20:1--20:15, 2017.

\bibitem{DBLP:conf/compgeom/BorrmannRSFFKNST13}
Dorit Borrmann, Pedro~J. de~Rezende, Cid~C. de~Souza, S{\'{a}}ndor~P. Fekete,
  Stephan Friedrichs, Alexander Kr{\"{o}}ller, Andreas N{\"{u}}chter,
  Christiane Schmidt, and Davi~C. Tozoni.
\newblock Point guards and point clouds: solving general art gallery problems.
\newblock In {\em Proceedings of the 29th Symposuim on Computational Geometry
  (SoCG)}, pages 347--348, 2013.

\bibitem{BronnimannG95}
Herv{\'{e}} Br{\"{o}}nnimann and Michael~T. Goodrich.
\newblock Almost optimal set covers in finite vc-dimension.
\newblock {\em Discrete {\&} Computational Geometry}, 14(4):463--479, 1995.

\bibitem{Chvatal75}
Vasek Chv{\'{a}}tal.
\newblock A combinatorial theorem in plane geometry.
\newblock {\em Journal of Combinatorial Theory, Series {B}}, 18(1):39--741,
  1975.

\bibitem{Clarkson93}
Kenneth~L. Clarkson.
\newblock Algorithms for polytope covering and approximation.
\newblock In {\em Proceedings of the third workshop on Algorithms and Data
  Structures {(WADS)}}, pages 246--252, 1993.

\bibitem{DBLP:journals/itor/CoutoRS11}
Marcelo~C. Couto, Pedro~J. de~Rezende, and Cid~C. de~Souza.
\newblock An exact algorithm for minimizing vertex guards on art galleries.
\newblock {\em International Transactions in Operational Research},
  18(4):425--448, 2011.

\bibitem{CyganFKLMPPS15}
Marek Cygan, Fedor~V. Fomin, Lukasz Kowalik, Daniel Lokshtanov, D{\'{a}}niel
  Marx, Marcin Pilipczuk, Michal Pilipczuk, and Saket Saurabh.
\newblock {\em Parameterized Algorithms}.
\newblock Springer, 2015.

\bibitem{RezendeSFHKT16}
Pedro~J. de~Rezende, Cid~C. de~Souza, Stephan Friedrichs, Michael Hemmer,
  Alexander Kr{\"{o}}ller, and Davi~C. Tozoni.
\newblock Engineering art galleries.
\newblock In {\em Algorithm Engineering - Selected Results and Surveys}, pages
  379--417. Springer, 2016.

\bibitem{DeshpandeKDS07}
Ajay Deshpande, Taejung Kim, Erik~D. Demaine, and Sanjay~E. Sarma.
\newblock A pseudopolynomial time {$O (\log n)$}-approximation algorithm for
  art gallery problems.
\newblock In {\em Proceedings of the 10th International Workshop on Algorithms
  and Data Structures {(WADS)}}, pages 163--174, 2007.

\bibitem{DiestelBook}
R~Diestel.
\newblock {\em Graph Theory, 4th Edition}.
\newblock Springer, 2012.

\bibitem{DowneyF13}
Rodney~G. Downey and Michael~R. Fellows.
\newblock {\em Fundamentals of Parameterized Complexity}.
\newblock Texts in Computer Science. Springer, 2013.

\bibitem{EfratH06}
Alon Efrat and Sariel Har{-}Peled.
\newblock Guarding galleries and terrains.
\newblock {\em Information Processing Letters}, 100(6):238--245, 2006.

\bibitem{DBLP:conf/cccg/EidenbenzSW98}
Stephan Eidenbenz, Christoph Stamm, and Peter Widmayer.
\newblock Inapproximability of some art gallery problems.
\newblock In {\em Proceedings of the 10th Canadian Conference on Computational
  Geometry (CCCG)}, 1998.

\bibitem{DBLP:journals/algorithmica/EidenbenzSW01}
Stephan Eidenbenz, Christoph Stamm, and Peter Widmayer.
\newblock Inapproximability results for guarding polygons and terrains.
\newblock {\em Algorithmica}, 31(1):79--113, 2001.

\bibitem{Fisk78a}
Steve Fisk.
\newblock A short proof of {C}hv{\'{a}}tal's watchman theorem.
\newblock {\em Journal of Combinatorial Theory, Series {B}}, 24(3):374, 1978.

\bibitem{gosh87}
Subir~Kumar Ghosh.
\newblock Approximation algorithms for art gallery problems.
\newblock In {\em Canadian Information Processing Society Congress}, pages
  429--434, 1987.

\bibitem{Ghosh10}
Subir~Kumar Ghosh.
\newblock Approximation algorithms for art gallery problems in polygons.
\newblock {\em Discrete Applied Mathematics}, 158(6):718--722, 2010.

\bibitem{DBLP:journals/csur/GhoshG13}
Subir~Kumar Ghosh and Partha~P. Goswami.
\newblock Unsolved problems in visibility graphs of points, segments, and
  polygons.
\newblock {\em {ACM} Computing Surveys}, 46(2):22:1--22:29, 2013.

\bibitem{TerrainQ}
Panos Giannopoulos.
\newblock Open problems: Guarding problems.
\newblock {\em Lorentz Workshop on Fixed-Parameter Computational Geometry,
  Leiden, the Netherlands}, page~12, 2016.

\bibitem{GilbersK14}
Alexander Gilbers and Rolf Klein.
\newblock A new upper bound for the vc-dimension of visibility regions.
\newblock {\em Computational Geometry}, 47(1):61--74, 2014.

\bibitem{cliquewidth}
Petr~A. Golovach, Daniel Lokshtanov, Saket Saurabh, and Meirav Zehavi.
\newblock Cliquewidth {III}: The odd case of graph coloring parameterized by
  cliquewidth.
\newblock In {\em Proceedings of the 28th Annual ACM-SIAM Symposium on Discrete
  Algorithms (SODA)}, pages 262--273, 2018.

\bibitem{kalai1997guarding}
Gil Kalai and Ji{\v{r}}{\'\i} Matou{\v{s}}ek.
\newblock Guarding galleries where every point sees a large area.
\newblock {\em Israel Journal of Mathematics}, 101(1):125--139, 1997.

\bibitem{DBLP:journals/corr/abs-1803-02160}
Matthew~J. Katz.
\newblock A {PTAS} for vertex guarding weakly-visible polygons - an extended
  abstract.
\newblock {\em CoRR}, abs/1803.02160, 2018.

\bibitem{TerrainApproxOrtho}
Matthew~J Katz and Gabriel~S Roisman.
\newblock On guarding the vertices of rectilinear domains.
\newblock {\em Computational Geometry}, 39(3):219--228, 2008.

\bibitem{KING2013219}
James King.
\newblock Fast vertex guarding for polygons with and without holes.
\newblock {\em Computational Geometry}, 46(3):219 -- 231, 2013.

\bibitem{DBLP:journals/dcg/KingK11}
James King and David~G. Kirkpatrick.
\newblock Improved approximation for guarding simple galleries from the
  perimeter.
\newblock {\em Discrete {\&} Computational Geometry}, 46(2):252--269, 2011.

\bibitem{Kirkpatrick15}
David~G. Kirkpatrick.
\newblock An {$O(\log \log {\sf OPT})$}-approximation algorithm for
  multi-guarding galleries.
\newblock {\em Discrete {\&} Computational Geometry}, 53(2):327--343, 2015.

\bibitem{DBLP:journals/pr/KoosheshM92}
Ali~A. Kooshesh and Bernard M.~E. Moret.
\newblock Three-coloring the vertices of a triangulated simple polygon.
\newblock {\em Pattern Recognition}, 25(4):443, 1992.

\bibitem{KrohnN13}
Erik Krohn and Bengt~J. Nilsson.
\newblock Approximate guarding of monotone and rectilinear polygons.
\newblock {\em Algorithmica}, 66(3):564--594, 2013.

\bibitem{DBLP:journals/tit/LeeL86}
D.~T. Lee and Arthur~K. Lin.
\newblock Computational complexity of art gallery problems.
\newblock {\em {IEEE} Transactions on Information Theory}, 32(2):276--282,
  1986.

\bibitem{Niedermeier04}
Rolf Niedermeier.
\newblock Ubiquitous parameterization - invitation to fixed-parameter
  algorithms.
\newblock In {\em Proceedings of the 29th International Symposium on
  Mathematical Foundations of Computer Science {(MFCS)}}, pages 84--103, 2004.

\bibitem{Niedermeier10}
Rolf Niedermeier.
\newblock Reflections on multivariate algorithmics and problem
  parameterization.
\newblock In {\em Proceedings of the 27th International Symposium on
  Theoretical Aspects of Computer Science {(STACS)}}, pages 17--32, 2010.

\bibitem{ArtGalBook1}
Joseph O'rourke.
\newblock {\em Art gallery theorems and algorithms}, volume~57.
\newblock Oxford University Press Oxford, 1987.

\bibitem{DBLP:journals/tit/ORourkeS83}
Joseph O'Rourke and Kenneth~J. Supowit.
\newblock Some {NP}-hard polygon decomposition problems.
\newblock {\em {IEEE} Transactions on Information Theory}, 29(2):181--189,
  1983.

\bibitem{OrthoPolygonNPhard}
Dietmar Schuchardt and Hans-Dietrich Hecker.
\newblock Two {NP}-hard art-gallery problems for ortho-polygons.
\newblock {\em Mathematical Logic Quarterly}, 41(2):261--267, 1995.

\bibitem{shermer1992recent}
Thomas~C Shermer.
\newblock Recent results in art galleries (geometry).
\newblock {\em Proceedings of the IEEE}, 80(9):1384--1399, 1992.

\bibitem{urrutia2000art}
Jorge Urrutia.
\newblock Art gallery and illumination problems.
\newblock {\em Handbook of computational geometry}, 1(1):973--1027, 2000.

\bibitem{valtr1998guarding}
Pavel Valtr.
\newblock Guarding galleries where no point sees a small area.
\newblock {\em Israel Journal of Mathematics}, 104(1):1--16, 1998.

\end{thebibliography}

\end{document}